\theoremstyle{plain}
\newtheorem{theorem}{Theorem}[section]
\newtheorem{proposition}[theorem]{Proposition}
\theoremstyle{definition}
\theoremstyle{remark}
\newcommand{\cb}[1]{\ifmmode {\boldsymbol{#1}}\else ${\boldsymbol{#1}}$\fi}
\newcommand{\cp}[1]{\ifmmode {\mathcal{#1}}\else ${\mathcal{#1}}$\fi}
\newcommand{\orangeso}{\color[RGB]{203,75,22}}
\newcommand{\argmax}{\mathop{\mathrm{argmax}}\limits}
\title{Disturbance-based Discretization, 
Differentiable IDS Channel, \\
and an IDS-Correcting Code for DNA-based Storage}
\author{
    %Authors
    % All authors must be in the same font size and format.
    % Written by AAAI Press Staff\textsuperscript{\rm 1}\thanks{With help from the AAAI Publications Committee.}\\
    % AAAI Style Contributions by Pater Patel Schneider,
    % Sunil Issar,\\
    % J. Scott Penberthy,
    % George Ferguson,
    % Hans Guesgen,
    % Francisco Cruz\equalcontrib,
    % Marc Pujol-Gonzalez\equalcontrib
    Alan J.X. Guo\textsuperscript{\rm 1,\rm 2}\thanks{Corresponding author.},
    Mengyi Wei\textsuperscript{\rm 1},
    Yufan Dai\textsuperscript{\rm 1},
    Yali Wei\textsuperscript{\rm 1},
    Pengchen Zhang\textsuperscript{\rm 1}
}
\title{My Publication Title --- Single Author}
\author {
    Author Name
}
\title{My Publication Title --- Multiple Authors}
\author {
    % Authors
    First Author Name\textsuperscript{\rm 1},
    Second Author Name\textsuperscript{\rm 2},
    Third Author Name\textsuperscript{\rm 1}
}
\begin{document}

\maketitle

\begin{abstract}
    With recent advancements in next-generation data storage, especially in biological molecule-based storage, 
    insertion, deletion, and substitution (IDS) error-correcting codes have garnered increased attention. 
    However, a universal method for designing tailored IDS-correcting codes across varying channel settings 
    remains underexplored. 
    We present an autoencoder-based approach, THEA-code, 
    aimed at efficiently generating IDS-correcting codes for complex IDS channels. 
    In the work, a disturbance-based discretization
    is proposed to discretize the features of the autoencoder, 
    and a simulated differentiable IDS channel is developed as a differentiable alternative for IDS operations. 
    These innovations facilitate the successful convergence of the autoencoder, 
    producing channel-customized IDS-correcting codes that demonstrate 
    commendable performance across complex IDS channels, 
    particularly in realistic DNA-based storage channels. 
\end{abstract}

% Uncomment the following to link to your code, datasets, an extended version or similar.
% You must keep this block between (not within) the abstract and the main body of the paper.
% \begin{links}
%     \link{Code}{https://aaai.org/example/code}
%     \link{Datasets}{https://aaai.org/example/datasets}
%     \link{Extended version}{https://aaai.org/example/extended-version}
% \end{links}

\section{Introduction}\label{sec:introduction}
Biological molecule-based storage, a method that uses the synthesis and sequencing of biological molecules 
for information storage and retrieval, 
has attracted significant attention~\cite{church2012next,goldman2013towards,grass2015robust,erlich2017dna,organick2018random,dong2020dna,chen2021artificial,shaikh2022high,welzel2023dna}. 
Currently, most applications in this field are focused on DNA-based information storage~\cite{meiser2022synthetic}. 

Due to the involvement of biochemical procedures, the storage pipeline 
can be viewed as an insertions, deletions, or substitutions (IDS) channel~\cite{blawat2016forward} 
over $4$-ary sequences with the alphabet $\{\mathrm{A},\mathrm{T},\mathrm{G},\mathrm{C}\}$. 
Consequently, an IDS-correcting encoding/decoding method plays a key role in 
biological molecule-based storage. 

However, despite the existence of excellent combinatorial IDS-correcting 
codes~\cite{varshamov1965code,levenshtein1966binary,sloane2000single, mitzenmacher2009survey,cai2021correcting,garbys2023beyond,barlev2023size}, 
applying them in DNA-based storage remains challenging. 
The biochemical channel in DNA-based storage is more complex than those studied in previous works, 
with factors such as inhomogeneous error probabilities across error types, base indices, and even 
sequence patterns~\cite{hirao1992extraordinarily, press2020hedges, blawat2016forward, cai2021correcting, hamoum2021channel}. 
Additionally, most of the aforementioned combinatorial codes focus on correcting either a single error 
or a burst of errors, whereas multiple independent errors within the same DNA sequence are common in DNA-based storage. 
To address this, an outer code is usually employed to correct residual errors 
that are beyond the capability of the inner IDS code. 

Given the complexity of the IDS channel, we leverage the universality of deep learning methods
by employing an autoencoder~\cite{baldi2012autoencoders} as the foundation for an end-to-end IDS-correcting code. 
This approach enables researchers to train customized codes tailored to various IDS channels 
through a unified training procedure, 
rather than manually designing specific combinatorial codes for each IDS channel setting, 
many of which remain unexplored. 

To realize this approach, two novel techniques are developed, which we believe offer greater contributions 
to the communities than the code itself. 

Firstly, the discretization effect of applying disturbance 
in a non-generative model is investigated in this work. 
It is observed that introducing disturbance to the logistic feature 
forces the non-generative model to reduce the disturbance caused indeterminacy 
by producing more confident logits, thereby achieving discretization. 
This aligns with the discrete codewords of an error-correcting code (ECC) in this work, 
and provides an alternative approach for bridging the gap between continuous 
models and discrete applications. 
% Originally proposed as a differentiable approximation for categorical 
% sampling~\cite{jang2017categorical,maddison2017the,huijben2023review}, 
% Gumbel-Softmax has been widely used as a reparameterization trick, particularly in variational autoencoders. 
% % It is found that applying Gumbel-Softmax in a non-generative model induces discretized features, 
% which align with the discrete codewords of an error-correcting code (ECC). 
% This discovery may offer an alternative method for bridging the gap between continuous deep models and discrete applications. 

Secondly, a differentiable IDS channel using a Transformer-based model~\cite{vaswani2017attention} is developed. 
The non-differentiable nature of IDS operations presents a key challenge for deploying deep learning models 
that rely on gradient descent training. 
To tackle this, a model is trained in advance to mimic the IDS operations according to a given error profile. 
It can serve as a plug-in module for the IDS channel and 
is backpropagable within the network. 
This differentiable IDS channel has the potential to act as a general module 
for addressing IDS or DNA-related problems using deep learning methods. 
For instance, researchers could build generative models on this module to simulate the biochemical processes 
involved in manipulating biosequences. 

Overall, this work implements a heuristic end-to-end autoencoder as an IDS-correcting code, referred to as THEA-Code. 
The encoder maps the source DNA sequence into a longer codeword sequence. 
After introducing IDS errors to the codeword, 
a decoder network is employed to reconstruct the original source sequence from the codeword. 
During the training of this autoencoder, 
% the Gumbel-Softmax discretization constraint 
disturbance-based discretization 
is applied to 
the codeword sequence to produce one-hot-like vectors, and the differentiable IDS channel 
serves as a substitute for conventional IDS channel, enabling gradient backpropagation. 

% In this study, we aim to employ a heuristic end-to-end autoencoder as the implementation of 
% an IDS-correcting code, referred to as the THEA-Code. 
% The proposed code uses a neural network to encode a source DNA sequence into a longer codeword DNA sequence. 
% After introducing the IDS errors to the codeword, 
% a separate network is employed to decode the codeword into the original source DNA sequence. 
% The immediate challenges associated with implementing such an idea include: 
% \begin{itemize}
%     \item Applying IDS errors on a sequence is not differentiable, posing a barrier to gradient propagation 
%     and hindering the joint training of the autoencoder. 
%     \item The feature representations generated by an autoencoder are inherently soft, employing float values,  
%     in contrast to the typically hard representations of codewords in coding algorithms, using integer values. 
%    %  \item The complexity of training the model is compounded by the integration of the encoder, channel, and decoder.  
%    %  Furthermore, the different inherent logics of the encoder and decoder aggravate this challenge.
% \end{itemize}

To the best of our knowledge, this work presents the first end-to-end autoencoder solution 
for an IDS-correcting code. 
It introduces the disturbance-based discretization, 
and proposes the first differentiable IDS channel. 
It is also the first universal method for designing tailored 
IDS-correcting codes across varying channel settings. 
Experiments across multiple complex IDS channels, 
particularly in the realistic DNA-based storage channel, 
demonstrate the effectiveness of the proposed THEA-Code.

\section{Related Works}\label{sec:related-works}
Many established IDS-correcting codes are rooted in the Varshamov-Tenengolts (VT) code~\cite{varshamov1965code,levenshtein1966binary}, 
including~\cite{calabi1969family,tanaka1976synchronization,sloane2000single,cai2021correcting,garbys2023beyond}. 
These codes often rely on rigorous mathematical deduction and provide firm proofs for their coding schemes. 
However, the stringent hypotheses they use tend to restrict their practical applications. 
Heuristic IDS-correcting codes for DNA-based storage, such as those proposed 
in~\cite{pfister2021polar,yan2022segmented,maarouf2022concatenated,welzel2023dna},
% may delete maarouf 
usually incorporate synchronization markers~\cite{sellers1962bit,srinivasavaradhan2021trellis,haeupler2021synchronization}, 
watermarks~\cite{davey2001reliable}, or positional information~\cite{press2020hedges} 
within their encoded sequences. 
Recently, directly correcting errors in retrieved DNA reads without sequence reconstruction 
has been investigated, demonstrating promising performance~\cite{welter2024end}.

In recent years, deep learning methods have found increasing applications 
in coding theory~\cite{ibnkahla2000applications,simeone2018very,akrout2023domain,park2025crossmpt}. 
Several architectures have been employed as decoders or sub-modules of conventional codes on 
the additive white Gaussian noise (AWGN) channel.
In~\cite{cammerer2017scaling}, 
the authors applied neural networks to replace sub-blocks in the conventional iterative decoding algorithm 
for polar codes. 
Recurrent neural networks (RNN) were used for decoding convolutional and turbo codes~\cite{kim2018communication}. 
Both RNNs and Transformer-based models have served as belief propagation decoders 
for linear codes~\cite{nachmani2018deep,choukroun2022error,choukroun2023denoising,choukroun2024deep,choukroun2024a,choukroun2024learning}. 
Hypergraph networks were also utilized as decoders for block codes in~\cite{nachmani2019hyper}. 
Despite these advancements, end-to-end deep learning solutions remain relatively less explored. 
As mentioned in~\cite{jiang2019turbo}, 
direct applications of multi-layer perceptron (MLP) and convolutional neural network (CNN) 
are not comparable to conventional methods. 
To address this, the authors in~\cite{jiang2019turbo} used deep models to replace sub-modules of 
a turbo code skeleton, 
and trained an end-to-end encoder-decoder model. 
Similarly, in~\cite{makkuva2021ko}, neural networks were employed to replace the Plotkin mapping for the Reed-Muller code. 
Both of these works inherit frameworks from conventional codes and utilize neural networks as replacements for 
key modules. 
In~\cite{balevi2020autoencoder}, researchers proposed an autoencoder-based inner code with one-bit quantization for 
the AWGN channel. 
Confronting challenges arising from quantization, they utilized interleaved training on the encoder and decoder. 

\section{Disturbance-based Discretization}\label{sec:gumbel}
In this work, it is observed that introducing disturbance to 
the categorical distribution feature produced by 
a non-generative model causes 
the feature to resemble a one-hot vector. 

Intuitively, the non-generative model may attempt to reduce the indeterminacy 
introduced by the disturbance 
by generating more confident logits. 
When a logit $x$ is perturbed by a noise term $\epsilon$ before producing the categorical distribution, 
a fully converged model, aiming to generate outputs with high certainty, 
have to produce logits $x$ with significantly larger magnitudes
to diminishing the relative proportion of the disturbance $\epsilon$. 
From this perspective, the logit $x$ becomes more confident, producing probabilities that are closer to one-hot vectors 
and exhibit lower entropy. 
This effect is confirmed by monitoring the entropy of the categorical distribution 
in the experiments presented in 
% Appendix~C.1.
\cref{subsec:expgs}.

% The Gumbel-Softmax method was introduced as a differentiable approximation for sampling from categorical distributions. 
Let $\bm{x}$ be the logits that produce the probabilities $\bm{\pi}=\{\pi_1,\pi_2,\ldots,\pi_k\}$ 
via the softmax function, 
\begin{equation}
    \pi_i = \frac{\exp{x_i}}{\sum_{j=1}^{k}\exp{x_j}}, \quad i=1,2,\ldots,k. 
\end{equation}
In this work, the non-generative disturbance is introduced to $\bm{\pi}$ by sampling 
from the Gumbel distribution~\cite{gumbel1935valeurs}. 
It follows the same formula as the Gumbel-Softmax, 
which has been widely used in generative models for generating samples~\cite{jang2017categorical,maddison2017the}. 
Specifically, 
the non-generative disturbance is applied to $\bm{x}$ 
using the following formula: 
\begin{equation}\label{eqn:gsequation}
\mathrm{GS}(\bm{x})_i = \frac{\exp{((x_i+g_i)/\tau)}}{\sum_{j=1}^{k}\exp{((x_j+g_j)/\tau)}}, \quad i=1,2,\ldots,k,
\end{equation}
where $g_1,g_2,\ldots,g_k$ are i.i.d. samples drawn from the Gumbel distribution $G(0,1)$ and 
$\tau$ is the temperature that controls the entropy. 

Applying $\mathrm{GS}(\bm{x})$ in a non-generative model  
is found to induce the model to produce more confident logits $\bm{x}$ and, consequently, 
probabilities $\bm{\pi}$ that resemble one-hot vectors,  
as stated in Proposition \ref{thm:gs}. 
\begin{proposition}\label{thm:gs}
    By introducing disturbance to a non-generative autoencoder's feature logits $\bm{x}$ 
    via $\mathrm{GS}(\bm{x})$, 
    the autoencoder, upon non-trivial convergence, produces confident 
    % feature 
    logits $\bm{x}$, 
    resulting in one-hot-like probabilities $\bm{\pi}$. 
\end{proposition}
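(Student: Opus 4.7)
The plan is to treat $\mathrm{GS}(\bm{x})$ as a random variable on the simplex whose law is parameterized by the logits $\bm{x}$ and the temperature $\tau$, and to argue that non-trivial convergence of the autoencoder is incompatible with non-confident logits. First I would recall the Gumbel-max representation: as $\tau \to 0$, $\mathrm{GS}(\bm{x})$ concentrates on the one-hot vector at $\arg\max_i(x_i + g_i)$, which is itself a categorical draw from $\bm{\pi}$. At a fixed positive $\tau$, the output is a smoothed version of this hard draw, so the spread of $\mathrm{GS}(\bm{x})$ over the Gumbel randomness is controlled by how peaked $\bm{\pi}$ is relative to $\tau$; if the rescaled gap $(x_{(1)} - x_{(2)})/\tau$ between the top two logits is $O(1)$, then with probability bounded away from zero the perturbed argmax flips and $\mathrm{GS}(\bm{x})$ lands near a different vertex.

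Next I would make the reconstruction argument. Since the autoencoder is non-generative, the decoder is a deterministic map from $\mathrm{GS}(\bm{x})$ to a target source symbol, and non-trivial convergence requires that this map produces essentially the same output across the Gumbel randomness in $\mathrm{GS}(\bm{x})$. If the decoder is assumed to be reasonably Lipschitz and the source alphabet is discrete, the only way for the decoder to commit to one symbol despite a noisy input is for the distribution of $\mathrm{GS}(\bm{x})$ to concentrate on a single point of the simplex. Combining this with the Gumbel-max analysis above forces the gap $(x_{(1)} - x_{(j)})/\tau \to \infty$ for every $j \neq (1)$, hence $\bm{\pi}$ itself approaches a one-hot vector and the logit magnitudes diverge, yielding the confident-logit conclusion of Proposition~\ref{thm:gs}.

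Finally, I would close the loop by observing that the entropy of $\bm{\pi}$ must decay to zero under this mechanism, which is precisely the quantity monitored empirically in \cref{subsec:expgs}; this turns the deduction into a falsifiable prediction rather than pure intuition.

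The main obstacle is that the proposition is essentially heuristic: the term ``non-trivial convergence'' is not precisely quantified, and real neural decoders are not perfect classifiers, so the step ``the decoder must commit to a single symbol on almost every draw'' is informal. The rigorous core, namely the Gumbel-max trick and the monotone relation between logit gap and output concentration, is standard; the delicate part is ruling out that the decoder absorbs the noise by some averaging or smoothing mechanism that tolerates a non-degenerate distribution on the simplex. I would handle this either by imposing a mild Lipschitz and margin assumption on the trained decoder, or, following the paper's style, by supplementing the argument with the empirical entropy trajectory observed during training.
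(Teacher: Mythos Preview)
Your argument is sound at the same heuristic level as the paper's, but it proceeds along a genuinely different line. You reason \emph{distributionally}: if the rescaled logit gap is bounded, the Gumbel-max argmax flips with positive probability, and a (Lipschitz) deterministic decoder cannot then produce a consistent reconstruction, contradicting non-trivial convergence. The paper instead reasons \emph{via the gradient at a stationary point}: it computes
\[
\frac{\partial \mathcal{L}}{\partial x_1} \;=\; \frac{1}{\tau}\, y_1 \sum_{j\neq 1} y_j\Bigl(\frac{\partial f}{\partial y_1}-\frac{\partial f}{\partial y_j}\Bigr)
\]
directly from the chain rule, and argues that vanishing of this expression under the Gumbel randomness forces either some $y_i$ to be near $0/1$ or the downstream map $f$ to be constant along every simplex ray (via a mean-value-theorem step), i.e.\ trivial. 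It then closes with essentially the Logistic tail bound you sketch, converting one-hot-ness of $\bm y$ into confidence of the raw logits $\bm x$ and hence of $\bm\pi$. The paper's route avoids your Lipschitz/margin hypothesis on the decoder, trading it for the (also heuristic) assumption that the stationary-gradient condition holds simultaneously across the support of the Gumbel noise. Your route is more transparent about the mechanism---noise robustness forces concentration---and ties more cleanly to the entropy diagnostic in \cref{subsec:expgs}; the paper's is a first-order optimality argument that needs no regularity of the decoder. Both leave the same soft spot you correctly flag: ``non-trivial convergence'' is not made sharp, and neither argument is fully rigorous without additional assumptions.
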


\begin{proof}[Brief proof:]
    Consider the binary case with temperature $\tau=1$, 
    and let $\bm{x} = (x_1, x_2)$ be the logits from the upstream model, 
    with Gumbel noise added to compute $\bm{y} = \mathrm{GS}(\bm{x})$. 
    At convergence, the gradient of the loss $\mathcal{L} = f(\bm{y})$ 
    with respect to $\bm{x}$ approaches zero. 
    By computing $\partial \mathcal{L}/\partial x_1$, 
    we find that it depends on $y_1 y_2$ and the derivatives of $f$, 
    implying that either the output probabilities $y_i$ are near 0/1, or $f(\bm{y})$ is insensitive to its inputs. 
    The former leads to low-entropy, one-hot-like distributions in $\bm{y}$. 
    In the latter case, since $\bm{y}$ varies due to the Gumbel noise, an $f(\bm{y})$ that is insensitive to its inputs
    implies that the model has converged to a trivial solution, contradicting the hypothesis. 
    Further, the logits $\bm{x}$ are bounded by the probability that $\bm{y}$ deviates from a one-hot-like distributions, 
    indicating that the model produces confident logits to suppress the effect 
    of the Gumbel noise. 
\end{proof}

A full version of the proof is provided in 
% Appendix~A.
\cref{app:proof}. 
Based on this, a converged model that applies the disturbance in \cref{eqn:gsequation} 
to its feature logits $\bm{x}$ 
will be constrained to produce one-hot-like probability vectors
when \cref{eqn:gsequation} is replaced with the $\mathrm{softmax}$ during inference.

\section{Differentiable IDS Channel \\on $3$-Simplex $\Delta^3$}\label{sec:ids}
It is evident that the operations of insertion and deletion are not differentiable. 
Consequently, a conventional IDS channel, 
which modifies a sequence by directly applying IDS operations, 
hinders gradient propagation and cannot be seamlessly integrated into deep learning-based methods. 

Leveraging the logical capabilities inherent in Transformer-based models, 
a sequence-to-sequence model is employed to simulate the conventional IDS channel. 
Built on deep models, this simulated IDS channel is differentiable. 
In the following discussion, we use the notation $\mathrm{CIDS}(\cdot,\cdot)$ to represent the {C}onventional {IDS} channel, 
and $\mathrm{DIDS}(\cdot,\cdot;\theta)$ for the simulated {D}ifferentiable {IDS} channel. 
The simulated channel is trained independently before being integrated into the autoencoder, 
whose learned parameters remain fixed during the optimization of the autoencoder. 

As the model utilizes probability vectors rather than discrete letters, 
we need to promote conventional IDS operations onto 
the $3$-simplex $\Delta^3$, where $\Delta^3$ is defined as the collection $4$-dimentional probability vectors
\begin{equation}
\Delta^3 = \{\bm{\pi}|\, \pi_i\geq 0, {\sum_{i=1}^4 \pi_i = 1, i=1,2,3,4}\}.
\end{equation} 
For a sequence of probability vectors $\bm{C}=(\bm{\pi}_1,\bm{\pi}_2,\ldots,\bm{\pi}_k)$, 
where each $\bm{\pi}_i$ is an element from the simplex $\Delta^3$, 
the IDS operations are promoted as follows. 

Insertion at index $i$ involves adding a one-hot vector representing the inserted symbol 
from the alphabet $\{\mathrm{A},\mathrm{T},\mathrm{G},\mathrm{C}\}$ before index $i$. 
Deletion at index $i$ simply removes the vector $\bm{\pi}_i$ from $\bm{C}$. 
For substitution, the probability vector $\bm{\pi}_i$ is rolled by corresponding offsets 
for the three types of substitutions, which correspond to substitute $\#1,\#2,$ and $\#3$ in 
\cref{fig:profile} 
% Figure~12 
from 
% Appendix~F. 
\cref{app:dataset}. 
For example, applying a type-$\#1$ substitution at index $i$ rolls the original vector 
$\bm{\pi}_i = (\pi_{i1},\pi_{i2},\pi_{i3},\pi_{i4})$ into 
$(\pi_{i4},\pi_{i1}, \pi_{i2},\pi_{i3})$. 
It is straightforward to verify that the promoted IDS operations degenerate to standard IDS operations 
when the probability vectors are constrained to a one-hot representation. 

\begin{figure}[tb!]
    \vskip -0.in
        \centering
{\linespread{1}
        \centering
        \tikzstyle{format}=[circle,draw,thin,fill=white]
        \tikzstyle{format_gray}=[circle,draw,thin,fill=gray]
        \tikzstyle{format_rect}=[rectangle,draw,thin,fill=white,align=center]
        \tikzstyle{arrowstyle} = [->,thick]
        \tikzstyle{network} = [rectangle, minimum width = 3cm, minimum height = 1cm, text centered, draw = black,align=center,rounded corners,fill=green_so,fill opacity=0.5,text opacity=1]
        \tikzstyle{training_batch} = [trapezium, trapezium left angle = 30, trapezium right angle = 150, minimum width = 3cm, text centered, draw = black, fill = cyan_so, fill opacity=0.3,text opacity=1,align=center]		
        \tikzstyle{class_features} = [trapezium, trapezium left angle = 30, trapezium right angle = 150, minimum width = 3cm, text centered, draw = black, fill = cyan_so, fill opacity=0.3,text opacity=1,align=center]
        \tikzstyle{pixel} = [rectangle, draw = black, fill = orange_so, fill opacity=0.5,text opacity=0,align=center]	
        \tikzstyle{pixel_red} = [rectangle, draw = black, fill = red_so, fill opacity=1,text opacity=0,align=center]	
        \tikzstyle{feature} = [rectangle, draw = black, fill = orange_so, fill opacity=0.3,text opacity=0,align=center,rounded corners]	
        \tikzstyle{feature_sfp} = [rectangle, draw = black, fill = violet_so, fill opacity=0.3,text opacity=0,align=center,rounded corners]					
        \tikzstyle{arrow1} = [thick, ->, >= stealth]
        \tikzstyle{arrow1_thick} = [thick, ->, >= stealth, line width=1.2pt]
        \tikzstyle{arrow2} = [thick, dashed, ->, >= stealth]
        \tikzstyle{thick_line} = [thick, line width=1.5pt]
        \tikzstyle{channel} = [fill=white,fill opacity = 0.7, rounded corners=3pt]
        \tikzstyle{channel_shadow} = [fill = gray_so, fill opacity = 0.1, rounded corners]
        \tikzstyle{channel_selected} = [fill = orange_so, fill opacity = 0.5]

        \tikzstyle{dna} = [decoration={coil}, decorate, thick, decoration={aspect=0, segment length=0.5*0.87cm, post length=0.,pre length=0.}]
        \scalebox{0.87} 
        {
        \begin{tikzpicture}[auto,>=latex', thin, start chain=going below, every join/.style={norm}]
                \definecolor{gray_so}{RGB}{88,110,117}
                \definecolor{lightgray_so}{RGB}{207,221,221}
                \definecolor{yellow_so}{RGB}{181,137,0}
                \definecolor{cyan_so}{RGB}{42,161,152}
                \definecolor{orange_so}{RGB}{203,75,22}
                \definecolor{green_so}{RGB}{133,153,0}
                \definecolor{red_so}{RGB}{220,50,47}
                \definecolor{magenta_so}{RGB}{211,54,130}
                \definecolor{violet_so}{RGB}{108,113,196}
                \definecolor{yellow_ad}{RGB}{242,228,201}
                \definecolor{pink_ad}{RGB}{242,182,160}
                \definecolor{green_ad}{RGB}{146,195,185}
                \definecolor{dgreen_ad}{RGB}{104,166,148}
                \definecolor{purple_ad}{RGB}{115,72,88}
                \definecolor{att_blue}{RGB}{185,233,248}
                \definecolor{nature_orange}{RGB}{252,140,98}
                \definecolor{nature_green}{RGB}{102,195,170}
                \definecolor{nature_blue}{RGB}{142,160,204}
                \definecolor{nature_yellow}{RGB}{253,184,55}
                \useasboundingbox  (0,0) rectangle (7.5,3.6);
                % [fill=gray, fill opacity=0.3]

                \scope[transform canvas={scale=1}]
                
                \coordinate (zero) at (0,0);
        % \node at (zero) {$\cdot$};
                % \draw (1,0) -- (1,6.18);
                \coordinate (half) at (0,1.8);
                % \coordinate (upperhalf) at (0,4);
                % % \coordinate (half) at (0,3.09);
                % \coordinate (lowerhalf) at (0,1.2);

                % \coordinate (upperDnaCenter) at ($(upperhalf)+(0.8,0)$);
                % \coordinate (lowerDnaCenter) at ($(lowerhalf)+(0.8,0)$);
                \coordinate (halfDnaCenter) at ($(half)+(0.5,0)$);

                \node at ($(halfDnaCenter)+(0.5,1)$) {$\bm{C}$};
                \node at ($(halfDnaCenter)+(0.5,-1)$) {$\bm{p}$};

                \filldraw[channel_selected,fill=nature_green,fill opacity = 0.1] ($(halfDnaCenter)+(0,-0.25)+(-0.125,0)$) rectangle ($(halfDnaCenter)+(0.25,0)+(-0.125,0)$);
                \filldraw[channel_selected,fill=nature_green,fill opacity = 0.1] ($(halfDnaCenter)+(0,0)+(-0.125,0)$) rectangle ($(halfDnaCenter)+(0.25,0.25)+(-0.125,0)$);
                \filldraw[channel_selected,fill=nature_green,fill opacity = 0.05] ($(halfDnaCenter)+(0,0.25)+(-0.125,0)$) rectangle ($(halfDnaCenter)+(0.25,0.5)+(-0.125,0)$);
                \filldraw[channel_selected,fill=nature_green,fill opacity = 0.75] ($(halfDnaCenter)+(0,0.5)+(-0.125,0)$) rectangle ($(halfDnaCenter)+(0.25,0.75)+(-0.125,0)$);
                \filldraw[channel_selected,fill=gray,fill opacity = 0.] ($(halfDnaCenter)+(0,-0.75)+(-0.125,0)$) rectangle ($(halfDnaCenter)+(0.25,-0.5)+(-0.125,0)$);
                \filldraw[channel_selected,fill=gray,fill opacity = 0.] ($(halfDnaCenter)+(0,-0.75)+(0.-0.375,0)$) rectangle ($(halfDnaCenter)+(0.25,-0.5)+(0.375,0)$);

                \filldraw[channel_selected,fill=nature_orange,fill opacity = 0.8] ($(halfDnaCenter)+(0,-0.25)+(0.125,0)$) rectangle ($(halfDnaCenter)+(0.25,0)+(0.125,0)$);
                \filldraw[channel_selected,fill=nature_orange,fill opacity = 0.1] ($(halfDnaCenter)+(0,0)+(0.125,0)$) rectangle ($(halfDnaCenter)+(0.25,0.25)+(0.125,0)$);
                \filldraw[channel_selected,fill=nature_orange,fill opacity = 0.05] ($(halfDnaCenter)+(0,0.25)+(0.125,0)$) rectangle ($(halfDnaCenter)+(0.25,0.5)+(0.125,0)$);
                \filldraw[channel_selected,fill=nature_orange,fill opacity = 0.05] ($(halfDnaCenter)+(0,0.5)+(0.125,0)$) rectangle ($(halfDnaCenter)+(0.25,0.75)+(0.125,0)$);
                \filldraw[channel_selected,fill=gray,fill opacity = 0.] ($(halfDnaCenter)+(0,-0.75)+(0.125,0)$) rectangle ($(halfDnaCenter)+(0.25,-0.5)+(0.125,0)$);

                \filldraw[channel_selected,fill=nature_green,fill opacity = 0.01] ($(halfDnaCenter)+(0,-0.25)+(0.375,0)$) rectangle ($(halfDnaCenter)+(0.25,0)+(0.375,0)$);
                \filldraw[channel_selected,fill=nature_green,fill opacity = 0.05] ($(halfDnaCenter)+(0,0)+(0.375,0)$) rectangle ($(halfDnaCenter)+(0.25,0.25)+(0.375,0)$);
                \filldraw[channel_selected,fill=nature_green,fill opacity = 0.04] ($(halfDnaCenter)+(0,0.25)+(0.375,0)$) rectangle ($(halfDnaCenter)+(0.25,0.5)+(0.375,0)$);
                \filldraw[channel_selected,fill=nature_green,fill opacity = 0.9] ($(halfDnaCenter)+(0,0.5)+(0.375,0)$) rectangle ($(halfDnaCenter)+(0.25,0.75)+(0.375,0)$);
                \filldraw[channel_selected,fill=gray,fill opacity = 0.5] ($(halfDnaCenter)+(0,-0.75)+(0.375,0)$) rectangle ($(halfDnaCenter)+(0.25,-0.5)+(0.375,0)$);

                \filldraw[channel_selected,fill=nature_blue,fill opacity = 0.1] ($(halfDnaCenter)+(0,-0.25)+(0.625,0)$) rectangle ($(halfDnaCenter)+(0.25,0)+(0.625,0)$);
                \filldraw[channel_selected,fill=nature_blue,fill opacity = 0.1] ($(halfDnaCenter)+(0,0)+(0.625,0)$) rectangle ($(halfDnaCenter)+(0.25,0.25)+(0.625,0)$);
                \filldraw[channel_selected,fill=nature_blue,fill opacity = 0.6] ($(halfDnaCenter)+(0,0.25)+(0.625,0)$) rectangle ($(halfDnaCenter)+(0.25,0.5)+(0.625,0)$);
                \filldraw[channel_selected,fill=nature_blue,fill opacity = 0.2] ($(halfDnaCenter)+(0,0.5)+(0.625,0)$) rectangle ($(halfDnaCenter)+(0.25,0.75)+(0.625,0)$);
                \filldraw[channel_selected,fill=gray,fill opacity = 0.] ($(halfDnaCenter)+(0,-0.75)+(0.625,0)$) rectangle ($(halfDnaCenter)+(0.25,-0.5)+(0.625,0)$);

                \filldraw[channel_selected,fill=nature_yellow,fill opacity = 0.1] ($(halfDnaCenter)+(0,-0.25)+(0.875,0)$) rectangle ($(halfDnaCenter)+(0.25,0)+(0.875,0)$);
                \filldraw[channel_selected,fill=nature_yellow,fill opacity = 0.8] ($(halfDnaCenter)+(0,0)+(0.875,0)$) rectangle ($(halfDnaCenter)+(0.25,0.25)+(0.875,0)$);
                \filldraw[channel_selected,fill=nature_yellow,fill opacity = 0.0] ($(halfDnaCenter)+(0,0.25)+(0.875,0)$) rectangle ($(halfDnaCenter)+(0.25,0.5)+(0.875,0)$);
                \filldraw[channel_selected,fill=nature_yellow,fill opacity = 0.1] ($(halfDnaCenter)+(0,0.5)+(0.875,0)$) rectangle ($(halfDnaCenter)+(0.25,0.75)+(0.875,0)$);
                \filldraw[channel_selected,fill=gray,fill opacity = 0.] ($(halfDnaCenter)+(0,-0.75)+(0.875,0)$) rectangle ($(halfDnaCenter)+(0.25,-0.5)+(0.875,0)$);

                \coordinate (cCenterRight) at ($(halfDnaCenter)+(0,0.25)+(1.125,0)$);
                \coordinate (profileCenterRight) at ($(cCenterRight)+(0,-0.875)$);

                \draw[line width=1.2, rounded corners] ($(cCenterRight)+(0.1,0)+(0,0.5)$) -- ($(cCenterRight)+(0.2,0)+(0,0.5)$) -- ($(cCenterRight)+(0.2,0)+(0,-0.25)$) -- ($(cCenterRight)+(0.4,0)+(0,-0.25)$);
                \draw[line width=1.2, rounded corners] ($(profileCenterRight)+(0.1,0)+(0,-0.125)$) -- ($(profileCenterRight)+(0.2,0)+(0,-0.125)$) -- ($(cCenterRight)+(0.2,0)+(0,-0.25)$) -- ($(cCenterRight)+(0.4,0)+(0,-0.25)$);

                \coordinate (upperModelCenter) at ($(cCenterRight)+(0.1,0)+(0,0.5)+(2,0)$);
                \coordinate (lowerModelCenter) at ($(profileCenterRight)+(0.1,0)+(0,-0.125)+(2,0)$);
                \coordinate (modelCenter) at ($(upperModelCenter)!0.5!(lowerModelCenter)$);

                \draw[line width=1.2, rounded corners] ($(cCenterRight)+(0.4,0)+(0,-0.25)$) -- ($(modelCenter)+(-1-0.6,0)$);
                \draw[arrow1_thick, rounded corners] ($(modelCenter)+(-1-0.6,0)$) -- ($(modelCenter)+(-1-0.4,0)$) -- ($(upperModelCenter)+(-1-0.4,0)$) -- ($(upperModelCenter)+(-1-0.1,0)$);
                \draw[arrow1_thick, rounded corners, dashed] ($(modelCenter)+(-1-0.6,0)$) -- ($(modelCenter)+(-1-0.4,0)$) -- ($(lowerModelCenter)+(-1-0.4,0)$) -- ($(lowerModelCenter)+(-1-0.1,0)$);

                \filldraw[channel, fill=nature_green] ($(upperModelCenter)+(-1,-0.618)$) rectangle ($(upperModelCenter)+(1,0.618)$);
                \node at ($(upperModelCenter)+(0,0.25)$) {differentiable};
                \node at ($(upperModelCenter)+(0,-0.25)$) {$\mathrm{DIDS}(\cdot,\cdot;\theta)$};
                % \node at ($(upperModelCenter)+(0,-0.33)$) {\small{Network}};
                \filldraw[channel, fill=white] ($(lowerModelCenter)+(-1,-0.618)$) rectangle ($(lowerModelCenter)+(1,0.618)$);
                \node at ($(lowerModelCenter)+(0,0.25)$) {conventional};
                \node at ($(lowerModelCenter)+(0,-0.25)$) {$\mathrm{CIDS}(\cdot,\cdot)$};
                % \node at ($(lowerModelCenter)+(0,-0.33)$) {\small{Network}};
                % \draw[arrow1_thick] ($(upperDnaCenter)+(0.25,0)$) -- ($(upperModelCenter)+(-1.,0)$);
                % \draw[arrow1_thick] ($(lowerDnaCenter)+(0.25,0)$) -- ($(lowerModelCenter)+(-1.,0)$);

                \coordinate (weightsCenter) at ($(upperModelCenter)!0.5!(lowerModelCenter)$);
                % \filldraw[channel, fill=white, dashed, double] ($(weightsCenter)+(-1,-0.382)$) rectangle ($(weightsCenter)+(1,0.382)$);
                % \node at ($(weightsCenter)+(0,0.17)$) {\small{Shared}};
                % \node at ($(weightsCenter)+(0,-0.17)$) {\small{Weight}};
                % \draw[arrow1, rounded corners, double,line width=0.7pt, dashed] ($(weightsCenter)+(0,0.382+0.)$) -- ($(upperModelCenter)+(0,-0.618-0.)$);
                % \draw[arrow1, rounded corners, double,line width=0.7pt, dashed] ($(weightsCenter)+(0,-0.382-0.)$) -- ($(lowerModelCenter)+(0,0.618+0.)$);

                \coordinate (upperFeatureCenterLeft) at ($(upperModelCenter)+(2.1,0)$);
                \coordinate (lowerFeatureCenterLeft) at ($(lowerModelCenter)+(2.1,0)$);
                \coordinate (upperFeatureCenter) at ($(upperFeatureCenterLeft)+(0.125,-0.25)$);
                \coordinate (lowerFeatureCenter) at ($(lowerFeatureCenterLeft)+(0.125,-0.25)$);

                \node at ($(upperFeatureCenterLeft)+(0.625,0.75)$) {$\hat{\bm{C}}_{\mathrm{DIDS}}$};
                \node at ($(lowerFeatureCenterLeft)+(0.625,-0.75)$) {$\hat{\bm{C}}_{\mathrm{CIDS}}$};
                \draw[arrow1_thick, rounded corners] ($(upperModelCenter)+(1.1,0)$) -- ($(upperFeatureCenterLeft)+(-0.1,0)$);
                \draw[arrow1_thick, rounded corners, dashed] ($(lowerModelCenter)+(1.1,0)$) -- ($(lowerFeatureCenterLeft)+(-0.1,0)$);

                \filldraw[channel_selected,fill=nature_green,fill opacity = 0.1] ($(upperFeatureCenter)+(0,-0.25)+(-0.125,0)$) rectangle ($(upperFeatureCenter)+(0.25,0)+(-0.125,0)$);
                \filldraw[channel_selected,fill=nature_green,fill opacity = 0.1] ($(upperFeatureCenter)+(0,0)+(-0.125,0)$) rectangle ($(upperFeatureCenter)+(0.25,0.25)+(-0.125,0)$);
                \filldraw[channel_selected,fill=nature_green,fill opacity = 0.05] ($(upperFeatureCenter)+(0,0.25)+(-0.125,0)$) rectangle ($(upperFeatureCenter)+(0.25,0.5)+(-0.125,0)$);
                \filldraw[channel_selected,fill=nature_green,fill opacity = 0.75] ($(upperFeatureCenter)+(0,0.5)+(-0.125,0)$) rectangle ($(upperFeatureCenter)+(0.25,0.75)+(-0.125,0)$);

                \filldraw[channel_selected,fill=nature_orange,fill opacity = 0.8] ($(upperFeatureCenter)+(0,-0.25)+(0.125,0)$) rectangle ($(upperFeatureCenter)+(0.25,0)+(0.125,0)$);
                \filldraw[channel_selected,fill=nature_orange,fill opacity = 0.1] ($(upperFeatureCenter)+(0,0)+(0.125,0)$) rectangle ($(upperFeatureCenter)+(0.25,0.25)+(0.125,0)$);
                \filldraw[channel_selected,fill=nature_orange,fill opacity = 0.05] ($(upperFeatureCenter)+(0,0.25)+(0.125,0)$) rectangle ($(upperFeatureCenter)+(0.25,0.5)+(0.125,0)$);
                \filldraw[channel_selected,fill=nature_orange,fill opacity = 0.05] ($(upperFeatureCenter)+(0,0.5)+(0.125,0)$) rectangle ($(upperFeatureCenter)+(0.25,0.75)+(0.125,0)$);

                \filldraw[channel_selected,fill=nature_green,fill opacity = 0.1] ($(upperFeatureCenter)+(0,-0.25)+(0.375,0)$) rectangle ($(upperFeatureCenter)+(0.25,0)+(0.375,0)$);
                \filldraw[channel_selected,fill=nature_green,fill opacity = 0.5] ($(upperFeatureCenter)+(0,0)+(0.375,0)$) rectangle ($(upperFeatureCenter)+(0.25,0.25)+(0.375,0)$);
                \filldraw[channel_selected,fill=nature_green,fill opacity = 0.2] ($(upperFeatureCenter)+(0,0.25)+(0.375,0)$) rectangle ($(upperFeatureCenter)+(0.25,0.5)+(0.375,0)$);
                \filldraw[channel_selected,fill=nature_green,fill opacity = 0.1] ($(upperFeatureCenter)+(0,0.5)+(0.375,0)$) rectangle ($(upperFeatureCenter)+(0.25,0.75)+(0.375,0)$);

                \filldraw[channel_selected,fill=nature_blue,fill opacity = 0.1] ($(upperFeatureCenter)+(0,-0.25)+(0.625,0)$) rectangle ($(upperFeatureCenter)+(0.25,0)+(0.625,0)$);
                \filldraw[channel_selected,fill=nature_blue,fill opacity = 0.1] ($(upperFeatureCenter)+(0,0)+(0.625,0)$) rectangle ($(upperFeatureCenter)+(0.25,0.25)+(0.625,0)$);
                \filldraw[channel_selected,fill=nature_blue,fill opacity = 0.6] ($(upperFeatureCenter)+(0,0.25)+(0.625,0)$) rectangle ($(upperFeatureCenter)+(0.25,0.5)+(0.625,0)$);
                \filldraw[channel_selected,fill=nature_blue,fill opacity = 0.2] ($(upperFeatureCenter)+(0,0.5)+(0.625,0)$) rectangle ($(upperFeatureCenter)+(0.25,0.75)+(0.625,0)$);

                \filldraw[channel_selected,fill=nature_yellow,fill opacity = 0.1] ($(upperFeatureCenter)+(0,-0.25)+(0.875,0)$) rectangle ($(upperFeatureCenter)+(0.25,0)+(0.875,0)$);
                \filldraw[channel_selected,fill=nature_yellow,fill opacity = 0.8] ($(upperFeatureCenter)+(0,0)+(0.875,0)$) rectangle ($(upperFeatureCenter)+(0.25,0.25)+(0.875,0)$);
                \filldraw[channel_selected,fill=nature_yellow,fill opacity = 0.0] ($(upperFeatureCenter)+(0,0.25)+(0.875,0)$) rectangle ($(upperFeatureCenter)+(0.25,0.5)+(0.875,0)$);
                \filldraw[channel_selected,fill=nature_yellow,fill opacity = 0.1] ($(upperFeatureCenter)+(0,0.5)+(0.875,0)$) rectangle ($(upperFeatureCenter)+(0.25,0.75)+(0.875,0)$);

                \filldraw[channel_selected,fill=nature_green,fill opacity = 0.1] ($(lowerFeatureCenter)+(0,-0.25)+(-0.125,0)$) rectangle ($(lowerFeatureCenter)+(0.25,0)+(-0.125,0)$);
                \filldraw[channel_selected,fill=nature_green,fill opacity = 0.1] ($(lowerFeatureCenter)+(0,0)+(-0.125,0)$) rectangle ($(lowerFeatureCenter)+(0.25,0.25)+(-0.125,0)$);
                \filldraw[channel_selected,fill=nature_green,fill opacity = 0.05] ($(lowerFeatureCenter)+(0,0.25)+(-0.125,0)$) rectangle ($(lowerFeatureCenter)+(0.25,0.5)+(-0.125,0)$);
                \filldraw[channel_selected,fill=nature_green,fill opacity = 0.75] ($(lowerFeatureCenter)+(0,0.5)+(-0.125,0)$) rectangle ($(lowerFeatureCenter)+(0.25,0.75)+(-0.125,0)$);

                \filldraw[channel_selected,fill=nature_orange,fill opacity = 0.8] ($(lowerFeatureCenter)+(0,-0.25)+(0.125,0)$) rectangle ($(lowerFeatureCenter)+(0.25,0)+(0.125,0)$);
                \filldraw[channel_selected,fill=nature_orange,fill opacity = 0.1] ($(lowerFeatureCenter)+(0,0)+(0.125,0)$) rectangle ($(lowerFeatureCenter)+(0.25,0.25)+(0.125,0)$);
                \filldraw[channel_selected,fill=nature_orange,fill opacity = 0.05] ($(lowerFeatureCenter)+(0,0.25)+(0.125,0)$) rectangle ($(lowerFeatureCenter)+(0.25,0.5)+(0.125,0)$);
                \filldraw[channel_selected,fill=nature_orange,fill opacity = 0.05] ($(lowerFeatureCenter)+(0,0.5)+(0.125,0)$) rectangle ($(lowerFeatureCenter)+(0.25,0.75)+(0.125,0)$);

                \filldraw[channel_selected,fill=nature_green,fill opacity = 0.03] ($(lowerFeatureCenter)+(0,-0.25)+(0.375,0)$) rectangle ($(lowerFeatureCenter)+(0.25,0)+(0.375,0)$);
                \filldraw[channel_selected,fill=nature_green,fill opacity = 0.02] ($(lowerFeatureCenter)+(0,0)+(0.375,0)$) rectangle ($(lowerFeatureCenter)+(0.25,0.25)+(0.375,0)$);
                \filldraw[channel_selected,fill=nature_green,fill opacity = 0.05] ($(lowerFeatureCenter)+(0,0.25)+(0.375,0)$) rectangle ($(lowerFeatureCenter)+(0.25,0.5)+(0.375,0)$);
                \filldraw[channel_selected,fill=nature_green,fill opacity = 0.9] ($(lowerFeatureCenter)+(0,0.5)+(0.375,0)$) rectangle ($(lowerFeatureCenter)+(0.25,0.75)+(0.375,0)$);

                \filldraw[channel_selected,fill=nature_blue,fill opacity = 0.1] ($(lowerFeatureCenter)+(0,-0.25)+(0.625,0)$) rectangle ($(lowerFeatureCenter)+(0.25,0)+(0.625,0)$);
                \filldraw[channel_selected,fill=nature_blue,fill opacity = 0.1] ($(lowerFeatureCenter)+(0,0)+(0.625,0)$) rectangle ($(lowerFeatureCenter)+(0.25,0.25)+(0.625,0)$);
                \filldraw[channel_selected,fill=nature_blue,fill opacity = 0.6] ($(lowerFeatureCenter)+(0,0.25)+(0.625,0)$) rectangle ($(lowerFeatureCenter)+(0.25,0.5)+(0.625,0)$);
                \filldraw[channel_selected,fill=nature_blue,fill opacity = 0.2] ($(lowerFeatureCenter)+(0,0.5)+(0.625,0)$) rectangle ($(lowerFeatureCenter)+(0.25,0.75)+(0.625,0)$);

                \filldraw[channel_selected,fill=nature_yellow,fill opacity = 0.1] ($(lowerFeatureCenter)+(0,-0.25)+(0.875,0)$) rectangle ($(lowerFeatureCenter)+(0.25,0)+(0.875,0)$);
                \filldraw[channel_selected,fill=nature_yellow,fill opacity = 0.8] ($(lowerFeatureCenter)+(0,0)+(0.875,0)$) rectangle ($(lowerFeatureCenter)+(0.25,0.25)+(0.875,0)$);
                \filldraw[channel_selected,fill=nature_yellow,fill opacity = 0.0] ($(lowerFeatureCenter)+(0,0.25)+(0.875,0)$) rectangle ($(lowerFeatureCenter)+(0.25,0.5)+(0.875,0)$);
                \filldraw[channel_selected,fill=nature_yellow,fill opacity = 0.1] ($(lowerFeatureCenter)+(0,0.5)+(0.875,0)$) rectangle ($(lowerFeatureCenter)+(0.25,0.75)+(0.875,0)$);
                \endscope
        \end{tikzpicture}	
        }
}
\caption{\label{fig:diff-ids} The differentiable IDS channel. 
The $\hat{\bm{C}}_{\mathrm{DIDS}}$ and $\hat{\bm{C}}_{\mathrm{CIDS}}$ are generated by 
the differentiable and conventional IDS channels, respectively. 
Optimizing the difference between $\hat{\bm{C}}_{\mathrm{DIDS}}$ and $\hat{\bm{C}}_{\mathrm{CIDS}}$ 
trains the differentiable channel. 
}
\vskip -0.in
\end{figure}

As illustrated in \cref{fig:diff-ids}, 
both the conventional IDS channel $\mathrm{CIDS}$ and the simulated IDS channel $\mathrm{DIDS}$ 
take the sequence $\bm{C}$ of probability vectors and an error profile $\bm{p}$ as their inputs. 
The error profile consists of a sequence of letters that record the types of errors encountered while processing $\bm{C}$. 
Complicated IDS channels can be deduced by specifying the rules for generating error profiles. 
The probability sequence $\bm{C}$ is expected to be modified by the simulated IDS channel to 
$\hat{\bm{C}}_{\mathrm{DIDS}} = \mathrm{DIDS}(\bm{C},\bm{p};\theta)$
according to the error profile $\bm{p}$ in the upper stream of \cref{fig:diff-ids}. 
In the lower stream, the sequence $\bm{C}$ is modified as 
$\hat{\bm{C}}_{\mathrm{CIDS}} = \mathrm{CIDS}(\bm{C},\bm{p})$ 
with respect to 
the error profile $\bm{p}$ using the previously defined promoted IDS operations. 

To train the model $\mathrm{DIDS}(\cdot,\cdot;\theta)$, 
the Kullback–Leibler divergence~\cite{kullback1997information} of 
$\hat{\bm{C}}_{\mathrm{DIDS}}$ from $\hat{\bm{C}}_{\mathrm{CIDS}}$ 
can be utilized as the optimization target
\begin{equation}\label{eqn:ids-kld-loss}
\mathcal{L}_{\mathrm{KLD}}(\hat{\bm{C}}_{\mathrm{DIDS}},\hat{\bm{C}}_{\mathrm{CIDS}}) 
= \frac{1}{k}\sum_{i} \hat{\bm{\pi}}_{i\mathrm{CIDS}}^T \log \frac{\hat{\bm{\pi}}_{i\mathrm{CIDS}}}
{\hat{\bm{\pi}}_{i\mathrm{DIDS}}}. 
\end{equation} 
By optimizing \cref{eqn:ids-kld-loss} on randomly generated probability vector sequences $\bm{C}$ 
and error profiles $\bm{p}$, 
the parameters $\theta$ of the 
differentiable IDS channel are trained to $\hat{\theta}$. 
Following this, the model $\mathrm{DIDS}(\cdot,\cdot;\hat{\theta})$ simulates 
the conventional IDS channel $\mathrm{CIDS}(\cdot,\cdot)$. 
The significance of such an IDS channel lies in its differentiability. 
Once optimized independently, the parameters of the IDS channel are fixed for downstream applications. 
In the following text, we use $\mathrm{DIDS}(\cdot,\cdot)$ to refer to the trained IDS channel for simplicity. 

In practice, the differentiable IDS channel is implemented as a sequence-to-sequence model, 
employing one-layer Transformers for both its encoder and decoder.\footnote{
    Here, the encoder and decoder refer specifically to the modules of the sequence-to-sequence model, 
    not the modules of the autoencoder. 
    We trust that readers will be able to distinguish between them based on the context. 
}
The model takes a padded vector sequence and error profile, whose embeddings are concatenated 
along the feature dimension as its input. 
To generate the output, that represents the sequence with errors, 
learnable position embedding vectors are utilized as the queries 
(omitted from \cref{fig:diff-ids}). 

\section{THEA-Code}\label{sec:thea-code}

\subsection{Framework}\label{subsec:framework}
The flowchart of the proposed code is illustrated in \cref{fig:thea-code}. 
Based on the principles of DNA-based storage, which synthesizes DNA molecules of fixed length, 
the proposed model is designed to handle source sequences and codewords of constant lengths. 
Essentially, 
the proposed method encodes source sequences into codewords; 
the IDS channel introduces IDS errors to these codewords; 
and a decoder is employed to reconstruct the recovered sequences according to the corrupted codewords. 

Let $f_\mathrm{en}(\cdot;\phi)$ denote the encoder, 
where $\phi$ represents the encoder's parameters. 
The source sequence $\bm{s}$ is first encoded into the codeword $\bm{c}=f_\mathrm{en}(\bm{s};\phi)$ 
by the encoder,\footnote{For simplicity, 
we do not distinguish between notations for sequences represented as letters, 
one-hot vectors, or probability vectors in the following text.}
where the codeword $\bm{c}$ is obtained using \cref{eqn:gsequation} during the training phase and $\argmax$ during the testing phase. 
Next, a random error profile $\bm{p}$ is generated, which records the positions and types of errors 
that will occur on codeword $\bm{c}$. 
Given the error profile $\bm{p}$, the codeword $\bm{c}$ is transformed into the corrupted codeword 
$\hat{\bm{c}} = \mathrm{DIDS}(\bm{c},\bm{p};\hat{\theta})$ 
by the simulated differentiable IDS channel, implemented as a sequence-to-sequence model with trained parameters $\hat{\theta}$. 
Finally, a decoder $f_\mathrm{de}(\cdot;\psi)$ with parameters $\psi$ decodes the corrupted codeword $\hat{\bm{c}}$ 
back into the recovered sequence $\hat{\bm{s}} = f_\mathrm{de}(\hat{\bm{c}};\psi)$. 

\begin{figure*}[htb!]
    \vskip -0.in
        \centering
{\linespread{1}
        \centering
        \tikzstyle{format}=[circle,draw,thin,fill=white]
        \tikzstyle{format_gray}=[circle,draw,thin,fill=gray]
        \tikzstyle{format_rect}=[rectangle,draw,thin,fill=white,align=center]
        \tikzstyle{arrowstyle} = [->,thick]
        \tikzstyle{network} = [rectangle, minimum width = 3cm, minimum height = 1cm, text centered, draw = black,align=center,rounded corners,fill=green_so,fill opacity=0.5,text opacity=1]
        \tikzstyle{training_batch} = [trapezium, trapezium left angle = 30, trapezium right angle = 150, minimum width = 3cm, text centered, draw = black, fill = cyan_so, fill opacity=0.3,text opacity=1,align=center]		
        \tikzstyle{class_features} = [trapezium, trapezium left angle = 30, trapezium right angle = 150, minimum width = 3cm, text centered, draw = black, fill = cyan_so, fill opacity=0.3,text opacity=1,align=center]
        \tikzstyle{pixel} = [rectangle, draw = black, fill = orange_so, fill opacity=0.5,text opacity=0,align=center]	
        \tikzstyle{pixel_red} = [rectangle, draw = black, fill = red_so, fill opacity=1,text opacity=0,align=center]	
        \tikzstyle{feature} = [rectangle, draw = black, fill = orange_so, fill opacity=0.3,text opacity=0,align=center,rounded corners]	
        \tikzstyle{feature_sfp} = [rectangle, draw = black, fill = violet_so, fill opacity=0.3,text opacity=0,align=center,rounded corners]					
        \tikzstyle{arrow1} = [thick, ->, >= stealth]
        \tikzstyle{arrow1_thick} = [thick, ->, >= stealth, line width=1.2pt]
        \tikzstyle{arrow2} = [thick, dashed, ->, >= stealth]
        \tikzstyle{thick_line} = [thick, line width=1.5pt]
        \tikzstyle{channel} = [fill=white,fill opacity = 0.7, rounded corners=3pt]
        \tikzstyle{channel_shadow} = [fill = gray_so, fill opacity = 0.1, rounded corners]
        \tikzstyle{channel_selected} = [fill = orange_so, fill opacity = 0.5]

        \tikzstyle{dna} = [decoration={coil}, decorate, thick, decoration={aspect=0, segment length=0.5*0.87cm, post length=0.,pre length=0.}]
        \scalebox{0.77} 
        {
        \begin{tikzpicture}[auto,>=latex', thin, start chain=going below, every join/.style={norm}]
                \definecolor{gray_so}{RGB}{88,110,117}
                \definecolor{lightgray_so}{RGB}{207,221,221}
                \definecolor{yellow_so}{RGB}{181,137,0}
                \definecolor{cyan_so}{RGB}{42,161,152}
                \definecolor{orange_so}{RGB}{203,75,22}
                \definecolor{green_so}{RGB}{133,153,0}
                \definecolor{red_so}{RGB}{220,50,47}
                \definecolor{magenta_so}{RGB}{211,54,130}
                \definecolor{violet_so}{RGB}{108,113,196}
                \definecolor{yellow_ad}{RGB}{242,228,201}
                \definecolor{pink_ad}{RGB}{242,182,160}
                \definecolor{green_ad}{RGB}{146,195,185}
                \definecolor{dgreen_ad}{RGB}{104,166,148}
                \definecolor{purple_ad}{RGB}{115,72,88}
                \definecolor{att_blue}{RGB}{185,233,248}
                \definecolor{nature_orange}{RGB}{252,140,98}
                \definecolor{nature_green}{RGB}{102,195,170}
                \definecolor{nature_blue}{RGB}{142,160,204}
                \definecolor{nature_yellow}{RGB}{253,184,55}
                \useasboundingbox (0,0) rectangle (14,4);
                % [fill=gray, fill opacity=0.3]

                \scope[transform canvas={scale=1}]
                
                \coordinate (zero) at (0,0);
                \coordinate (half) at (0,2.2);
                \coordinate (upperhalf) at ($(half)+(0,1)$);
                \coordinate (lowerhalf) at ($(half)+(0,-1)$);

                \coordinate (upperDnaCenter) at ($(upperhalf)+(0.6,0)$);
                \coordinate (lowerDnaCenter) at ($(lowerhalf)+(0.6,0)$);
                \coordinate (halfDnaCenter) at ($(half)+(0.6,0)$);

                % \filldraw[channel_selected,fill=white] ($(halfDnaCenter)+(-0.125,-0.5)$) rectangle ($(halfDnaCenter)+(0.125,0.5)$);
                \filldraw[channel_selected,fill=nature_blue,fill opacity = 0.5] ($(halfDnaCenter)+(0,0)+(-0.125,-0.5)$) rectangle ($(halfDnaCenter)+(0.25,0.25)+(-0.125,-0.5)$);
                \filldraw[channel_selected,fill=nature_green,fill opacity = 0.5] ($(halfDnaCenter)+(0,0.25)+(-0.125,-0.5)$) rectangle ($(halfDnaCenter)+(0.25,0.5)+(-0.125,-0.5)$);
                \filldraw[channel_selected,fill=nature_yellow,fill opacity = 0.5] ($(halfDnaCenter)+(0,0.5)+(-0.125,-0.5)$) rectangle ($(halfDnaCenter)+(0.25,0.75)+(-0.125,-0.5)$);
                \filldraw[channel_selected,fill=nature_orange,fill opacity = 0.5] ($(halfDnaCenter)+(0,0.75)+(-0.125,-0.5)$) rectangle ($(halfDnaCenter)+(0.25,1)+(-0.125,-0.5)$);
                \node at ($(halfDnaCenter)+(-0.4,0.2)$) {$\bm{s}$};
                \node at ($(halfDnaCenter)+(0,-0.9)$) {source};
                \node at ($(halfDnaCenter)+(0,-1.2)$) {information};

                \coordinate (upperModelCenter) at ($(upperDnaCenter)+(2.125,0)$);
                \coordinate (lowerModelCenter) at ($(lowerDnaCenter)+(2.125,0)$);
                \coordinate (encoderCenter) at ($(halfDnaCenter)+(2.125,0)$);
                \filldraw[channel, fill=nature_orange] ($(encoderCenter)+(-1,-0.618)$) rectangle ($(encoderCenter)+(1,0.618)$);
                % \node at ($(encoderCenter)+(0,0.33)$) {\small{$f(\cdot;\theta):$}};
                \node at ($(encoderCenter)+(0,-0.0)$) {Encoder};
                % \node at ($(encoderCenter)+(0,-0.33)$) {\small{Network}};
                \draw[arrow1_thick, rounded corners] ($(halfDnaCenter)+(0.125+0.1,0)$) -- ($(encoderCenter)+(-1.1,0)$);

                \coordinate (weightsCenter) at ($(upperModelCenter)!0.5!(lowerModelCenter)$);
                % \filldraw[channel, fill=white, dashed, double] ($(weightsCenter)+(-1,-0.382)$) rectangle ($(weightsCenter)+(1,0.382)$);
                % \node at ($(weightsCenter)+(0,0.17)$) {\small{Shared}};
                % \node at ($(weightsCenter)+(0,-0.17)$) {\small{Weight}};
                % \draw[arrow1, rounded corners, double,line width=0.7pt, dashed] ($(weightsCenter)+(0,0.382+0.)$) -- ($(upperModelCenter)+(0,-0.618-0.)$);
                % \draw[arrow1, rounded corners, double,line width=0.7pt, dashed] ($(weightsCenter)+(0,-0.382-0.)$) -- ($(lowerModelCenter)+(0,0.618+0.)$);

                % \coordinate (upperFeatureCenter) at ($(upperModelCenter)+(2.125,0)$);
                \coordinate (upperFeatureCenter) at ($(encoderCenter)+(2.125,0)$);
                \coordinate (featureCenter) at ($(encoderCenter)+(2.125,0)$);
                \coordinate (upperAuxCenter) at ($(upperFeatureCenter)+(0,0.75)$);
                \coordinate (upperCodewordCenter) at ($(upperFeatureCenter)+(0,-0.5)$);
                \coordinate (upperRealCodewordCenter) at ($(upperCodewordCenter)+(0,-0.125)$);

                \filldraw[channel_selected,fill=white] ($(upperAuxCenter)+(-0.125,-0.5)$) rectangle ($(upperAuxCenter)+(0.125,0.5)$);
                \filldraw[channel_selected,fill=nature_blue,fill opacity = 0.5] ($(upperAuxCenter)+(0,0)+(-0.125,-0.5)$) rectangle ($(upperAuxCenter)+(0.25,0.25)+(-0.125,-0.5)$);
                \filldraw[channel_selected,fill=nature_green,fill opacity = 0.5] ($(upperAuxCenter)+(0,0.25)+(-0.125,-0.5)$) rectangle ($(upperAuxCenter)+(0.25,0.5)+(-0.125,-0.5)$);
                \filldraw[channel_selected,fill=nature_yellow,fill opacity = 0.5] ($(upperAuxCenter)+(0,0.5)+(-0.125,-0.5)$) rectangle ($(upperAuxCenter)+(0.25,0.75)+(-0.125,-0.5)$);
                \filldraw[channel_selected,fill=nature_orange,fill opacity = 0.5] ($(upperAuxCenter)+(0,0.75)+(-0.125,-0.5)$) rectangle ($(upperAuxCenter)+(0.25,1)+(-0.125,-0.5)$);

                % \filldraw[channel_selected,fill=white] ($(upperCodewordCenter)+(-0.125,-0.75)$) rectangle ($(upperCodewordCenter)+(0.125,0.75)$);
                \filldraw[channel_selected,fill=nature_blue,fill opacity = 1] ($(upperCodewordCenter)+(0,0)+(-0.125,-0.5)$) rectangle ($(upperCodewordCenter)+(0.25,0.25)+(-0.125,-0.5)$);
                \filldraw[channel_selected,fill=nature_green,fill opacity = 1] ($(upperCodewordCenter)+(0,0.25)+(-0.125,-0.5)$) rectangle ($(upperCodewordCenter)+(0.25,0.5)+(-0.125,-0.5)$);
                \filldraw[channel_selected,fill=nature_orange,fill opacity = 1] ($(upperCodewordCenter)+(0,0.5)+(-0.125,-0.5)$) rectangle ($(upperCodewordCenter)+(0.25,0.75)+(-0.125,-0.5)$);
                \filldraw[channel_selected,fill=nature_green,fill opacity = 1] ($(upperCodewordCenter)+(0,0.75)+(-0.125,-0.5)$) rectangle ($(upperCodewordCenter)+(0.25,1)+(-0.125,-0.5)$);
                \filldraw[channel_selected,fill=nature_yellow,fill opacity = 1] ($(upperCodewordCenter)+(0,-0.25)+(-0.125,-0.5)$) rectangle ($(upperCodewordCenter)+(0.25,0)+(-0.125,-0.5)$);
                \filldraw[channel_selected,fill=white,fill opacity = 1, style=dotted, pattern=north west lines] ($(upperCodewordCenter)+(0,1)+(-0.125,-0.5)$) rectangle ($(upperCodewordCenter)+(0.25,1.25)+(-0.125,-0.5)$);
                % \draw[pattern=north west lines] ($(upperCodewordCenter)+(0,1)+(-0.125,-0.5)$) rectangle ($(upperCodewordCenter)+(0.25,1.25)+(-0.125,-0.5)$);
                \node at ($(upperAuxCenter)+(-0.4,0.2)$) {$\bm{r}$};
                \node at ($(upperAuxCenter)+(0,0.7)$) {auxiliary reconstruction};

                \node at ($(upperCodewordCenter)+(-0.4,0.25)$) {$\bm{c}$};
                \node at ($(upperCodewordCenter)+(0,-1)$) {codeword};
                \node at ($(upperCodewordCenter)+(0,-1)+(0.6,0.6)$) {\orangeso\textbf{GS}};

                % \draw[arrow1_thick, rounded corners] ($(encoderCenter)+(1+0.1,0)$) --  ($(encoderCenter)+(1+0.5,0)$) -- ($(upperFeatureCenter)+(-0.125-0.5,0)$) -- ($(upperFeatureCenter)+(-0.125-0.1,0)$);
                \draw[arrow1_thick, rounded corners] ($(encoderCenter)+(1+0.1,0)$) -- ($(upperFeatureCenter)+(-0.125-0.1,0)$);

                \coordinate (idsCenter) at ($(featureCenter)+(2.125+0.1,0)$);

                % \coordinate (lowerFeatureCenter) at ($(lowerModelCenter)+(2.125,0)$);
                \coordinate (lowerProfileCenter) at ($(idsCenter)+(0,-1.5)$);
                % \node at (idsCenter) {a};
                % \node at (lowerProfileCenter) {a};
                \filldraw[channel_selected,fill=gray,fill opacity = 0.2] ($(lowerProfileCenter)+(0,0)+(-0.25,-0.125)$) rectangle ($(lowerProfileCenter)+(0.25,0.25)+(-0.25,-0.125)$);
                \filldraw[channel_selected,fill=gray,fill opacity = 0.5] ($(lowerProfileCenter)+(-0.25,0)+(-0.25,-0.125)$) rectangle ($(lowerProfileCenter)+(0,0.25)+(-0.25,-0.125)$);
                \filldraw[channel_selected,fill=gray,fill opacity = 0.1] ($(lowerProfileCenter)+(-0.5,0)+(-0.25,-0.125)$) rectangle ($(lowerProfileCenter)+(-0.25,0.25)+(-0.25,-0.125)$);
                \filldraw[channel_selected,fill=gray,fill opacity = 0.8] ($(lowerProfileCenter)+(0.25,0)+(-0.25,-0.125)$) rectangle ($(lowerProfileCenter)+(0.5,0.25)+(-0.25,-0.125)$);
                \filldraw[channel_selected,fill=gray,fill opacity = 0.7] ($(lowerProfileCenter)+(0.5,0)+(-0.25,-0.125)$) rectangle ($(lowerProfileCenter)+(0.75,0.25)+(-0.25,-0.125)$);
                \filldraw[channel_selected,fill=gray,fill opacity = 0.7] ($(lowerProfileCenter)+(0.75,0)+(-0.25,-0.125)$) rectangle ($(lowerProfileCenter)+(1,0.25)+(-0.25,-0.125)$);
                \node at ($(lowerProfileCenter)+(-1,0)$) {$\bm{p}$};
                \node at ($(lowerProfileCenter)+(0,-0.4)$) {random error profile};
                \draw[line width=1.2, rounded corners] ($(lowerProfileCenter)+(-0.75,0.125+0.1)$) -- ($(lowerProfileCenter)+(-0.75,0.125+0.2)$) -- ($(lowerProfileCenter)+(0,0.125+0.2)$) -- ($(lowerProfileCenter)+(0,0.125+0.3)$);
                \draw[arrow1_thick, rounded corners] ($(lowerProfileCenter)+(0.75,0.125+0.1)$) -- ($(lowerProfileCenter)+(0.75,0.125+0.2)$) -- ($(lowerProfileCenter)+(0,0.125+0.2)$) -- ($(idsCenter)+(0,-0.618-0.1)$);

                \draw[line width=1.2, rounded corners] ($(upperRealCodewordCenter)+(0.125+0.1,-0.625)$) -- ($(upperRealCodewordCenter)+(0.125+0.2,-0.625)$) -- ($(upperRealCodewordCenter)+(0.125+0.2,0)$) -- ($(upperRealCodewordCenter)+(0.125+0.3,0)$);
                \draw[arrow1_thick, rounded corners] ($(upperRealCodewordCenter)+(0.125+0.1,0.625)$) -- ($(upperRealCodewordCenter)+(0.125+0.2,0.625)$) -- ($(upperRealCodewordCenter)+(0.125+0.2,0)$) -- ($(upperRealCodewordCenter)+(0.125+0.65,0)$) -- ($(idsCenter)+(-1.45,0)$) -- ($(idsCenter)+(-1.1,0)$);

                \filldraw[channel, fill=nature_green] ($(idsCenter)+(-1,-0.618)$) rectangle ($(idsCenter)+(1,0.618)$);
                \node at ($(idsCenter)+(0,0.33)$) {Differentiable};
                \node at ($(idsCenter)+(0,-0.0)$) {IDS};
                \node at ($(idsCenter)+(0,-0.33)$) {Channel};

                \coordinate (corruptCenter) at ($(idsCenter)+(2.125,0)$);
                \coordinate (corruptCenterAdjust) at ($(corruptCenter)+(0,0.125)$);
                \filldraw[channel_selected,fill=nature_yellow,fill opacity = 1] ($(corruptCenterAdjust)+(0,-0.25)+(-0.125,-0.5)$) rectangle ($(corruptCenterAdjust)+(0.25,0)+(-0.125,-0.5)$);
                \filldraw[channel_selected,fill=nature_blue,fill opacity = 1] ($(corruptCenterAdjust)+(0,0)+(-0.125,-0.5)$) rectangle ($(corruptCenterAdjust)+(0.25,0.25)+(-0.125,-0.5)$);
                \filldraw[channel_selected,fill=nature_yellow,fill opacity = 1] ($(corruptCenterAdjust)+(0,0.25)+(-0.125,-0.5)$) rectangle ($(corruptCenterAdjust)+(0.25,0.5)+(-0.125,-0.5)$);
                \filldraw[channel_selected,fill=nature_orange,fill opacity = 1] ($(corruptCenterAdjust)+(0,0.5)+(-0.125,-0.5)$) rectangle ($(corruptCenterAdjust)+(0.25,0.75)+(-0.125,-0.5)$);
                \filldraw[channel_selected,fill=nature_green,fill opacity = 1] ($(corruptCenterAdjust)+(0,0.75)+(-0.125,-0.5)$) rectangle ($(corruptCenterAdjust)+(0.25,1)+(-0.125,-0.5)$);
                \node at ($(corruptCenter)+(-0.4,0.4)$) {$\hat{\bm{c}}$};
                \node at ($(corruptCenter)+(0,-0.9)$) {corrupted};
                \node at ($(corruptCenter)+(0,-1.2)$) {codeword};
                \draw[arrow1_thick, rounded corners] ($(idsCenter)+(1+0.1,0)$) -- ($(corruptCenter)+(-0.125-0.1,0)$);

                \coordinate (decoderCenter) at ($(corruptCenter)+(2.125,0)$);
                \filldraw[channel, fill=nature_orange] ($(decoderCenter)+(-1,-0.618)$) rectangle ($(decoderCenter)+(1,0.618)$);
                \node at ($(decoderCenter)+(0,-0.0)$) {Decoder};
                \draw[arrow1_thick, rounded corners] ($(corruptCenter)+(0.125+0.1,0)$) -- ($(decoderCenter)+(-1-0.1,0)$);

                \coordinate (outputCenter) at ($(decoderCenter)+(2.125,0)$);
                \filldraw[channel_selected,fill=nature_blue,fill opacity = 0.5] ($(outputCenter)+(0,0)+(-0.125,-0.5)$) rectangle ($(outputCenter)+(0.25,0.25)+(-0.125,-0.5)$);
                \filldraw[channel_selected,fill=nature_green,fill opacity = 0.7] ($(outputCenter)+(0,0.25)+(-0.125,-0.5)$) rectangle ($(outputCenter)+(0.25,0.5)+(-0.125,-0.5)$);
                \filldraw[channel_selected,fill=nature_yellow,fill opacity = 0.5] ($(outputCenter)+(0,0.5)+(-0.125,-0.5)$) rectangle ($(outputCenter)+(0.25,0.75)+(-0.125,-0.5)$);
                \filldraw[channel_selected,fill=nature_orange,fill opacity = 0.7] ($(outputCenter)+(0,0.75)+(-0.125,-0.5)$) rectangle ($(outputCenter)+(0.25,1)+(-0.125,-0.5)$);
                \node at ($(outputCenter)+(-0.4,0.3)$) {$\hat{\bm{s}}$};
                \node at ($(outputCenter)+(0,-0.9)$) {recovered};
                \node at ($(outputCenter)+(0,-1.2)$) {information};
                \draw[arrow1_thick, rounded corners] ($(decoderCenter)+(1+0.1,0)$) -- ($(outputCenter)+(-0.125-0.1,0)$);

                \endscope
        \end{tikzpicture}	
        }
}
\caption{\label{fig:thea-code} The flowchart of THEA-Code, including the encoder, the pretrained IDS channel, 
and the decoder. All of these modules are implemented using Transformer-based models. 
The ``GS'' is where the disturbance based discretization applied in the pipeline. 
}
\vskip -0.in
\end{figure*}

Following this pipeline, a natural optimization target is the cross-entropy loss
\begin{equation}\label{eqn:ce-loss}
    \mathcal{L}_{\mathrm{CE}}(\hat{\bm{s}},\bm{s}) = - \sum_i \sum_j \mathds{1}_{j=s_i} \log{\hat{s}_{ij}}, 
\end{equation}
which evaluates the reconstruction disparity of the source sequence $\bm{s}$ (in its label representation) 
by the recovered sequence $\hat{\bm{s}}$ (in its one-hot probability distribution). 

However, merely optimizing such a loss function will not yield the desired outcomes. 
While the encoder and decoder of an autoencoder typically collaborate on a unified task in most applications, 
in this work, we expect them to follow distinct underlying logic. 
Particularly, when imposing constraints to enforce greater discreteness in the codeword, 
the joint training of the encoder and decoder becomes challenging, 
where the optimization of each relies on the other during the training phase. 

\subsection{Auxiliary reconstruction of source sequence by the encoder}\label{subsec:auxiliary}
To address the aforementioned issue, 
we introduce a supplementary task exclusively for the encoder, 
aimed at initializing it with some foundational logical capabilities. 
Inspired by the systematic code which embed the input message within the codeword,
a straightforward task for the encoder is to replicate the input sequence at the output, 
ensuring that the model preserves all information from its input without reduction.
With this in mind, we incorporate a reconstruction task into the encoder's training process. 

In practice, the encoder is designed to output a longer sequence, which is subsequently split 
into two parts: the codeword representation $\bm{c}$ and an auxiliary reconstruction $\bm{r}$ of the input source sequence, 
as shown in \cref{fig:thea-code}. 
The auxiliary reconstruction loss is calculated using the cross-entropy loss as
\begin{equation}\label{eqn:aux-loss}
\mathcal{L}_{\mathrm{Aux}}(\bm{r},\bm{s}) = - \sum_i \sum_j \mathds{1}_{j=s_i} \log{r_{ij}}, 
\end{equation}
which quantifies the difference between the reconstruction $\bm{r}$ (in its one-hot probability distribution) 
and the input sequence 
$\bm{s}$ (in its label representation). 

Considering that the auxiliary loss may not have negative effects on the encoder for its simple logic, 
we don't use a separate training stage for optimizing the $\mathcal{L}_{\mathrm{Aux}}$. 
The auxiliary loss defined in \cref{eqn:aux-loss} is 
incorporated into the overall loss function and applied consistently throughout the entire training phase. 

\subsection{The encoder and decoder}\label{subsec:autoencoder}
In this approach, both the encoder and decoder are implemented using Transformer-based sequence-to-sequence models. 
Each consists of (3+3)-layer Transformers with sinusoidal positional encoding.
The embedding of the DNA bases is implemented through a fully connected layer without bias 
to ensure compatibility with probability vectors. 
Learnable position index embeddings are employed to query the outputs. 
% A brief introduction to the Transformer \cite{vaswani2017attention}, 
% along with the model settings used in the experiments, is provided in in \cref{app:complexity}. 

\subsection{Training phase}\label{subsec:train}
The training process is divided into two phases. 
Firstly, the differentiable IDS channel is fully trained by optimizing
\begin{equation}
    \hat{\theta} = \underset{\theta}{\arg\min}\, 
    \mathcal{L}_{\mathrm{KLD}}(\hat{\bm{C}}_{\mathrm{DIDS}},\hat{\bm{C}}_{\mathrm{CIDS}}) 
\end{equation}
on randomly generated codewords $\bm{c}$ and profiles $\bm{p}$. 
Once the differentiable IDS channel is trained, its parameters are fixed. 
The remaining components of the autoencoder are then trained by optimizing a weighted sum 
of \cref{eqn:ce-loss} and \cref{eqn:aux-loss}, 
\begin{equation}
    \hat{\phi},\hat{\psi} = \underset{\phi,\psi}{\arg\min}\,
    \mathcal{L}_{\mathrm{CE}}(\hat{\bm{s}},\bm{s}) + \mu \mathcal{L}_{\mathrm{Aux}}(\bm{r},\bm{s}), 
\end{equation}
where $\mu$ is a hyperparameter representing the weight of the auxiliary reconstruction loss. 
The autoencoder is trained on randomly generated input sequences $\bm{s}$ and profiles $\bm{p}$. 

\subsection{Testing phase}\label{subsec:test}
In the testing phase, the differentiable IDS channel is replaced with the conventional IDS channel. 
The process begins with the encoder mapping the source sequence $\bm{s}$ to the codeword $\bm{c}$ 
in the form of probability vectors. 
An $\argmax$ function is then applied to convert $\bm{c}$ into a discrete letter sequence,  
removing any extra information from the probability vectors.  
Next, the conventional IDS operations are performed on $\hat{\bm{c}} = \mathrm{CIDS}(\bm{c},\bm{p})$ 
according to a randomly generated error profile $\bm{p}$. 
The one-hot representation of $\hat{\bm{c}}$ is then passed into the decoder, 
which reconstructs the recovered sequence $\hat{\bm{s}}$. 
Finally, metrics are computed to measure the differences between the original source sequence $\bm{s}$ 
and the reconstructed sequence $\hat{\bm{s}}$, providing an evaluation of the method's performance.

Since the sequences are randomly generated from an enormous pool of possible terms, 
the training and testing sets are separated using different random seeds. 
% we do not distinguish between a training set and a testing set. 
For example, in the context of this work, the source sequence is a $100$-long $4$-ary sequence, 
providing $1.6\times 10^{60}$ possible sequences. 
Given this vast space, sets of randomly generated sequences using different seeds are unlikely to overlap. 

% \section{Experiments and Ablation Study}\label{sec:experiments}
\section{Experiments on the Differentiable IDS Channel}\label{sec:experimentDIDS}
\subsection{Accuracy of the channel}\label{sec:AccDIDS}
The differentiable IDS channel is expected to faithfully modify the input sequence according to the given profiles. 
To explicitly demonstrate the performance, 
accuracy is evaluated under various profile settings. 

% The experiments are conducted as follows. 
% First, random sequences of bases are generated with an equal probability for the four bases 
% $\{\mathrm{A,T,G,C}\}$. 
% Second, a profile is randomly generated with a specified total error rate for each position, 
% which is then equally divided among insertion, deletion, and substitution errors. 
% Third, the result sequence generated by a {conventional IDS channel} from 
% the previously generated input sequence and profile serves as the groundtruth.
% Finally, the result output by the {differentiable IDS channel} is compared 
% to the groundtruth sequence to compute the accuracy. 

The results is illustrated in \cref{fig:AccDIDS}. 
It is suggested that the differentiable IDS channel edits the input sequence 
faithfully according 
to the profile when the total channel error rate is no more than 20\%. 
When the error rate exceeds 20\%, the accuracy of the differentiable IDS channel declines 
as the channel error rate increases.
It is worth noting that realistic DNA-based storage channels typically do not exhibit error rates above 20\%.

\begin{figure}[htb!]
    \vskip 0.in
    \centering
        \includegraphics[width=1.\linewidth,trim={0 0 0 0},clip]{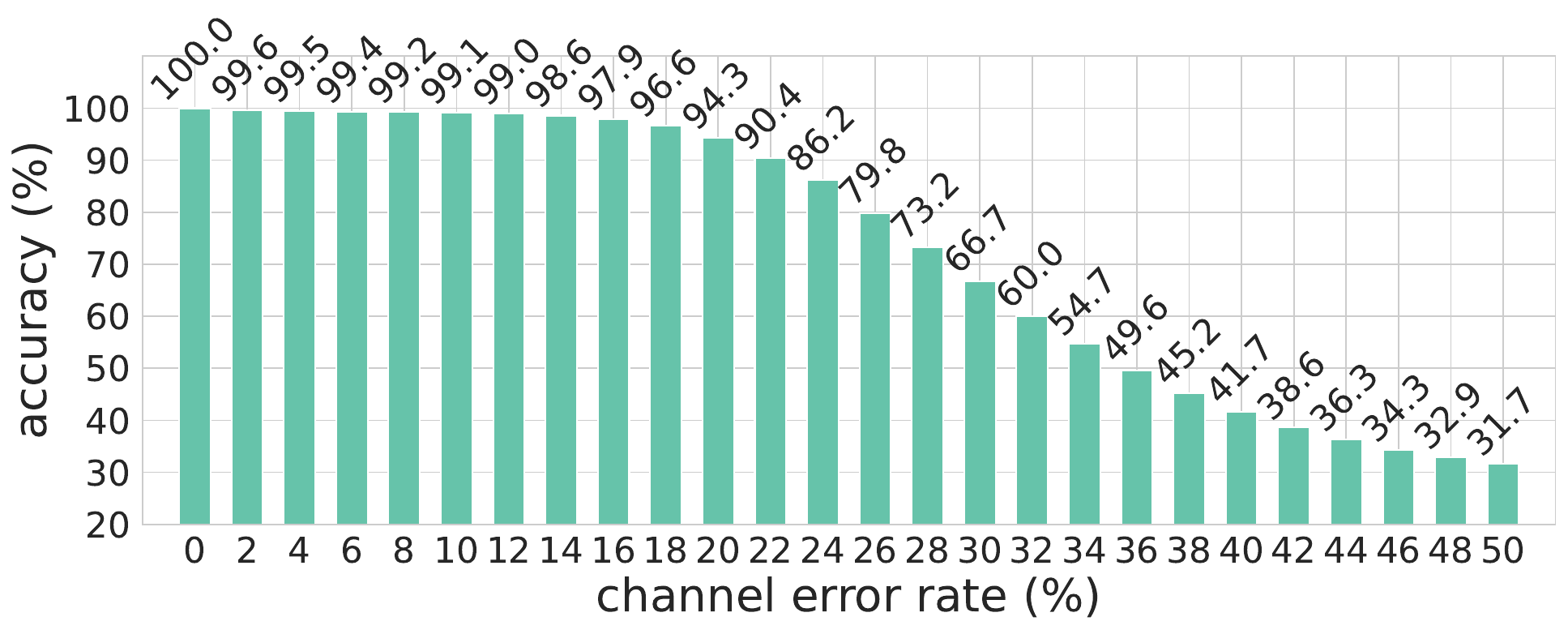}
    \caption{The accuracy of the differentiable IDS channel under various channel error rates.
    Accuracy is calculated by comparing the outputs of the differentiable IDS channel 
    with those of the conventional IDS channel.
    }
    \label{fig:AccDIDS}
    \vskip -0.in
\end{figure}
\section{Experiments on the IDS-Correcting Code}\label{sec:experiments}
Commonly used methods for synthesizing DNA molecules in DNA-based storage pipelines 
typically yield sequences of lengths ranging from $100$ to $200$~\cite{welter2024end}. 
In this study, 
we choose the number $150$ as the codeword length, aligning with these established practices. 
Unless explicitly stated otherwise, 
all the following experiments adhere to the default setting: 
source sequence length $\ell_s=100$, 
codeword length $\ell_c=150$, 
auxiliary loss weight $\mu=1$, 
and the error profile is generated with a $1\%$ probability of errors occurring at each position, 
with insertion, deletion, and substitution errors equally likely. 
% The source code is provided in the Supplementary Material and will be made public upon the manuscript's publication.

To evaluate performance, the nucleobase error rate (NER) is employed as a metric, 
analogous to the bit error rate (BER), but replacing bits with nucleobases. 
For a DNA sequence $\bm{s}$ and its decoded counterpart $\hat{\bm{s}}$, the NER is defined as
\begin{equation}\label{eqn:nber}
    \mathrm{NER} (\bm{s}, \hat{\bm{s}}) = \frac{\# \{s_i\neq \hat{s}_i\}}{\# \{s_i\}}.
\end{equation}
The NER represents the proportion of nucleobase errors 
corresponding to base substitutions in the source DNA sequence. 
It's worth noting that these errors can be post-corrected using a mature conventional outer code. 

The source code is uploaded at \url{https://github.com/aalennku/THEA-Code}. 

\subsection{Performance with different channel settings}\label{subsec:performance}
The code rate is the proportion of non-redundant data in the codeword, 
calculated by dividing the source length $\ell_s$ by the codeword length $\ell_c$. 
We explored variable source lengths $\ell_s$ while keeping the codeword length $\ell_c=150$ fixed. 
The results in \cref{tab:coderate}
reveal a trend that the NER increases from $0.09\%$ to $2.81\%$ as the code rate increases from $0.33$ to $0.83$. 

% \begin{table}[htb!]
%     \vskip -0in
%     \centering
%     % \caption{The testing NER for different source lengths $\ell_s$.  
%     % }\label{tab:coderate}
%     {\small
%     \caption{The testing NER for different source lengths $\ell_s$.}\label{tab:coderate}
%         \begin{tabular}{rcccc}
            
%             \toprule
%             $\ell_s$ & 50     & 75    & 100   & 125 \\
%             code rate & 0.33   & 0.50  & 0.67 & 0.83 \\ \midrule
%             NER($\%$)& 0.09  & 0.46    & 1.06 & 2.81 \\
%             % NER($\%$)& 0.12 $\pm$ 0.03  & 0.51 $\pm$ 0.03 & 1.15 $\pm$ 0.08 & 3.71 $\pm$ 0.59 \\
%             \bottomrule
%         \end{tabular}
%     }
%     \vskip -0.in
% \end{table}
% In particular, when using a coderate of less than $80\%$, the NER remains below $2\%$. 
% By applying an outer conventional error-correcting code 
% to rectify these nucleobase errors in the form of substitutions, 
% a practical IDS-correcting code is achieved. 

By applying an outer conventional ECC to address the remaining NER, 
which is a common technique in DNA-based storage~\cite{press2020hedges,pfister2021polar,yan2022segmented,welzel2023dna}, 
a complete solution for DNA-based storage is achieved. 
Here, the IDS-correcting code is focused. 

% this solution is comparable to existing heuristic IDS-correcting methods 
% . 
% We do not focus on presenting a comparison table, 
% as aligning these heuristic IDS-correcting methods under the same experimental settings is challenging. 
% For instance, in~\cite{press2020hedges}, decoding was performed on multiple retrieved sequences, 
% while combinatorial codes~\cite{cai2021correcting,garbys2023beyond} typically correct a single error 
% rather than errors that can occur at multiple positions. 

% We previously claimed that handmade combinatorial codes cannot meet the requirements of complex IDS channels, 
% and anticipated that 
% deep learning methods could rapidly customize codes for specific complex channels.
By controlling the generation process of the error profile $\bm{p}$ for different channel settings, 
we can evaluate whether THEA-Code learns channels' attributes and produces customized codes 
based on the models' performance. 

\textbf{Results on IDS channels with position related errors.} 
Along with the default setting, where error rates are position-insensitive (denoted as $\mathrm{Hom}$), 
two other IDS channels parameterized by ascending ($\mathrm{Asc}$) and descending ($\mathrm{Des}$) 
error rates along the sequence are 
considered.\footnote{These settings simplify DNA-based storage channels, 
as a DNA sequence is marked with a 3' end and a 5' end. 
Some researchers believe that the error rate accumulates towards the sequence end during synthesis~\cite{meiser2020dna}.}
The $\mathrm{Asc}$ channel has error rates increasing from $0\%$ to $2\%$ along the sequence, 
with the average error rate matching that of the default setting $\mathrm{Hom}$. 
The $\mathrm{Des}$ channel follows a similar pattern but has decreasing error rates along the sequence. 

% The testing NERs are reported in \cref{tab:channel}, 
% where the models are both trained and tested under the same channel setting. 
% The results suggest that the learned THEA-Code exhibits varying performance depending on the specific channel configuration. 

% \begin{table*}[htb!]
%    \caption{The testing NER for different channel settings. 
%    The same channel setting is used for both training and testing. 
%    Best NER among 5 runs are reported in row NER*, while the average NER and standard derivative are reported in 
%    row NER in format $\mathrm{mean} \pm \mathrm{std}$. 
%    }\label{tab:channel}
%    \vskip 0.in
%    \centering
%    {\small
%       \begin{tabular}{rccc}
%          \toprule
%          Channel    & $\mathrm{Hom}\slash\mathrm{Hom}$ & $\mathrm{Asc}\slash\mathrm{Asc}$ & $\mathrm{Des}\slash\mathrm{Des}$ \\ \midrule
%          NER*($\%$) & 1.06                             & 0.79                             & 0.94                             \\
%          NER($\%$)  & 1.15 $\pm$ 0.08                  & 0.90 $\pm$ 0.09                  & 1.01 $\pm$ 0.05                  \\
%          \bottomrule
%       \end{tabular}
%    }
%    \vskip -0.in
% \end{table*}

To verify that the proposed method customizes codes for different channels, 
cross-channel testing was conducted, with the results shown in \cref{tab:diffchannel}. 
The numbers in the matrix represent the NER of a model trained with the channel of the row 
and tested on the channel of the column. 

% \begin{table}[htb!]
% \caption{The testing NER across different channels. 
% Each entry represents the NER of a model trained (resp. tested) on the channel specified 
% by the row (resp. column) header. 
% % Each number represents the NER of a code tested with its column channel, while trained with its row channel. 
% % The average NER and standard derivative are reported 
% % in format $\mathrm{mean} \pm \mathrm{std}$ over 5 training runs of each channel. 
% }\label{tab:diffchannel}
% \vskip -0.in
% \centering
% {\small
%     \begin{tabular}{r|ccc}\toprule
%         NER($\%$)                          & $\mathrm{Asc}$           & $\mathrm{Hom}$           & $\mathrm{Des}$           \\\midrule
%         % $\mathrm{Asc}$ & 0.90 $\pm$ 0.09 & 1.46 $\pm$ 0.08 & 2.09 $\pm$ 0.44 \\
%         % $\mathrm{Hom}$ & 1.03 $\pm$ 0.20 & 1.15 $\pm$ 0.08 & 1.30 $\pm$ 0.03 \\
%         % $\mathrm{Des}$ & 1.72 $\pm$ 0.12 & 1.32 $\pm$ 0.07 & 1.01 $\pm$ 0.05 \\
%         $\mathrm{Asc}$ & \textbf{0.90}  & 1.46  & 2.09  \\
%         $\mathrm{Hom}$ & 1.03  & \textbf{1.15}  & 1.30  \\
%         $\mathrm{Des}$ & 1.72  & 1.32  & \textbf{1.01}  \\
%         \bottomrule
%     \end{tabular}
% }
% \vskip -0.in
% \end{table}

The diagonal of \cref{tab:diffchannel} shows the results of the model trained and tested with a consistent channel, 
suggesting that the learned THEA-Code exhibits varying 
performance depending on the specific channel configuration. 
The columns of \cref{tab:diffchannel} suggest that, for each testing channel, 
models trained with the channel configuration consistently achieve the best performance among the three channel settings. 
Considering the $\mathrm{Hom}$ channel is a midway setting between $\mathrm{Asc}$ and $\mathrm{Des}$, 
the first and third columns (and rows) show that the more dissimilar the training and testing channels are, 
the worse the model's performance becomes, even though the overall error rates are the same across the three channels. 
These findings verify that the deep learning-based method effectively customizes codes for specific channels, 
which could advance IDS-correcting code design into a more fine-grained area. 

\textbf{Results on IDS channels with various IDS error rates.} 
IDS channels with larger error probabilities were also tested. 
The experiments were extended to include channels with error probabilities in 
$\{0.5\%, 1\%, 2\%, 4\%, 8\%, 16\%\}$, with results listed in \cref{tab:diffchannelerror}. 

It is suggested that models trained on channels with higher error probabilities exhibit compatibility with channels 
with lower error probabilities. 
In most cases, models trained and tested on similar channels achieve better performance. 

\begin{table}[tb!]
    \vskip -0in
    % \parbox{.47\linewidth}{
    \setlength{\tabcolsep}{1.1pt}
    \centering
    {%\footnotesize
    \begin{tabular}{rcccc}
        \toprule
        $\ell_s$ & 50     & 75    & 100   & 125 \\
        code rate & 0.33   & 0.50  & 0.67 & 0.83 \\ \midrule
        % NER($\%$)& 0.09  & 0.46    & 1.06 & 2.81 \\
        NER($\%$)& 0.12 $\pm$ 0.03  & 0.51 $\pm$ 0.03 & 1.15 $\pm$ 0.08 & 3.71 $\pm$ 0.59 \\
        \bottomrule
    \end{tabular}
    }
    \caption{The testing NER for different source lengths $\ell_s$, 
    with the codeword length fixed at $\ell_c = 150$. 
    The code rate is calculated as $\ell_s / \ell_c$, ranging from 0.33 to 0.83.
    }\label{tab:coderate}
    \vskip -0.in
\end{table}
    % }
\begin{table}[tb!]
    % \hfill{}
    % \parbox{.5\linewidth}{
    \centering
    {%\footnotesize
    \begin{tabular}{r|ccc}\toprule
        NER($\%$)                          & $\mathrm{Asc}$           & $\mathrm{Hom}$           & $\mathrm{Des}$           \\\midrule
        % $\mathrm{Asc}$ & 0.90 $\pm$ 0.09 & 1.46 $\pm$ 0.08 & 2.09 $\pm$ 0.44 \\
        % $\mathrm{Hom}$ & 1.03 $\pm$ 0.20 & 1.15 $\pm$ 0.08 & 1.30 $\pm$ 0.03 \\
        % $\mathrm{Des}$ & 1.72 $\pm$ 0.12 & 1.32 $\pm$ 0.07 & 1.01 $\pm$ 0.05 \\
        $\mathrm{Asc}$ & \textbf{0.90 $\pm$ 0.09}  & 1.46 $\pm$ 0.08 & 2.09 $\pm$ 0.44 \\
        $\mathrm{Hom}$ & 1.03 $\pm$ 0.20 & \textbf{1.15 $\pm$ 0.08}  & 1.30 $\pm$ 0.03 \\
        $\mathrm{Des}$ & 1.72 $\pm$ 0.12  & 1.32 $\pm$ 0.07 & \textbf{1.01 $\pm$ 0.05}  \\
        \bottomrule
    \end{tabular}
    }
    \caption{The testing NER across different channels. 
    Each entry is the NER of a model trained (resp. tested) with 
    the row (resp. column) header channel. 
    % Each number represents the NER of a code tested with its column channel, while trained with its row channel. 
    % The average NER and standard derivative are reported 
    % in format $\mathrm{mean} \pm \mathrm{std}$ over 5 training runs of each channel. 
    }\label{tab:diffchannel}
    % }
    \vskip -0.in
\end{table} 

\begin{table}[tb!]
    \vskip -0.in
    \centering
    {\small
    \begin{tabular}{r|cccccc}\toprule
        NER($\%$) & 0.5\%         & 1\%           & 2\%           & 4\%           & 8\%           & 16\%           \\ \midrule
        0.5\%         & \textbf{0.68} & 1.59          & 4.26          & 11.67         & 26.87         & 45.61          \\
        1\%           & 0.52          & \textbf{1.15} & 2.90           & 8.12          & 21.19         & 41.03          \\
        2\%           & 0.67          & 1.43          & \textbf{3.16} & 7.79          & 18.7          & 36.89          \\
        4\%           & 1.25          & 1.76          & 2.88          & \textbf{5.53} & 12.39         & 28.31          \\
        8\%           & 2.74          & 3.24          & 4.30          & 6.62          & \textbf{12.2} & 25.41          \\
        16\%          & 11.57         & 11.93         & 12.61         & 14.4          & 17.22         & \textbf{25.51} \\ \bottomrule
        \end{tabular}
    }
    \caption{The testing NER across different IDS error probabilities. 
    The row and column headers correspond to channels configured with respective 
    probabilities of errors. 
    Each entry represents the NER of a model trained (resp. tested) on the channel specified 
    by the row (resp. column) header. 
    % and tested on the channel specified by the column header. 
    }\label{tab:diffchannelerror}
    \vskip -0.in
\end{table}

\begin{table}[tb!]
    \vskip -0.in
    \centering
    {\small
    \begin{tabular}{rcccccc}\toprule
        code rate  & 0.33 & 0.50 & 0.6  & 0.67 & 0.75 & 0.83 \\\midrule
        Cai       & 0.44 & 1.00 & -    & 2.53 & -    & 8.65 \\
        DNA-LM    & 0.55 & 1.03 & -    & 2.29 & -    & 7.43 \\
        HEDGES    & 0.28 & \textbf{0.25} & \textbf{0.65} & -    & 3.43 & -    \\
        % Trellis   &-& 0.70 & -    & 1.11 & -    & 2.00 \\
        THEA-Code & \textbf{0.09} & 0.46 & 1.00    & \textbf{1.06} & \textbf{2.03}    & \textbf{2.81} \\\bottomrule
    \end{tabular}
    }
    \caption{The testing error rates compared with different established codes, through the default $1\%$ IDS channel. 
    % The coderates are not fully aligned, as HEDGES supports only a limited set of fixed coderates. 
    % The experiments on Cai have only the coderates matched, with the code length determined according to the coderate. 
    % The experiments on DNA-LM have code length around 150, and the coderates were tuned by varying the number of segments. 
    }\label{tab:comparison}
    \vskip -0.in
\end{table}

\begin{table*}[tb!]
    \vskip -0.in
    \centering
    {\small
    \begin{tabular}{r|ccc|ccc|ccc}
        \toprule
            & \multicolumn{3}{c|}{$r = 0.33$}               & \multicolumn{3}{c|}{$r=0.50$}                 & \multicolumn{3}{c}{$r=0.67$}                     \\ \midrule
        NER($\%$)& C111           & C253           & MemSim            & C111           & C253           & MemSim            & C111            & C253            & MemSim             \\ \midrule
        C111 & \textbf{2.28} & 3.02          & 15.9          & \textbf{7.60} & 8.77          & 24.85         & \textbf{15.19} & 16.96          & 34.46          \\
        C253 & 2.73          & \textbf{2.93} & 17.3          & 9.15          & \textbf{9.13} & 25.09         & 16.87          & \textbf{16.90} & 32.86          \\
        MemSim  & 5.60           & 6.64          & \textbf{1.55} & 14.78         & 16.62         & \textbf{6.11} & 24.89          & 25.91          & \textbf{12.02} \\ \bottomrule
        \end{tabular}
    }
    \caption{The testing NER across different channels including C111, C253, and MemSim, under varying code rates. 
    Each entry represents the NER of a model trained (resp. tested) on the channel specified 
    by the row (resp. column) header. 
    % The channel C111 denotes an IDS channel with error rates of $(3.45\%,3.45\%,3.45\%)$ 
    % for $(\mathrm{ins}, \mathrm{del}, \mathrm{subs})$. 
    % The channel C253 denotes an IDS channel with error rates of $(1.66\%,5.31\%,3.38\%)$ 
    % for $(\mathrm{ins}, \mathrm{del}, \mathrm{subs})$. 
    % The MemSim is the simulation of a realistic channel based on Nanopore sequencing data, 
    % excluding outlier sequences. 
    }\label{tab:memsim}
    \vskip -0.in
\end{table*}

\begin{table*}[tb!]
    \vskip -0.in
    \centering
    {\small
    \begin{tabular}{r|ccc|ccc|ccc}\toprule
        & \multicolumn{3}{c|}{$r=0.33$}                  & \multicolumn{3}{c|}{$r=0.50$}                  & \multicolumn{3}{c}{$r=0.67$}                     \\\midrule
        & C111          & C253          & MemSim        & C111          & C253          & MemSim        & C111           & C253           & MemSim         \\\midrule
        Cai       & 17.01         & 17.52         & 72.74         & 29.00         & 29.57         & 74.40         & 40.12          & 42.62          & 73.90          \\
        DNA-LM    & 32.24         & 37.33         & 60.13         & 45.32         & 51.13         & 64.27         & 56.34          & 60.22          & 68.72          \\
        HEDGES    & 3.21          & 4.56          & 29.42         & 27.22         & 27.79         & 99.56         & 54.35          & 55.66          & 99.62          \\
        % Trellis   & -             & -             & -             & 8.51          & 9.40          & 11.98         & \textbf{13.52} & \textbf{14.09} & 14.41          \\
        THEA-Code & \textbf{2.28} & \textbf{2.93} & \textbf{1.55} & \textbf{7.60} & \textbf{9.13} & \textbf{6.11} & \textbf{15.19}          & \textbf{16.90}          & \textbf{12.02} \\ \bottomrule
    \end{tabular}
    }
    \caption{The testing error rates compared with established code through channels 
    including C111, C253, and MemSim, under varying code rates. 
    % Each entry in the matrix represents the NER of a model trained (resp. tested) on the channel specified 
    % by the row (resp. column) header. 
    % The channel C111 denotes an IDS channel with error rates of $(3.45\%,3.45\%,3.45\%)$ 
    % for $(\mathrm{ins}, \mathrm{del}, \mathrm{subs})$. 
    % The channel C253 denotes an IDS channel with error rates of $(1.66\%,5.31\%,3.38\%)$ 
    % for $(\mathrm{ins}, \mathrm{del}, \mathrm{subs})$. 
    % The MemSim is the simulation of a realistic channel based on Nanopore sequencing data, 
    % excluding outlier sequences. 
    }\label{tab:realcom}
    \vskip -0.in
\end{table*}

\textbf{Results on realistic IDS channels.} 
We also conducted experiments using IDS channels that more closely resemble 
realistic IDS channels in DNA-based storage. 
A memory channel was proposed in \cite{hamoum2021channel}, 
relying on statistical data obtained via a realistic storage pipeline. 
It models the IDS errors based on the $k$-mers of sequences and adjacent edits. 
% The simulated model relies on statistical data obtained via Nanopore sequencing. 
In this work, we utilize the publicly released trained memory channel from \cite{hamoum2021channel}, 
filtering out apparent outlier sequences with Levenshtein distance greater than $20$. 
This simulated channel is referred to as MemSim. 

In practice, a DNA sequence $\bm{c}$ is input into MemSim to produce  
the output sequence $\hat{\bm{c}}$ from the channel. 
By comparing $\bm{c}$ and $\hat{\bm{c}}$, an error profile $\bm{p}$ is inferred. 
Using the sequence $\bm{c}$ and the error profile $\bm{p}$ 
in the procedure depicted in \cref{fig:thea-code}, 
an IDS-correcting code for MemSim is customized. 

For comparison, two simple channels, partially aligned with MemSim, were also considered. 
The overall IDS error rate for MemSim is $10.36\%$, 
with the proportions of insertion, deletion, and substitution being 
$1.66\%$, $5.31\%$, and $3.38\%$, respectively. 
We refer to the context-free channel with these specific error proportions as channel C253. 
Channel C111 is defined as having the same overall IDS error rate $10.36\%$, 
but with equal proportions of insertion, deletion, and substitution.
It is evident that MemSim is the closest approximation to a realistic channel, 
followed by C253, while C111 deviates the most from a realistic channel, 
despite all having the same overall IDS error rate. 

The results across channels, including C111, C253, and MemSim, are presented in \cref{tab:memsim}. 
The results suggest that THEA-Code performs better when the model is trained on the same channel used for testing. 
Specifically, for the realistic channel, codes trained on the simpler channels C253 and C111 fail to deliver 
satisfactory results. 
Overall, THEA-Code trained and tested with MemSim achieves the best results, 
demonstrating that the proposed model significantly benefits
from customizing the code for the realistic channel.

\subsection{Comparison experiments}\label{sec:comparison}

Comparison experiments were conducted against prior works include: the combinatorial code from~\cite{cai2021correcting}, 
the segmented code method DNA-LM from~\cite{yan2022segmented}, 
and the efficient heuristic method HEDGES from~\cite{press2020hedges}.
% and the Trellis from~\cite{srinivasavaradhan2021trellis}. 

Such methods are typically designed to operate under discrete, fixed configurations,  
making it challenging to align them within the same setting. 
We made every effort to align these methods, and 
present a subset of the comparison results in \cref{tab:comparison}, 
which is tested through the default $1\%$ error channel. 
Detailed configurations and results across multiple channels are provided in 
% Appendix~B. 
 \cref{app:comparison}. 

\cref{tab:comparison} demonstrates the effectiveness of the proposed method. 
The performance of THEA-Code and HEDGES outperform the other methods by a large margin. 
At lower code rates, THEA-Code achieves a comparable error rate to HEDGES. 
At higher code rates, the proposed method outperforms HEDGES, 
achieving much lower error rates.
%  at a higher code rate.
% , specifically $2.81\%$ error rate at $0.83$ code rate for THEA-Code 
% v.s. $3.43\%$ error rate at $0.75$ code rate for HEDGES. 

\textbf{Comparison through the realistic channel.}
We also compared these codes across the channels C111, C253, and MemSim introduced in \cref{subsec:performance}, 
all of which have an overall channel error rate of $10.36\%$. 
Specifically, MemSim simulates the IDS channel from a realistic storage pipeline. 

The results are illustrated in~\cref{tab:realcom}. 
It can be observed that high-error-rate channels severely degrade the performance of compared codes, 
while the proposed THEA-Code outperforms them by a significant margin. 
Moreover, the compared codes, lacking the ability to adapt to specific channels, 
show a noticeable decline in performance as the channel transitions from the simpler C111/C253 to the more realistic MemSim. 
In contrast, THEA-Code leverages customized channel-specific designs, 
achieving the best performance on MemSim across all three channels.

\section{More Experiments in the Appendices}
In this work, the disturbance-based discretization, the differentiable IDS channel, 
and the auxiliary reconstruction loss are newly proposed. 
Comprehensive experiments on these modules are presented in the Appendices provided in 
\url{https://arxiv.org/abs/2407.18929}. 
% which is in the \emph{Supplementary Material}.
Following is a brief overview. 

The full proof of Proposition \ref{thm:gs} is given in 
% Appendix~A.
 \cref{app:proof}. 
Details of the comparison experiments discussed in \cref{sec:comparison} are provided in 
% Appendix~B.
\cref{app:comparison}. 

\textbf{Disturbance-based discretization.} 
The ablation studies and hyperparameter optimization for the disturbance-based discretization, 
including the discretization effect compared to vanilla softmax, %(\ref{subsec:expgs}), 
the optimization of temperature $\tau$, % (\ref{app:tau}), 
and the results of potential alternative, % (\ref{app:alter}), 
are in 
% Appendix~C.
 \cref{app:gs}. 

\textbf{Differentiable IDS channel.} 
Experiments on the differentiable IDS channel, 
including 
the gradient trace under specific error profiles, % (\ref{app:diffids}), 
and the gradients with respect to the error profile through an identity channel, % (\ref{app:idsgrad}), 
are in 
% Appendix~D.
 \cref{app:experimentDIDS}. 

\textbf{Auxiliary reconstruction loss.} 
For the auxiliary reconstruction loss, 
ablation studies, weight optimization of the loss term $\mu$, % (\ref{subsec:auxexperiment}), 
and experiments on different auxiliary patterns % (\ref{app:aux}) 
are provided in 
% Appendix~E.
 \cref{app:ablationaux}. 

\textbf{Dataset and model.} 
The construction of datasets and the definition of error profiles are 
detailed in 
% Appendix~F. 
 \cref{app:dataset}. 
A brief introduction to the Transformer model, 
as well as complexity analysis and time consumption, is presented in 
% Appendix~G.
 \cref{app:complexity}. 

% \clearpage
\section*{Acknowledgements}
This work was supported by the National Key Research and Development Program of China under Grant 
2020YFA0712100 and 2025YFC3409900, the National Natural Science Foundation of China, 
and the Emerging Frontiers Cultivation Program of Tianjin University Interdisciplinary Center.
% \bibliography{aaai2026}
\bibliography{AlanGuo-bio}

% Check whether the conference requires a reproducibility checklist to be included in the paper.
% If so, you can uncomment the following line and ajust the path to include it.
% \clearpage
% \input{../../ReproducibilityChecklist/LaTeX/ReproducibilityChecklist.tex}

\clearpage
\appendix
\setcounter{secnumdepth}{2}
% \onecolumn

\section{Detailed Proof of Proposition~\ref{thm:gs}}\label{app:proof}
Let $\bm{x}=(x_1,x_2,\ldots,x_n)$ denote the logits output by the upstream model, 
and let $\bm{y}=(y_1,y_2,\ldots,y_n) = \mathrm{GS}(\bm{x})$ represent the Gumbel-Softmax of $\bm{x}$
\begin{equation}\label{eqn:yy}
    y_i = \frac{\exp{((x_i+g_i)/\tau)}}{\sum_{j=1}^{n}\exp{((x_j+g_j)/\tau)}}, \quad i=1,2,\ldots,n,
\end{equation}
where the $g_i$ are i.i.d. samples drawn from the Gumbel distribution $G(0,1)$, and 
$\tau$ is the temperature parameter controlling the entropy. 
Let $\mathcal{L}=f(y_1,y_2,\ldots,y_n)$ be the optimization target, 
which is the composite function of the downstream model and the loss function.

Without loss of generality, consider the partial derivative of $\mathcal{L}$ with respect to $x_1$, which is
\begin{align}
    \frac{\partial \mathcal{L}}{\partial x_1} =& \sum_{i=1}^{n} \frac{\partial f}{\partial y_i} \frac{\partial y_i}{\partial x_1}\\
    =& \frac{1}{\tau}\frac{\partial f}{\partial y_1}
    \frac{\exp{((x_1+g_1)/\tau)}\sum_{j\neq 1}\exp{((x_j+g_j)/\tau)}}{(\sum_{i}\exp{((x_i+g_i)/\tau)})^2}\\
    &- \frac{1}{\tau}\sum_{j\neq 1}\frac{\partial f}{\partial y_j} \frac{\exp{((x_1+g_1)/\tau)}\exp{((x_j+g_j)/\tau)}}{(\sum_{i}\exp{((x_i+g_i)/\tau)})^2}\\
    =& \frac{1}{\tau}y_1 \left(\sum_{j\neq 1}y_j\left(\frac{\partial f}{\partial y_1}-\frac{\partial f}{\partial y_j}\right)\right).\label{eqn:chain}
\end{align}
At convergence, the model reaches a (local) minimum where gradients vanish. 
Therefore, according to \cref{eqn:chain}, 
either $y_1$ 
or the term $\left(\sum_{j\neq 1}y_j\left(\frac{\partial f}{\partial y_1}-\frac{\partial f}{\partial y_j}\right)\right)$ 
should be zero. 

(1). Consider the case that 
$\left(\sum_{j\neq 1}y_j\left(\frac{\partial f}{\partial y_1}-\frac{\partial f}{\partial y_j}\right)\right)=0$. 
Since $y_1 = 1-\sum_{j\neq 1} y_j$, we can express $\mathcal{L}$ as a function of $y_2,\ldots,y_n$: 
\begin{equation}
    \mathcal{L}(y_2,\ldots,y_n)=f(1-\sum_{j\neq 1} y_j,y_2,\ldots,y_n).
\end{equation}
Then the dot product of the gradient of $\mathcal{L}$ with the vector $(y_2,\ldots,y_n)'$ becomes
\begin{align}\label{eqn:dotproduct}
    &\nabla \mathcal{L} \cdot (y_2,\ldots,y_n)' \\
    =& -\left(\frac{\partial f}{\partial y_1}-\frac{\partial f}{\partial y_2},\ldots,\frac{\partial f}{\partial y_1}-\frac{\partial f}{\partial y_n}\right) \cdot (y_2,\ldots,y_n)'\\
    =& - \left(\sum_{j\neq 1}y_j\left(\frac{\partial f}{\partial y_1}-\frac{\partial f}{\partial y_j}\right)\right) = 0.\label{eqn:graddotvar}
\end{align}
Now consider two subcases: 
\begin{itemize}
    \item If $(y_2, \ldots, y_n) = \bm{0}$, then $y_1 = 1$, 
    meaning that the output is exactly one-hot even under the randomness of Gumbel noise. 
    \item If $(y_2, \ldots, y_n) \neq \bm{0}$ and $y_1 \neq 1$, 
    then due to the randomness introduced by the Gumbel noise, 
    the condition in \cref{eqn:dotproduct} holds for varying $y_i$ values.
    Let $\phi(t) = \mathcal{L}(ty_2,\ldots,ty_n)$, 
    by the mean value theorem for multivariable functions, we have
    \begin{align}
        &\phi(t) - \phi(0) = \phi'(\tau) \cdot t \\
        =& t \nabla \mathcal{L}(\tau y_2,\ldots,\tau y_n)\cdot(y_2,\ldots,y_n)' \\
        =& \frac{t}{\tau} \nabla\mathcal{L}(\tau y_2,\ldots,\tau y_n)\cdot(\tau y_2,\ldots,\tau y_n)', \quad \tau\in[0,t]. 
    \end{align}
    According to \cref{eqn:dotproduct}, the dot product is zero, 
    so $\phi(t)-\phi(0)=0$, and thus $\phi(t)$ is constant with regard to $t$. 
    Since this reasoning holds for all such directions under the Gumbel-softmax distribution, 
    it follows that $\mathcal{L}$ is constant, 
    which means the downstream network and loss function degenerates into 
    a trivial model, contradicted to the Proposition hypothesis of non-trivial convergence. 
\end{itemize}

(2). Consider the case that $y_1 = 0$. 
Since the previous derivation was made without loss of generality for $x_1$ and $y_1$, 
the same reasoning applies to any $x_i$ and $y_i$.
Therefore, it follows that there should be some index $i_0$ such that $y_{i_0} = 1$ 
and $y_j = 0$ for all $j\neq i_0$. 

Now it can be inferred that one of $y_1,\ldots,y_n$ is close to $1$, while the others are close to $0$. 
Without loss of generality, assume $y_i < \epsilon_1$ for all $i\neq 1$, it can be calculated that
\begin{equation}
\frac{1}{y_i} = 1 + \sum_{j\neq i} \exp{((x_j-x_i + g_j - g_i)/\tau)} > \frac{1}{\epsilon_1}. 
\end{equation}
From this inequality, there must exist some $j\neq i$ such that 
\begin{equation}
    \exp{((x_j-x_i + g_j - g_i)/\tau)}> \frac{1}{n\epsilon_1}. 
\end{equation}
Let $M_1 = (\frac{1}{n\epsilon_1})^\tau$, then we have
\begin{equation}
    g_j-g_i > \ln\frac{M_1}{\exp{(x_j-x_i)}}.
\end{equation}
Note that the $g_i$s are i.i.d. samples from the Gumbel distribution $G(0,1)$, 
so $g_i-g_j$ follows a $\mathrm{Logistic}(0,1)$ distribution with CDF 
\begin{equation}
    F_{\mathrm{Logistic}(0,1)} (x) = \frac{1}{1+\exp{(-x)}}.
\end{equation}
The probability that $g_j-g_i>\ln{\frac{M_1}{\exp{(x_j-x_i)}}}$ is 
\begin{align}
    &P\left(g_j-g_i>\ln{\frac{M_1}{\exp{(x_j-x_i)}}}\right) \\
    =& 1 - \frac{M_1}{M_1+\exp{(x_j-x_i)}}. 
\end{align}
Requiring this probability to be greater than $1 - \epsilon_2$, we obtain 
\begin{equation}
    \frac{M_1}{M_1+\exp{(x_j-x_i)}} < \epsilon_2, 
\end{equation}
which is equivalent to 
\begin{equation}
    \exp{(x_j-x_i)} > M_1\left(\frac{1}{\epsilon_2}-1\right), \quad i\neq j. 
\end{equation}
Under this condition, applying the $\mathrm{softmax}$ function to $\bm{x}$ yields  
\begin{align}
    \pi_i &= \frac{\exp{x_i}}{\sum_{k}\exp{x_k}} < \frac{\exp{x_i}}{\exp{x_j}}\\
    &<\frac{1}{M_1(1/\epsilon_2-1)} = \frac{(n\epsilon_1)^{\tau}\epsilon_2}{1-\epsilon_2}\\
    &<(n\epsilon_1)^\tau \epsilon_2, 
\end{align}
which indicates that $\pi_i$ is minimum. 
And the autoencoder, under the Gumbel-Softmax disturbance, produces more confident and hence more discretized logits $\bm{x}$ 
and the corresponding categorical distribution $\bm{\pi}$. 

\section{Comparison Experiments}\label{app:comparison}
To evaluate the effectiveness of the proposed methods, we conducted comparison experiments against three prior works, 
which are: 
\begin{itemize}
    \item a combinatorial code that can correct single IDS errors over a 4-ary alphabet from Cai~\cite{cai2021correcting}; 
    \item a segmented method for correcting multiple IDS errors, called DNA-LM from~\cite{yan2022segmented};
    \item a well-known, efficient heuristic method, called HEDGES, from a DNA-based storage research~\cite{press2020hedges}. 
\end{itemize}

These methods typically offer only a few discrete, fixed configurations. 
We made efforts to align their settings as closely as possible. 
For Cai's combinatorial code, the code rates are fixed based on the code lengths.
In our experiments on Cai, only the code rates are matched, with the code length determined according to the code rate.\footnote{It 
is important to note that code length plays a critical role in these experiments, 
as longer codewords are more likely to encounter multiple errors that cannot be corrected. 
Thus, Cai's performance here is just a baseline statistic of multi-errors with respect to the length, 
and performance may degrade with increased length. }
% with lengths at each coderate set as follows: 21@0.33, 32@0.50, 48@0.67, 102@0.83. 
For DNA-LM, we maintained the codeword length around 150, adjusting the number of segments to match code rates. 
For HEDGES, only binary library is publicly available, 
and it supports fixed code rates in $\{0.75, 0.6, 0.5, 1/3, 0.25, 1/6\}$. 
HEDGES' inner code was tested independently for comparison. 
We list all the source lengths $\ell_s$, codeword lengths $\ell_c$, 
and code rate $r$ used in the experiments in \cref{tab:setting}. 

\begin{table*}[htb!]
    \vskip 0.in
    \centering
    {%\small
    \begin{tabular}{rllll}\toprule
                  & $r_1=\ell_{s1}/\ell_{c1}$ & $r_2=\ell_{s2}/\ell_{c2}$ & $r_3=\ell_{s3}/\ell_{c3}$  & $r_4=\ell_{s4}/\ell_{c4}$  \\\midrule
        Cai       & 0.33=7/21   & 0.50=16/32  & 0.67=32/48   & 0.83=85/102  \\
        DNA-LM    & 0.34=50/148 & 0.51=77/152 & 0.68=96/142  & 0.84=124/148 \\
        HEDGES    & 0.34=52/155 & 0.50=76/152 & \textsl{0.60=92/153}  & \textsl{0.75=115/155} \\
        THEA-Code & 0.33=50/150 & 0.50=75/150 & 0.67=100/150 & 0.83=125/150 \\\bottomrule
    \end{tabular}
    }
    \vskip 0.in
    \caption{The testing configurations for the comparison experiments. 
    Each cell includes the code rate, message length, and code length. 
    The settings are tried to be aligned, except the Cai configuration has a code length that does not align with 150,
    and 
    HEDGES uses fixed code rates of $0.60$ and $0.75$, which are not aligned.
    }\label{tab:setting}
\end{table*}

The experiments were conducted on the default IDS channel with $1\%$ error probability, 
as well as its variations, $\mathrm{Asc}$ and $\mathrm{Des}$, introduced in \cref{subsec:performance}. 
The results is illustrated in \cref{fig:comparison}. 
The experiments handled failed corrections by directly using the corrupted codeword as the decoded message.

\begin{figure}[htb!]
    \vskip -0.in
    \centering
        \includegraphics[width=0.87\linewidth,trim={0 0 0 0},clip]{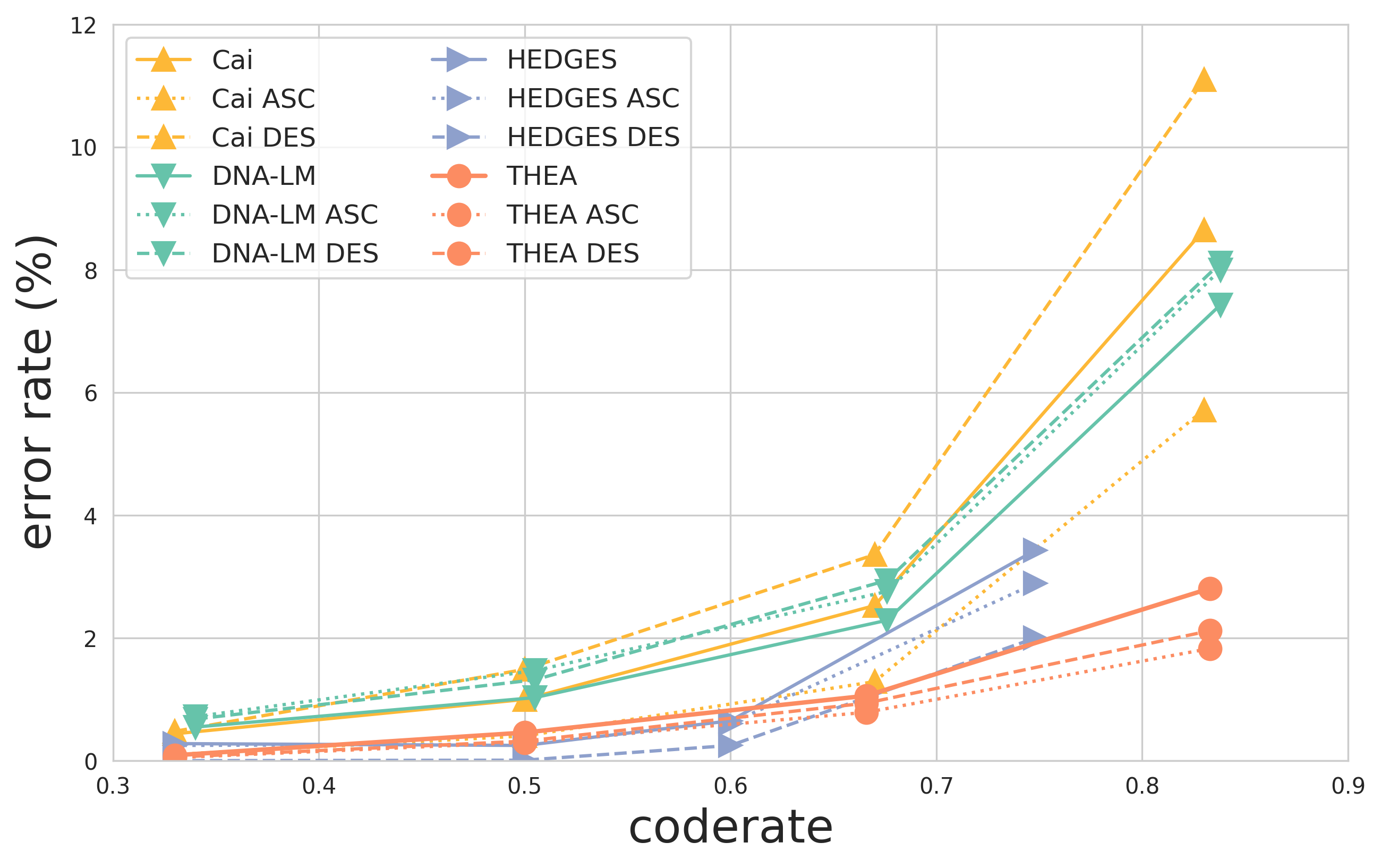}
    \caption{The error rates of the comparison experiments. 
    Results for Cai, DNA-LM, HEDGES, and THEA-Code are shown across $\mathrm{Hom}$, $\mathrm{Asc}$, and $\mathrm{Des}$ channels, 
    with respect to their code rates. 
    }
    \label{fig:comparison}
    \vskip -0.in
\end{figure}

The results for Cai’s method indicate that directly applying classical combinatorial codes 
to a $1\%$ IDS error probability channel with a codeword length of $150$ is impractical. 
The observed error rates are high, even though these values were obtained with shorter code lengths than $150$. 
The segmented method with sync markers in DNA-LM supports a codeword length of $150$ 
and can correct multiple errors across different segments. 
However, it also exhibits a high error rate, indicating a nonnegligible likelihood 
of multi-errors occurring within the same segment. 
For HEDGES, while the results are commendable, the code rate is restricted to a limited set of fixed values. 
The results of THEA-Code demonstrate the effectiveness of the proposed method. 
At lower code rates, THEA-Code achieves a comparable error rate to HEDGES. 
At higher code rates, the proposed method outperforms HEDGES, 
achieving a lower error rate at a higher code rate, specifically $2.81\%$ error rate at $0.83$ code rate for THEA-Code 
v.s. $3.43\%$ error rate at $0.75$ code rate for HEDGES.

\section{Ablation Study on the Disturbance-Based Discretization}\label{app:gs}

\subsection{Effects of the disturbance-based discretization}\label{subsec:expgs}
The ablation study on utilizing the disturbance-based discretization was conducted 
analyzing the discreteness of the codewords.
% The ablation study on utilizing the disturbance-based discretization was conducted, 
% analyzing the discreteness of the codewords.
% to inspect
% % observe its impact on 
% the codeword discreteness. 
During training, the entropy \cite{shannon1948mathematical} of the codewords 
\begin{equation}\label{eqn:entropy}
H(\bm{\pi}) = -\sum_{i=1}^{k} \pi_i\log{\pi_i}
\end{equation}
was recorded. 
This entropy measures the level of discreteness in the codewords. 
Lower entropy implies a distribution that is closer to a one-hot style probability vector, 
which indicates greater discreteness. 
In addition to entropy, two other metrics were also recorded, 
as they are the reconstruction loss $\mathcal{L}_{\mathrm{CE}}$ and the NER. 
The results, plotted in \cref{fig:gumbelablation}, 
compare the default disturbance setting (Gumbel-Softmax) against a vanilla softmax approach.

\begin{figure*}[htb!]
\vskip 0.in
\centering
\subcaptionbox{Gumbel-Softmax}[1\linewidth]
{
    \includegraphics[width=0.3\linewidth]{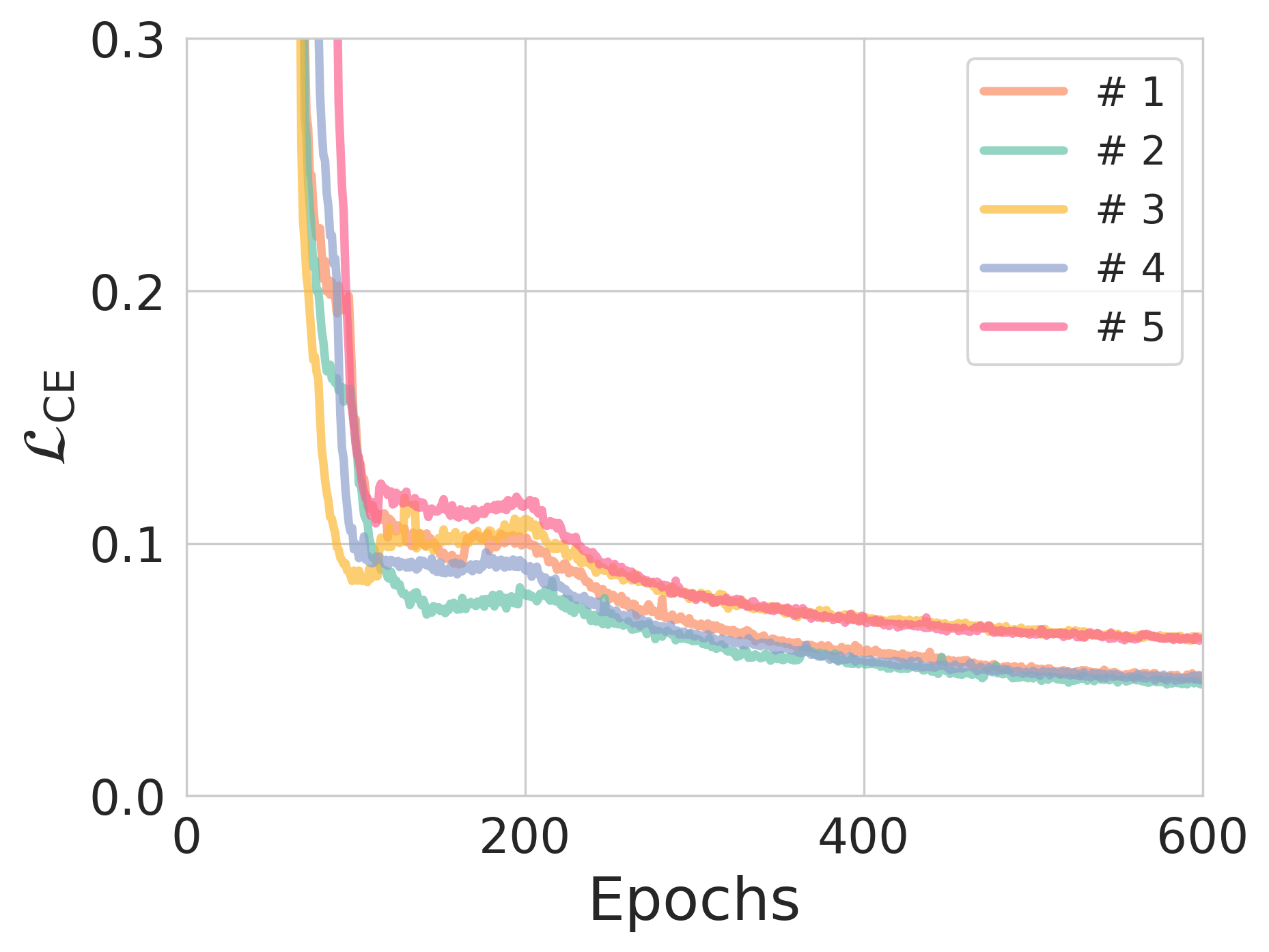}
    \includegraphics[width=0.3\linewidth]{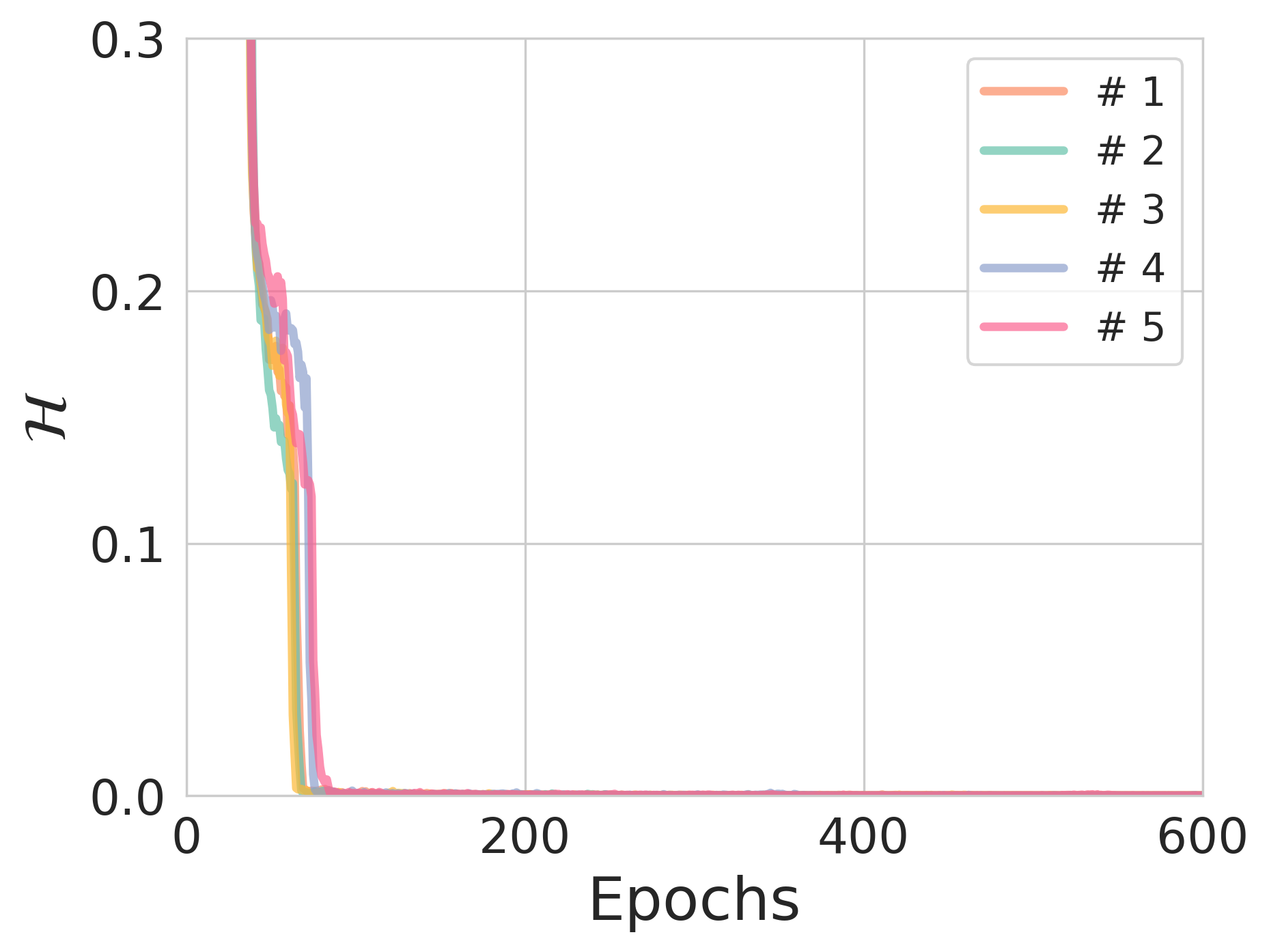}
    \includegraphics[width=0.3\linewidth]{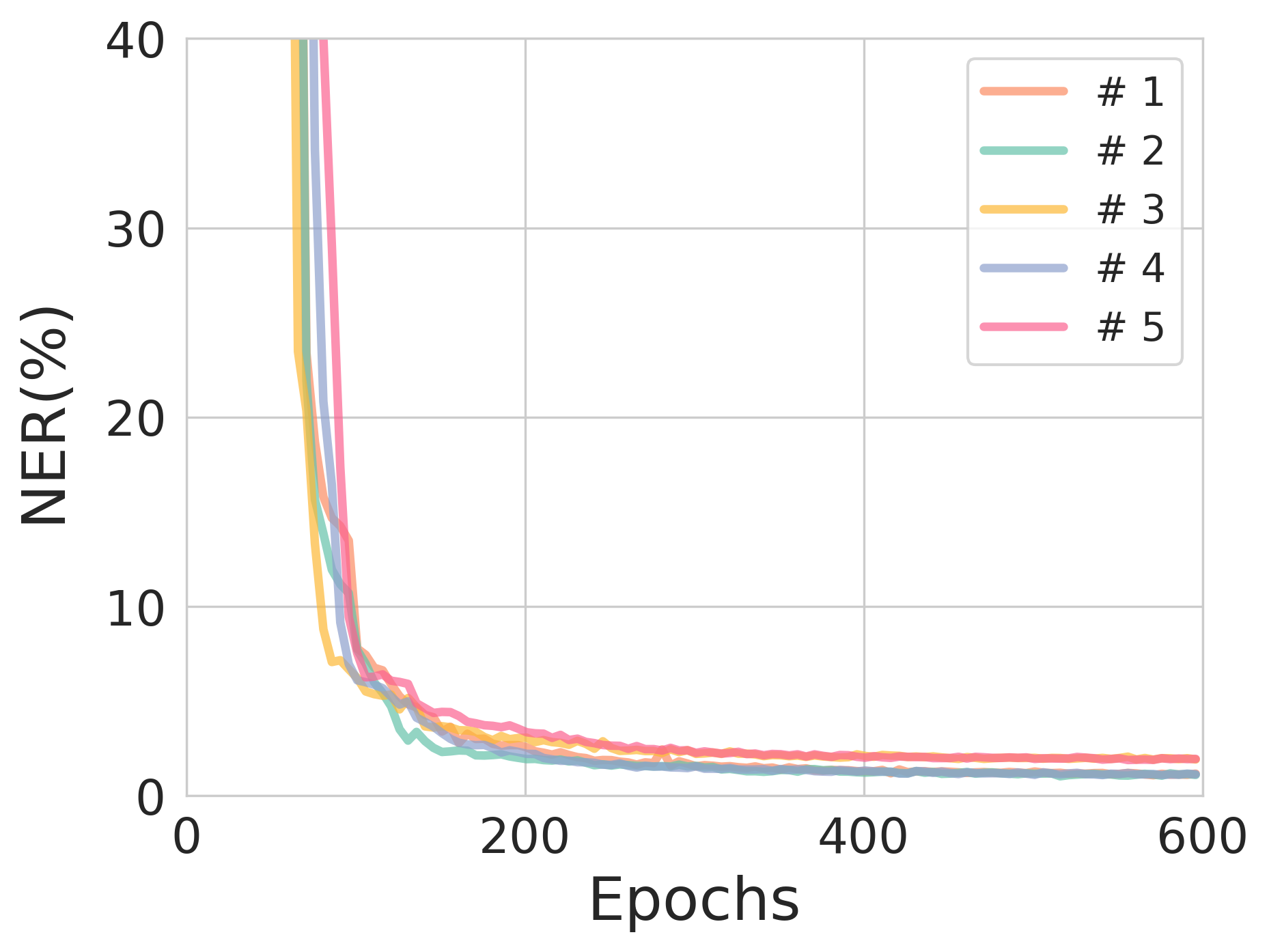}
}%

\subcaptionbox{vanilla softmax}[1\linewidth]
{
    \includegraphics[width=0.3\linewidth]{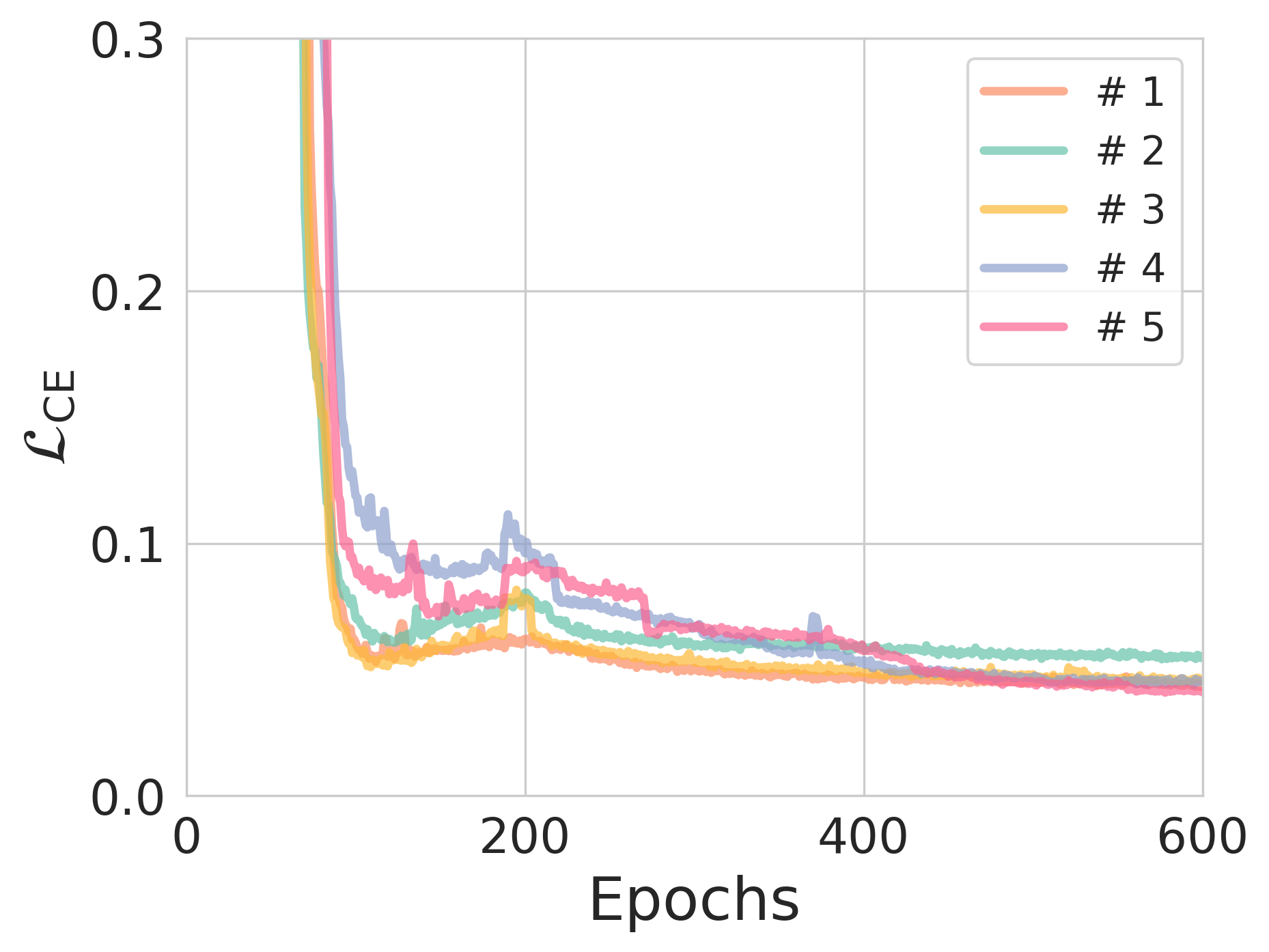}
    \includegraphics[width=0.3\linewidth]{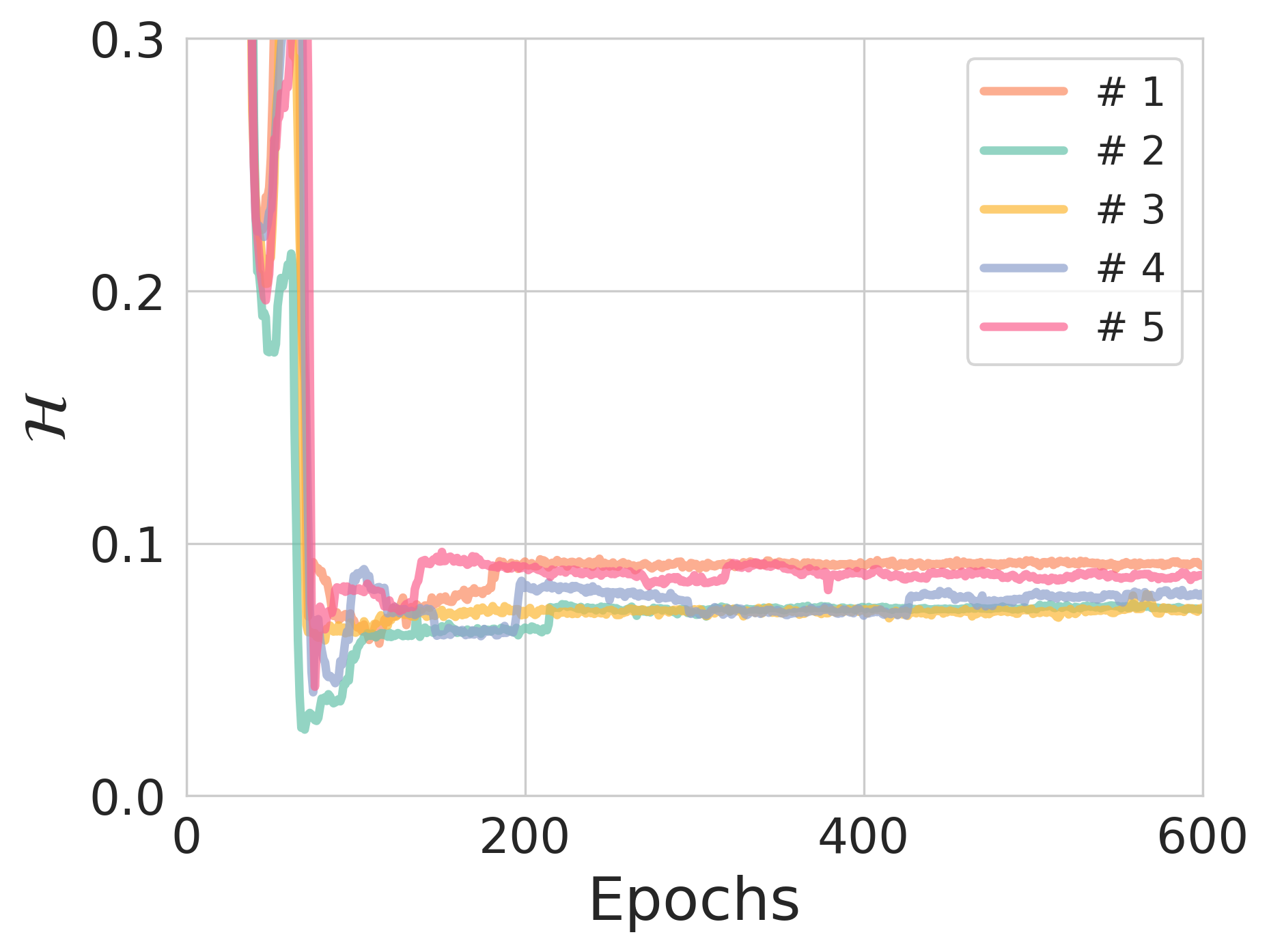}
    \includegraphics[width=0.3\linewidth]{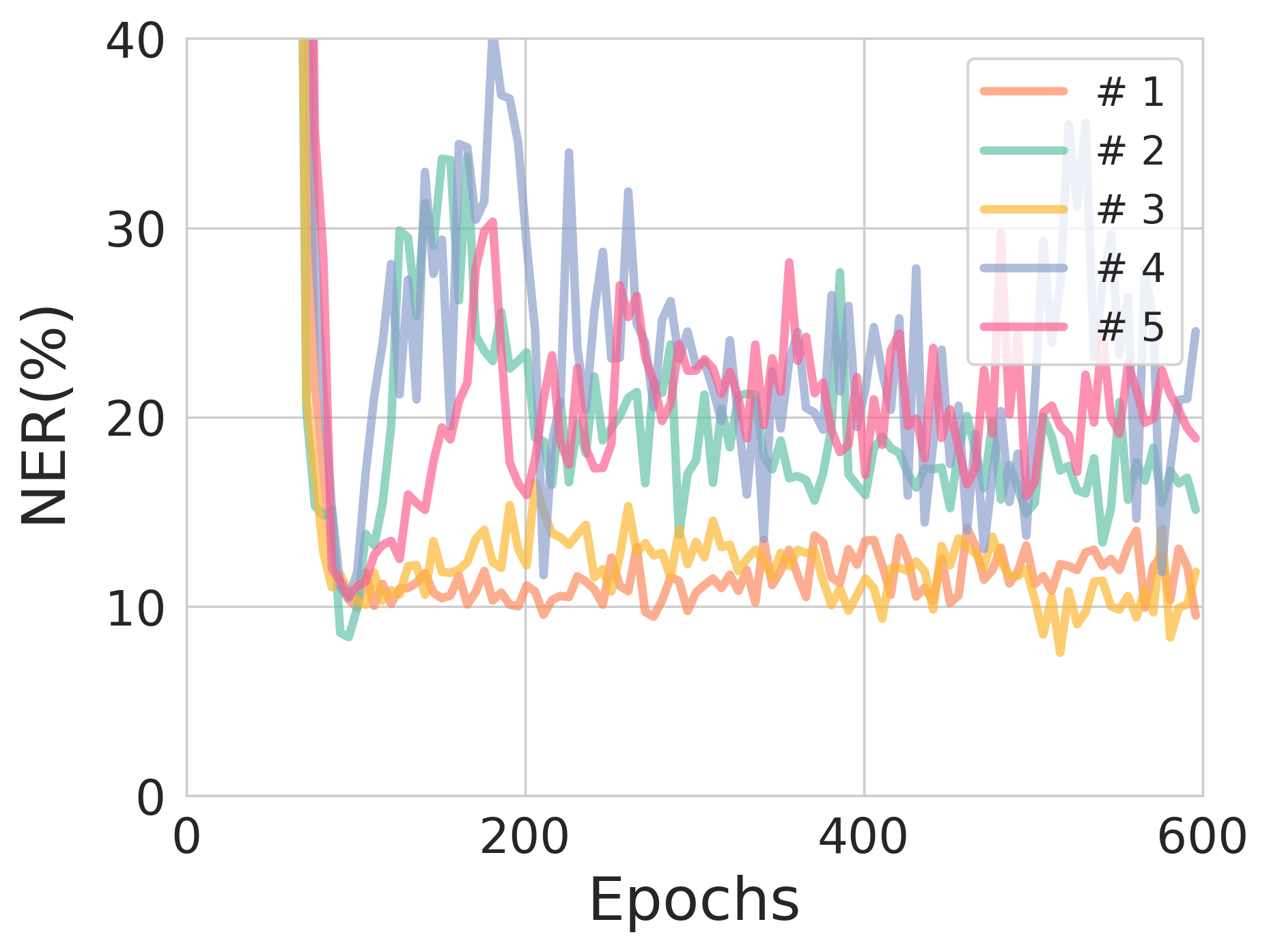}
}%
\caption{The reconstruction loss, codeword entropy, and validation NER 
comparing the Gumbel-Softmax setting against a vanilla softmax approach. 
5 runs were recorded. 
% Each subfigure shows 5 experiment runs.
%    Gumbel-Softmax enhances the discreteness of the codewords and improves the model's consistency 
%    between continuous and discrete modes. 
}
\label{fig:gumbelablation}
\vskip -0in
\end{figure*}

The first column of \cref{fig:gumbelablation} indicates that using disturbance marginally increases the reconstruction loss 
$\mathcal{L}_{\mathrm{CE}}$ in the continuous mode, 
which is expected since Gumbel-Softmax introduces additional noise into the system. 
When comparing the average entropy $\mathcal{H}$ of the learned codeword, 
applying disturbance-based discretization significantly reduces the entropy, 
suggesting that the codewords behave more like one-hot vectors. 
% This lower entropy reflects increased discreteness in the codewords. 
The NER is calculated in the discrete mode by replacing the softmax 
% the Gumbel-Softmax or softmax 
with an $\argmax$ operation on the codewords. 
The third column clearly shows that when codewords are closer to a one-hot style, 
the model is more consistent between the continuous and discrete modes, 
leading to better performance during the testing phase. 

% The hyperparameter optimization of the Gumbel-Softmax temperature $\tau$ 
% is presented in~\cref{app:tau}. 

\subsection{Optimization of hyperparameter temperature $\tau$ in the Gumbel-Softmax formula}\label{app:tau}

To examine the impact of different temperature values $\tau$ in \cref{eqn:gsequation}, 
experiments were conducted with various settings of $\tau \in \{0.25,0.5,1,2,4,8\}$. 
Since the disturbance-based discretization is designed to encourage greater discretization of the codeword, 
the codeword entropy $\mathcal{H}$, as defined in \cref{eqn:entropy}, and the validation NER 
were tracked throughout the training phase

As shown in \cref{fig:gumbelt}, lower temperature ($\tau=0.25$) has an effect in discretization, 
but result in unstable and poor model performance, while higher temperatures ($\tau\in\{2,4,8\}$) 
lead to both poor discretization and high NER. 

\begin{figure*}[htb!]
    \vskip 0.in
    \centering
    \subcaptionbox{$\tau=0.25$}[0.3\linewidth]
    {
        \includegraphics[width=1\linewidth]{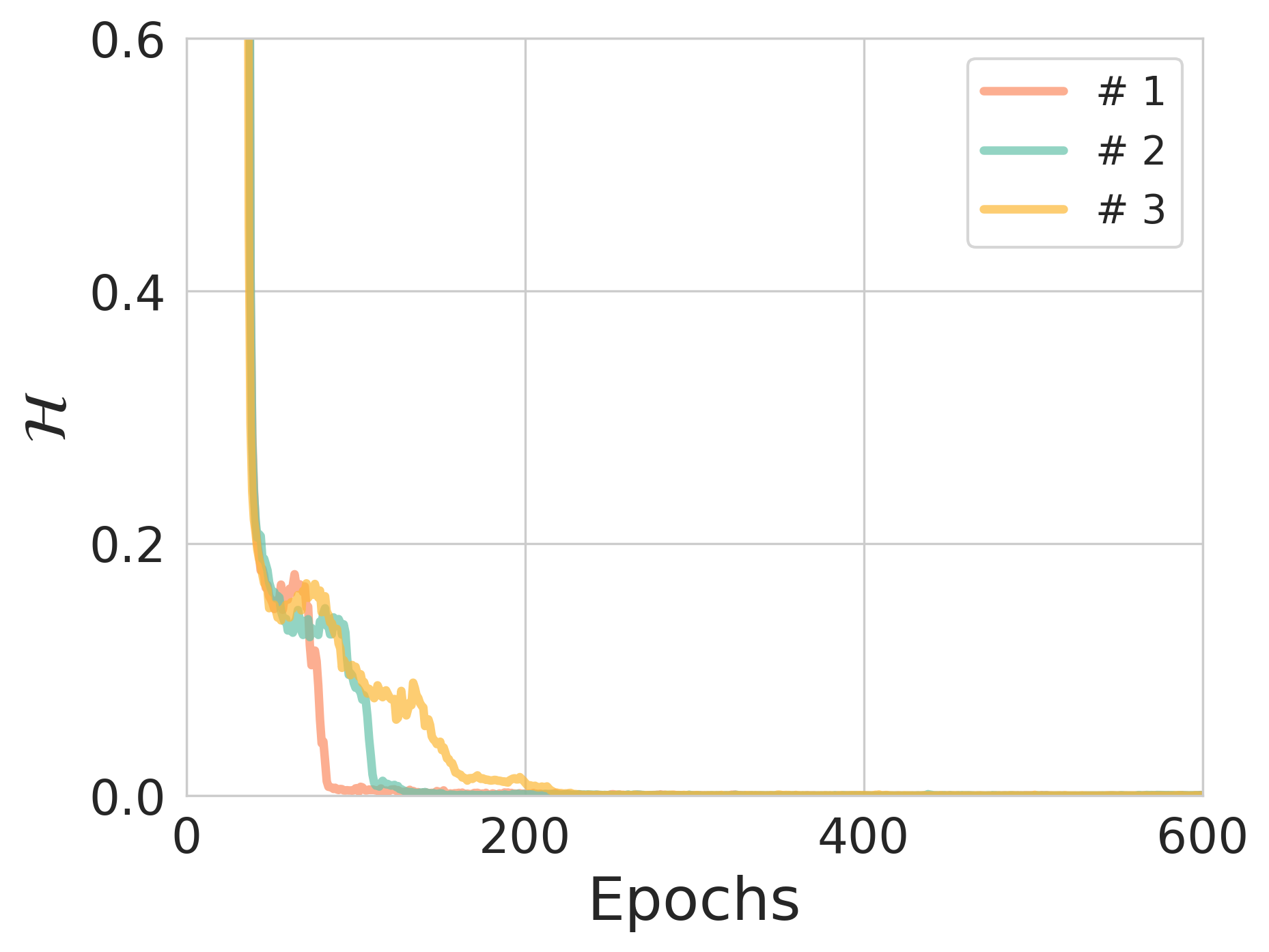}
        \includegraphics[width=1\linewidth]{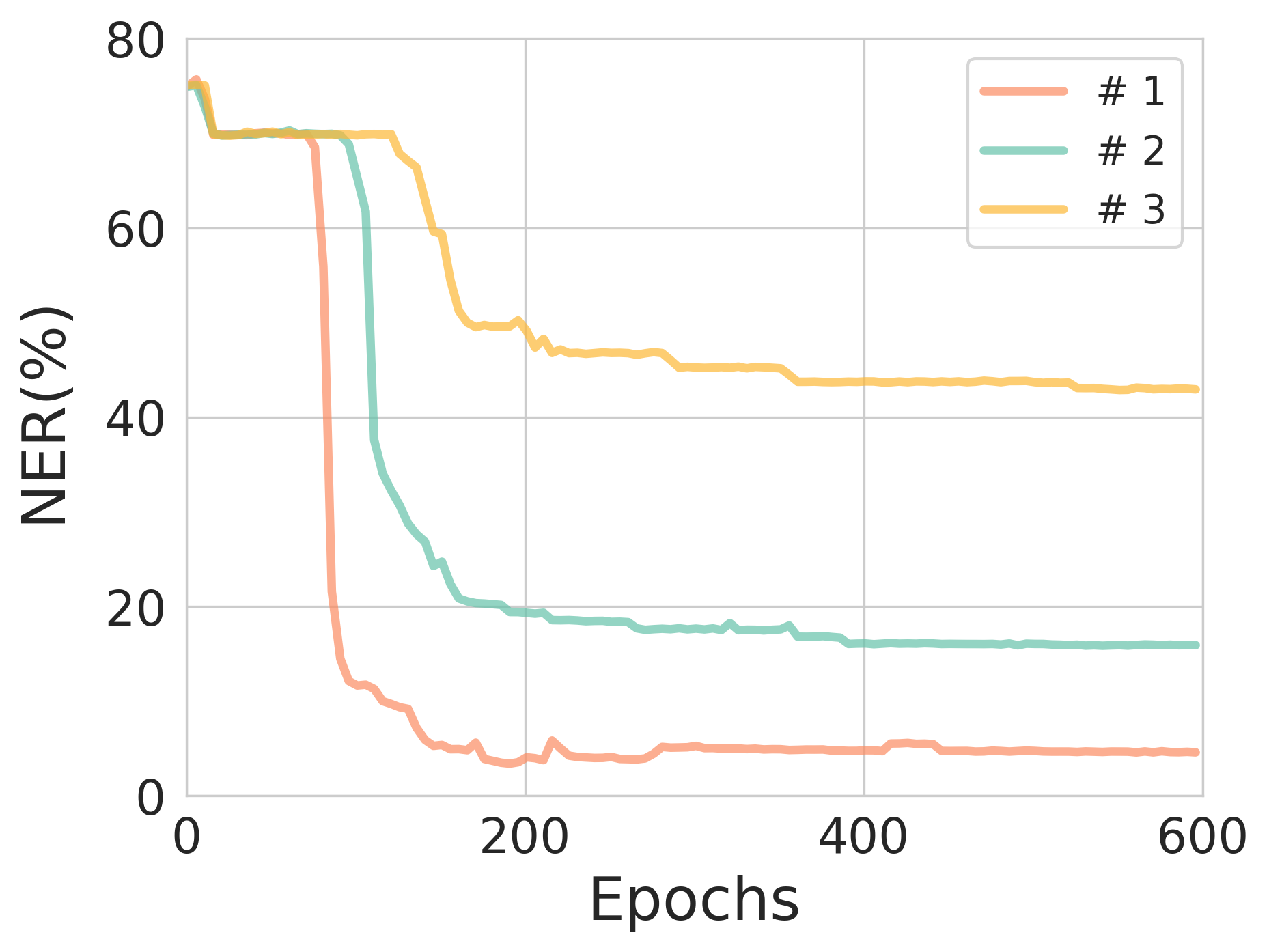}
    }%
    \subcaptionbox{$\tau=0.5$}[0.3\linewidth]
    {
        \includegraphics[width=1\linewidth]{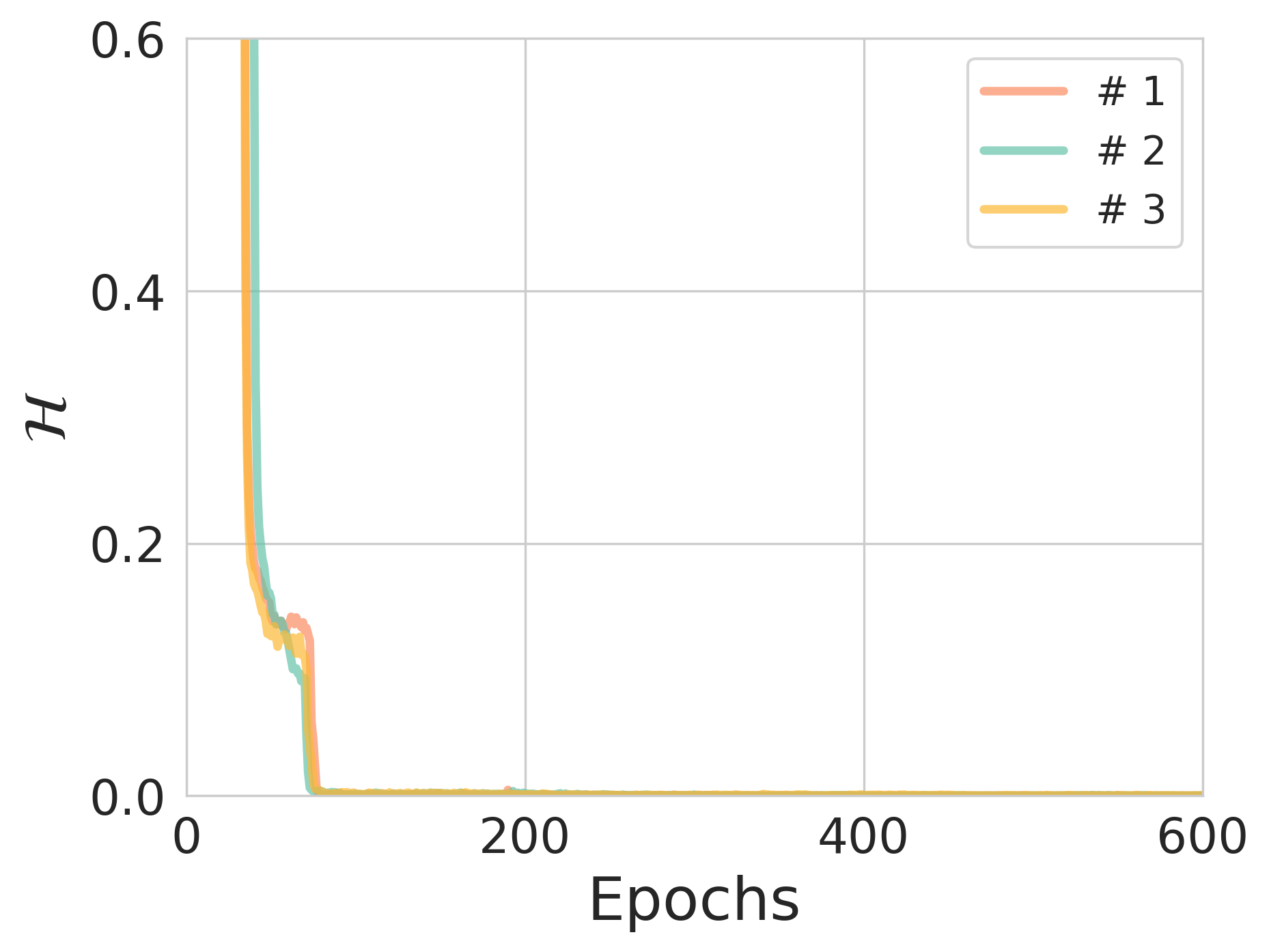}
        \includegraphics[width=1\linewidth]{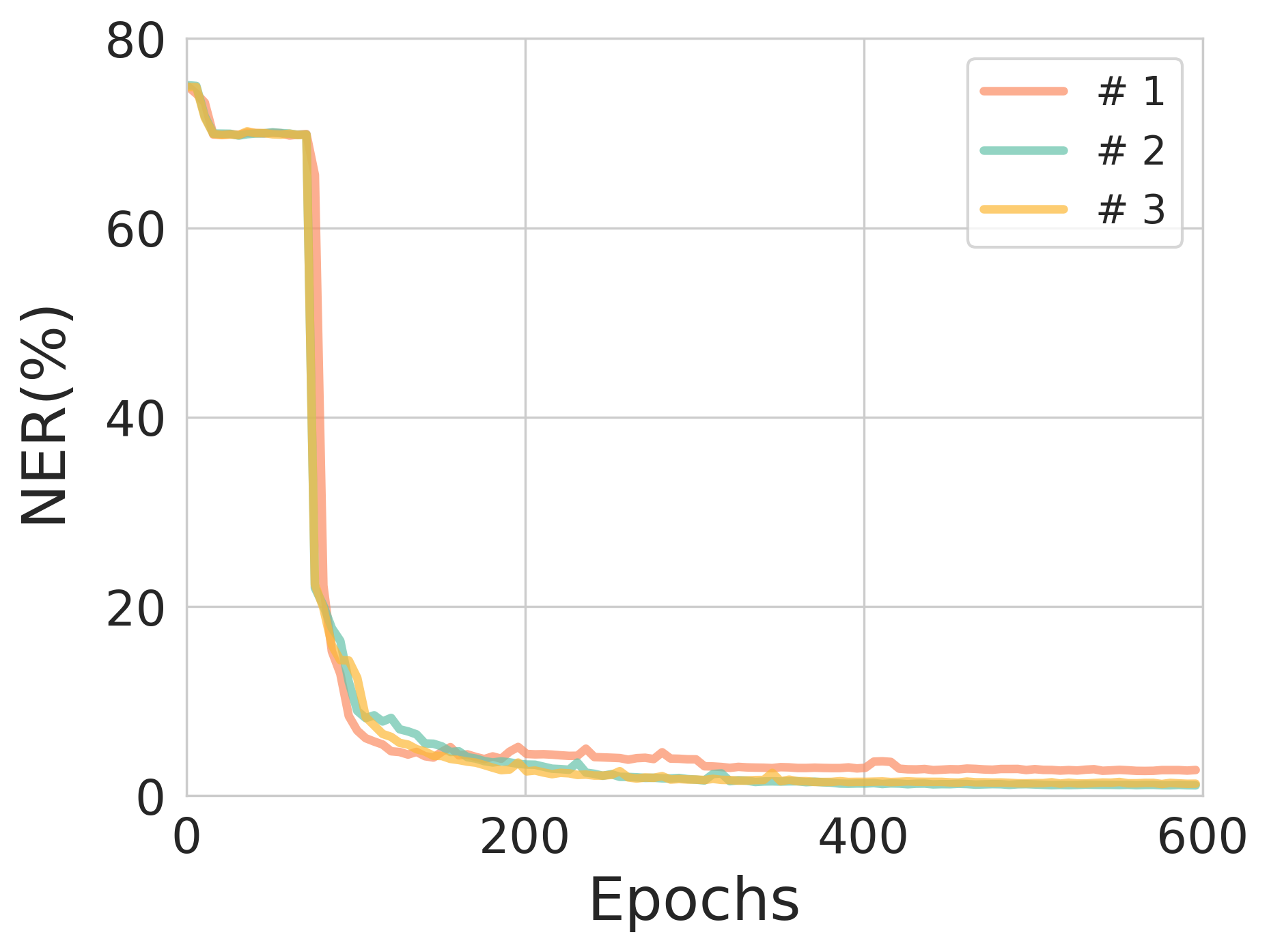}
    }%
    \subcaptionbox{$\tau=1$}[0.3\linewidth]
    {
        \includegraphics[width=1\linewidth]{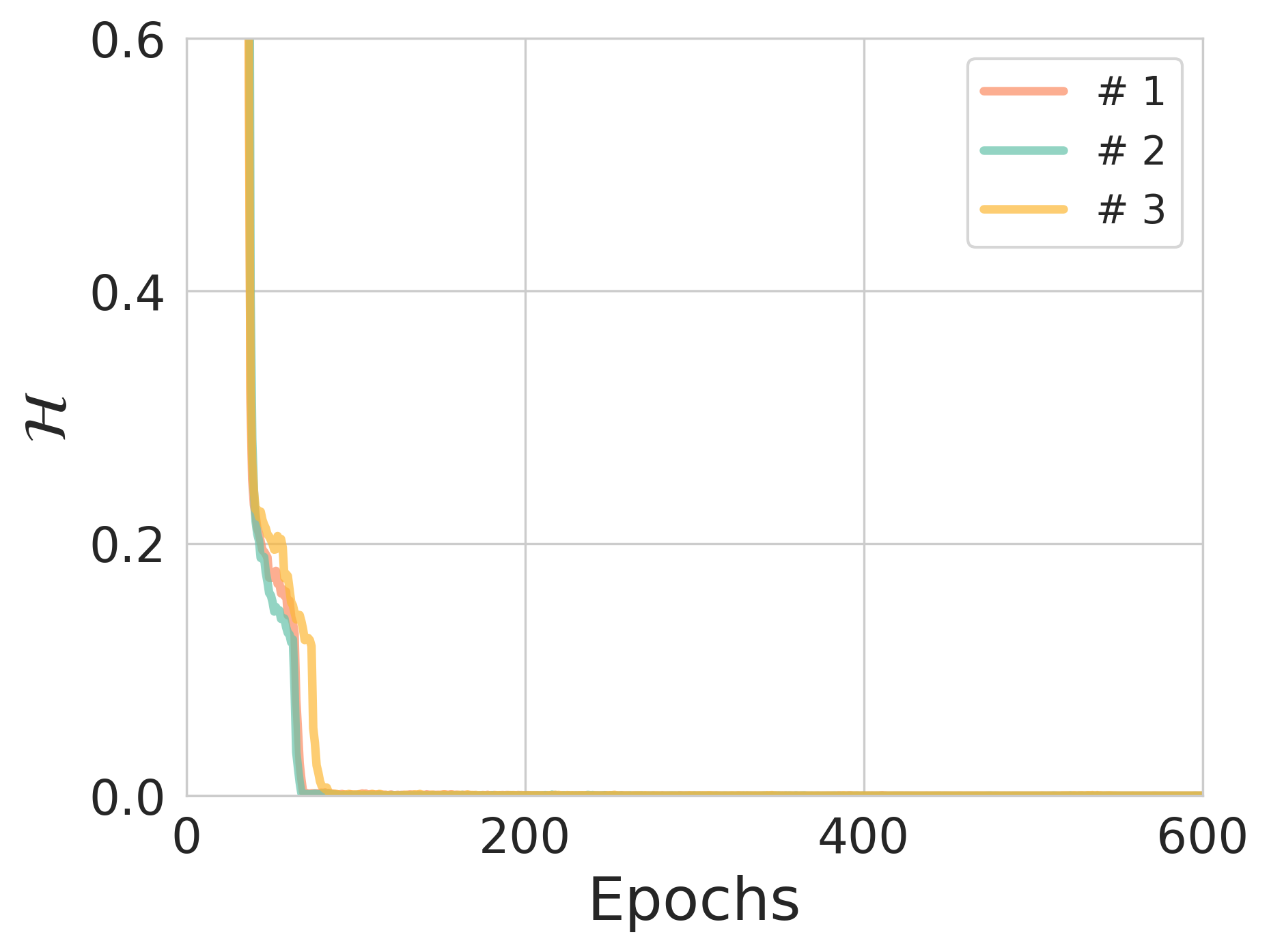}
        \includegraphics[width=1\linewidth]{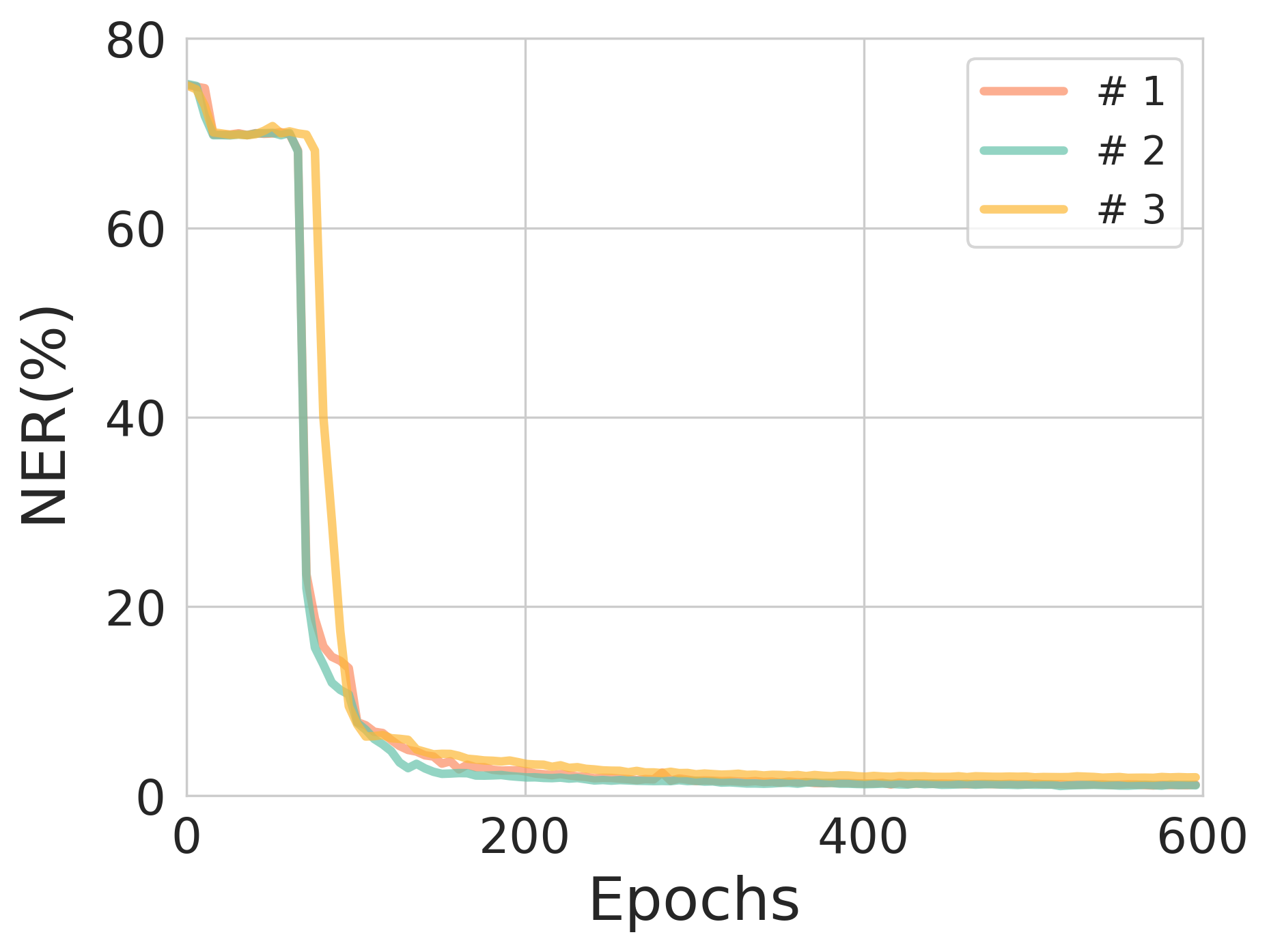}
    }\\
    \subcaptionbox{$\tau=2$}[0.3\linewidth]
    {
        \includegraphics[width=1\linewidth]{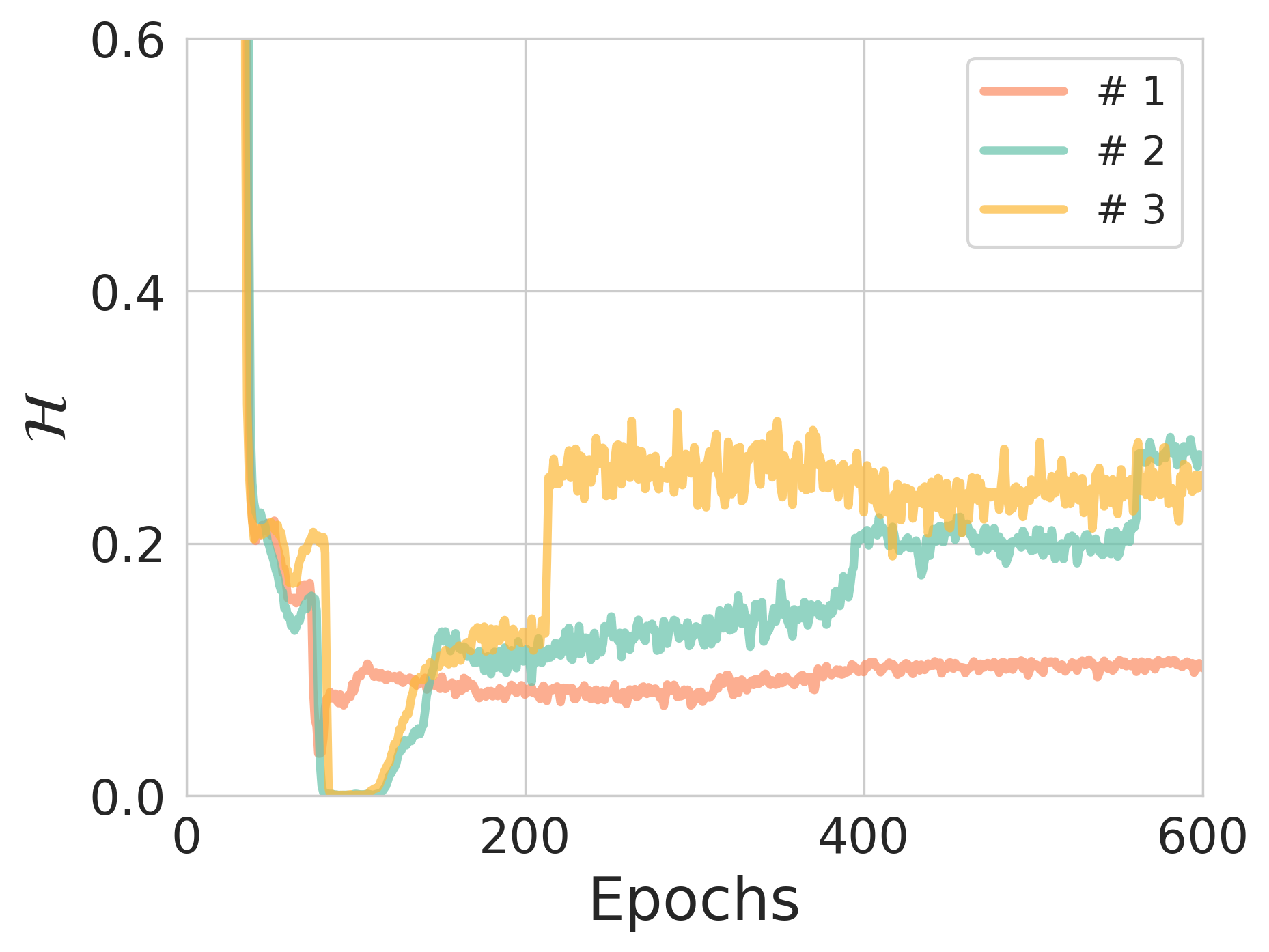}
        \includegraphics[width=1\linewidth]{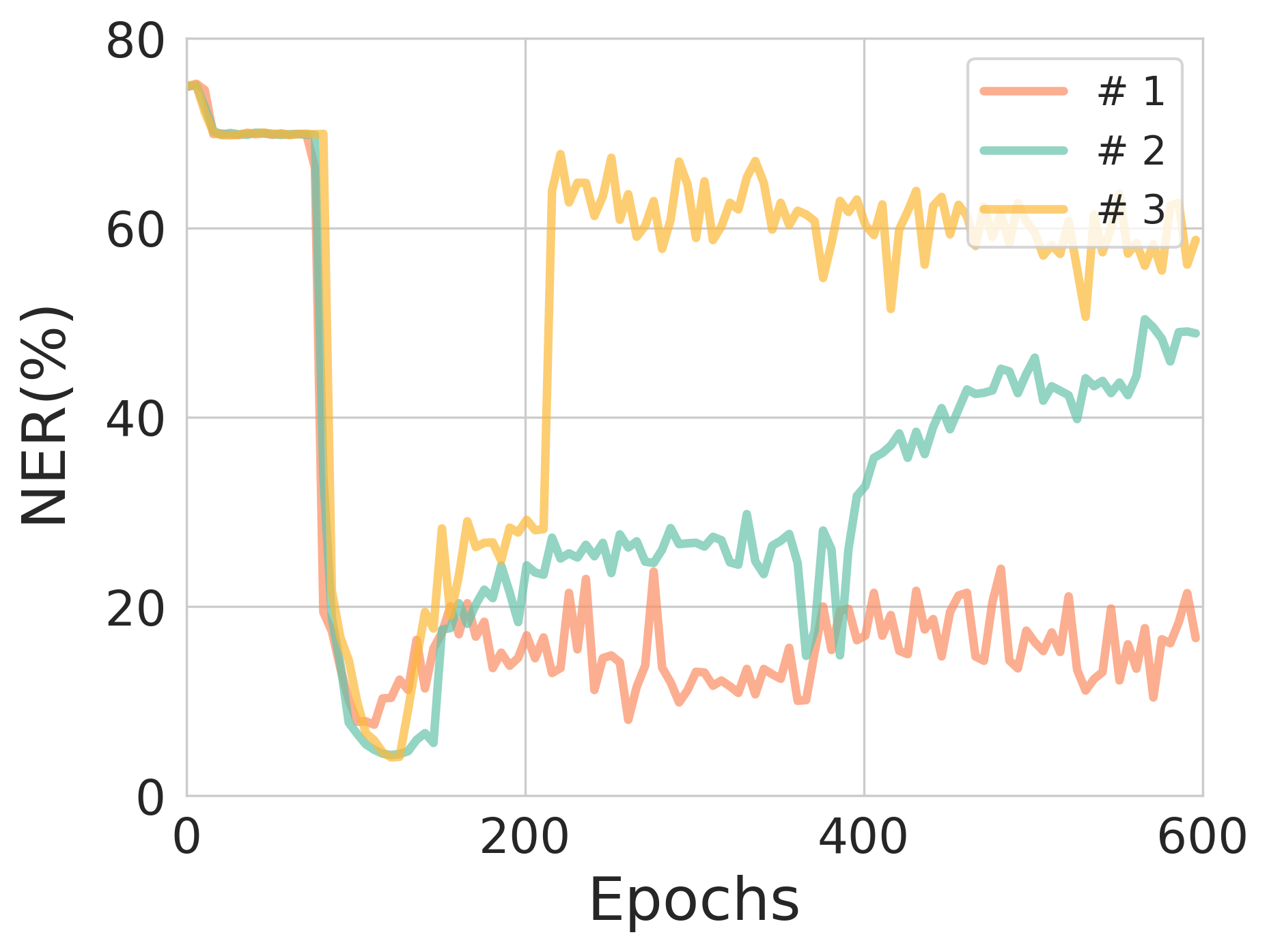}
    }%
    \subcaptionbox{$\tau=4$}[0.3\linewidth]
    {
        \includegraphics[width=1\linewidth]{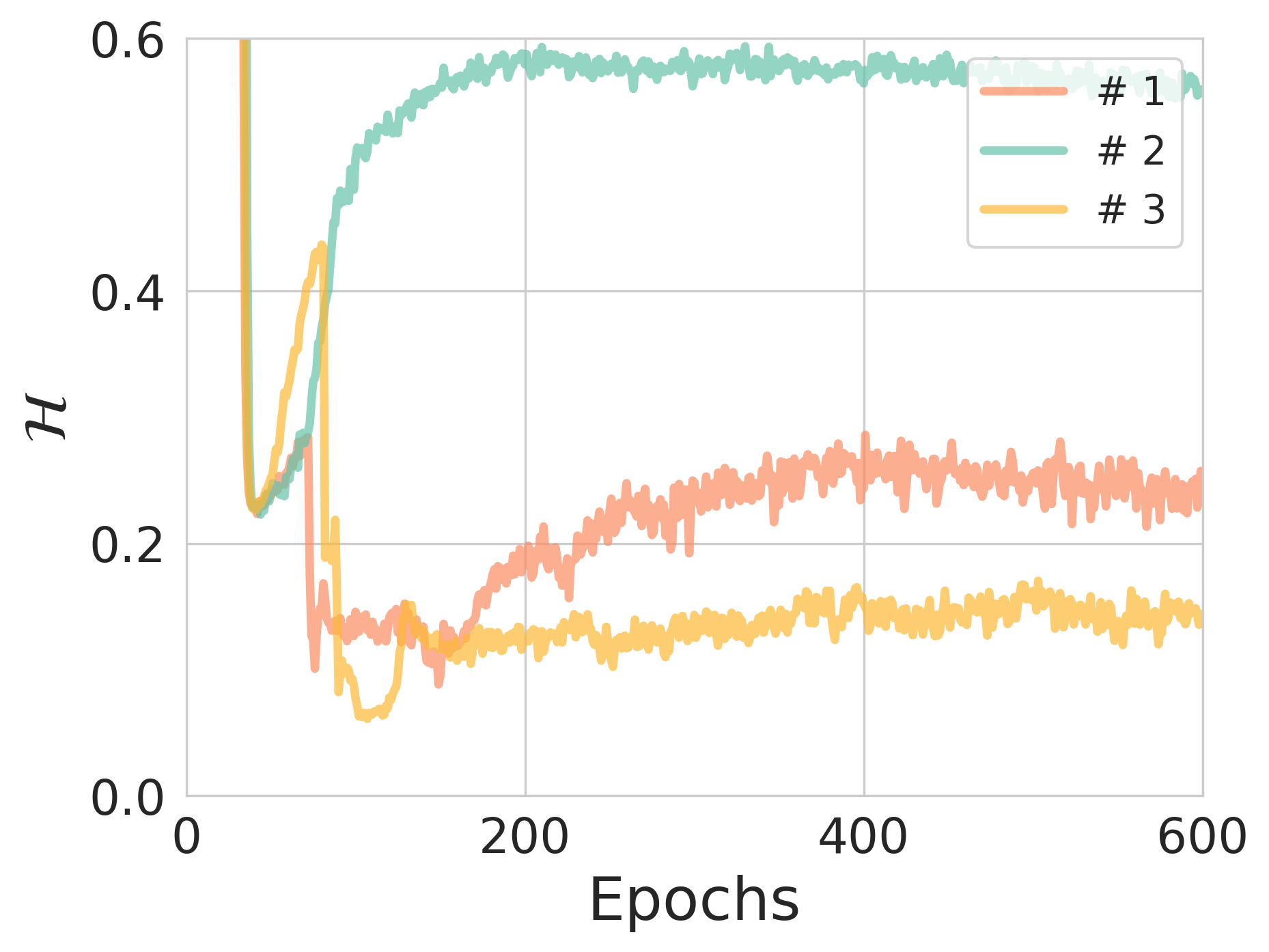}
        \includegraphics[width=1\linewidth]{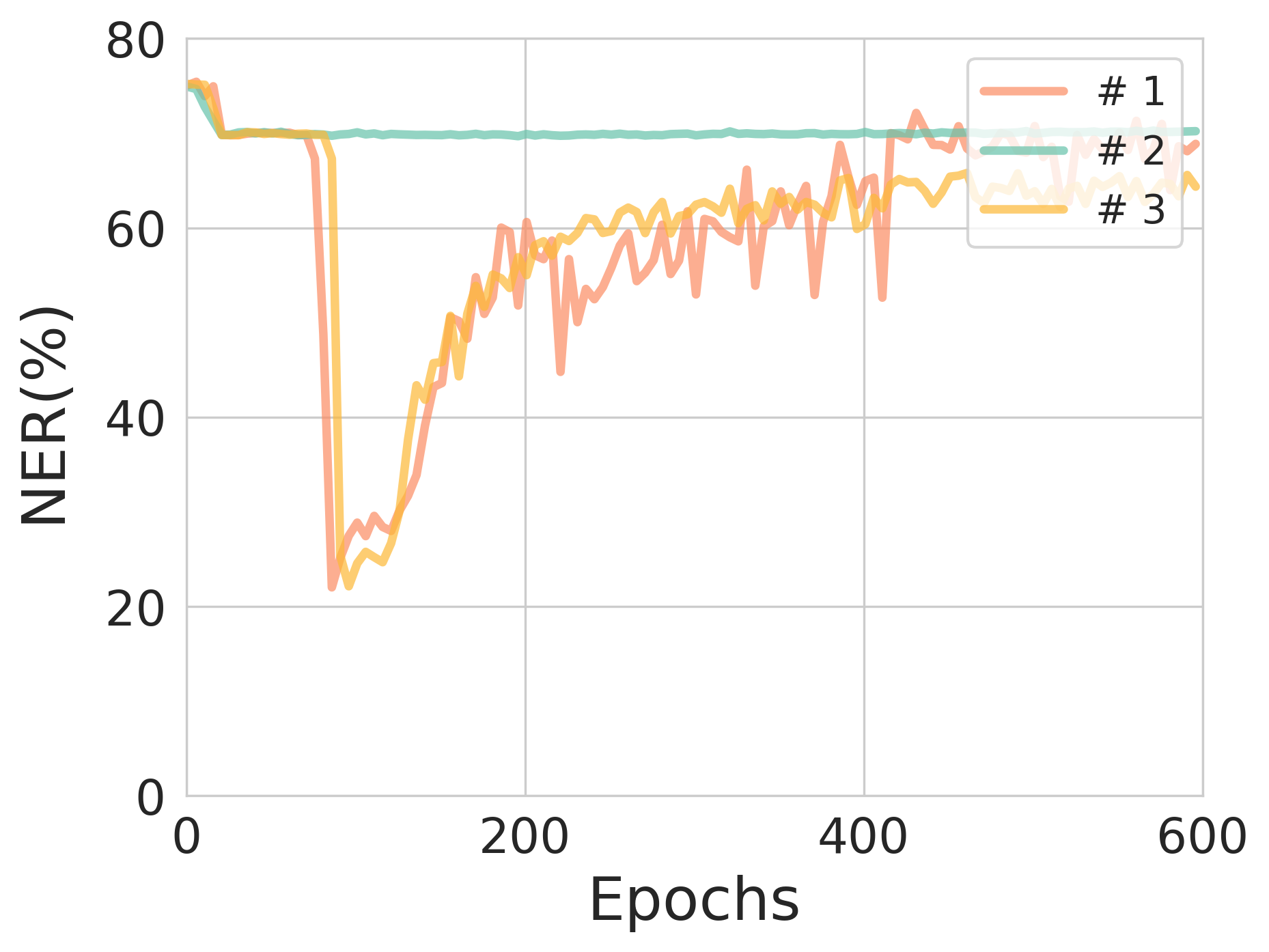}
    }%
    \subcaptionbox{$\tau=8$}[0.3\linewidth]
    {
        \includegraphics[width=1\linewidth]{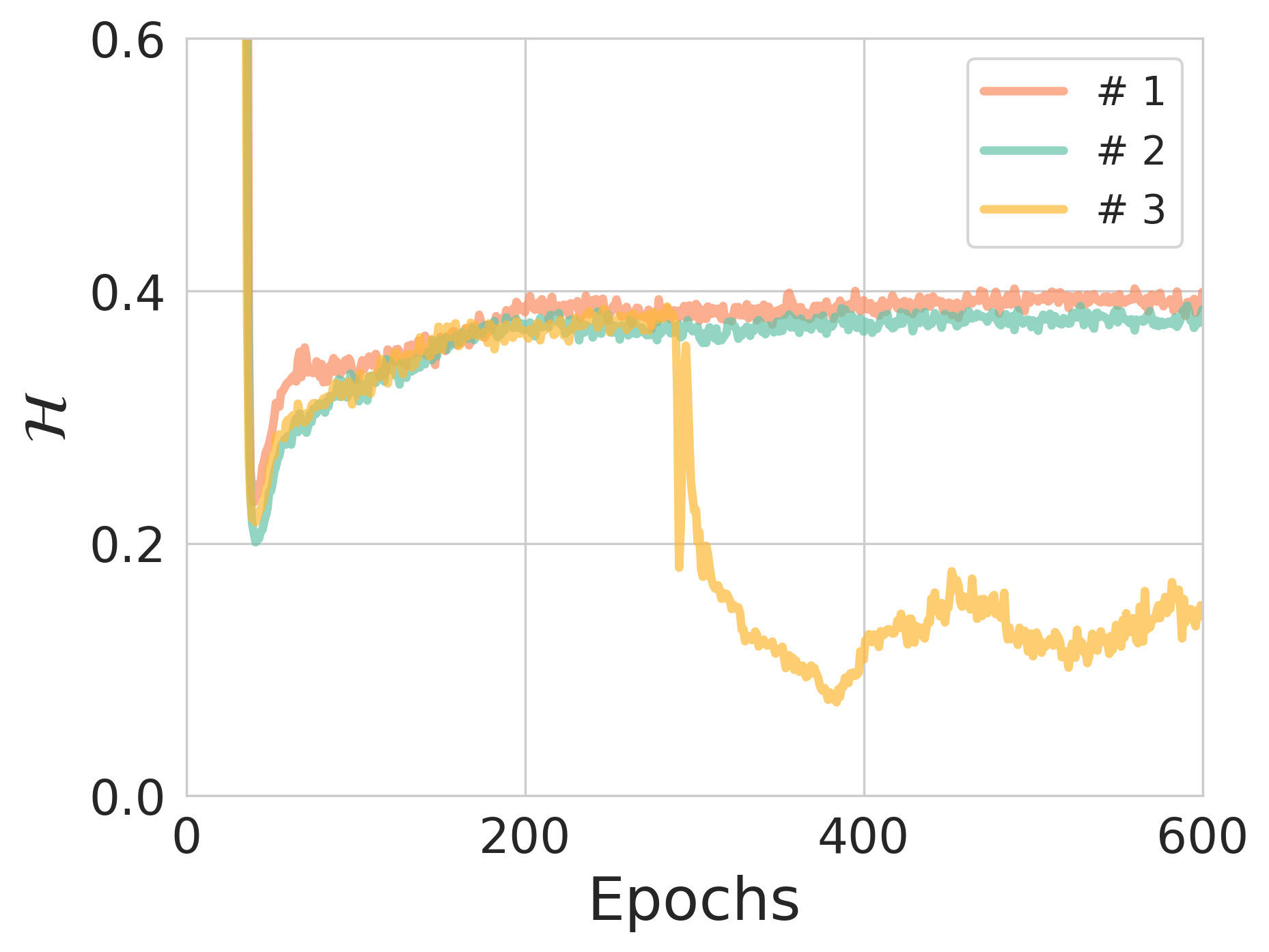}
        \includegraphics[width=1\linewidth]{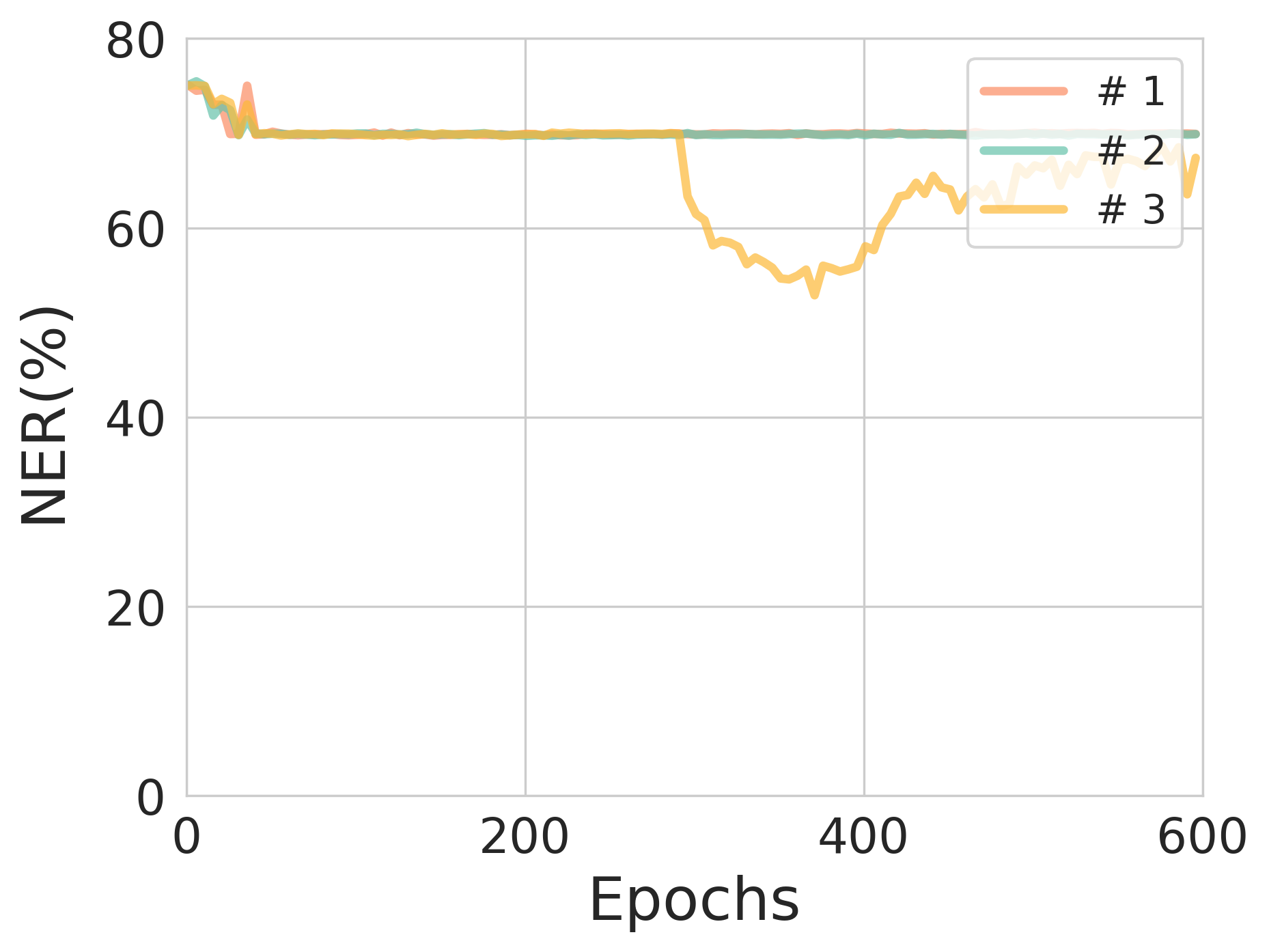}
    }%
    \caption{The codeword entropy $\mathcal{H}$ and 
    the validation NER for various choices of $\tau\in\{0.25,0.5,1,2,4,8\}$. 
    Each curve in the subfigures represents one of the $3$ runs conducted in the experiment 
    and is plotted against the training epochs. 
    }
    \label{fig:gumbelt}
    \vskip -0.in
\end{figure*}

\subsection{Potential alternative: plain entropy constraint}\label{app:alter}
% \subsubsection{}
We explore applying the entropy constraint~\cite{kossentini1993entropy} 
on the codeword $\bm{c}$ in the form of a probability vector. 
This constraint penalizes probability vectors deviating from the one-hot style. 
The Shannon entropy~\cite{shannon1948mathematical} in \cref{eqn:entropy} of a discrete distribution $P(x)$ represents the 
average level of ``information'' associated with the possible outputs of the variable $x$. 
% , 
% and is calculated as 
% follows:
% \begin{equation}\label{eqn:entropy}
%     H(x) = -\sum_{x\in\mathcal{X}} P(x)\log{P(x)}. 
% \end{equation}
It is easy to verify that \cref{eqn:entropy} is non-negative and 
equals zero only when the random variable produces a certain output. 
% This condition is 
% equivalent to the probability vector being identical to some one-hot vector. 
Considering this, the entropy constraint is defined on the codeword 
$\bm{c}=(\bm{c}_1,\bm{c}_2,\ldots,\bm{c}_k)$ 
as follows
% \footnote{We use the notation of a loss function for the entropy constraint 
% because it is integrated into the overall loss during training.}:
\begin{equation}\label{eqn:lossce}
    \mathcal{L}_{\mathrm{EN}}(\bm{c}) = -\sum_{i}\sum_j c_{ij}\log{c_{ij}}.
\end{equation}

In the following experiments, the \cref{eqn:lossce} is integrated into the 
overall optimization objective with a weighting parameter $\lambda$, 
and the disturbance-based discretization is removed from the framework. 

% \subsection{Optimization of hyperparameter $\lambda$}
% The hyperparameter $\lambda$ dictates the weight of the entropy constraint in the overall loss. 
Imposing the entropy constraint on the codewords compels them to adopt a one-hot style 
from a probability vector style, which is a double-edged sword. 
This approach, compared to plain quantization, preserves the capability of gradient propagation. 
Moreover, 
codewords that closely resemble a one-hot representation mitigate the domain difference between the decoder's input 
during the training and the testing phases. 

\begin{figure*}[htb!]
    \vskip 0.in
    \centering
    \subcaptionbox{$\lambda=0$}[0.2\linewidth]
    {
        \includegraphics[width=1\linewidth]{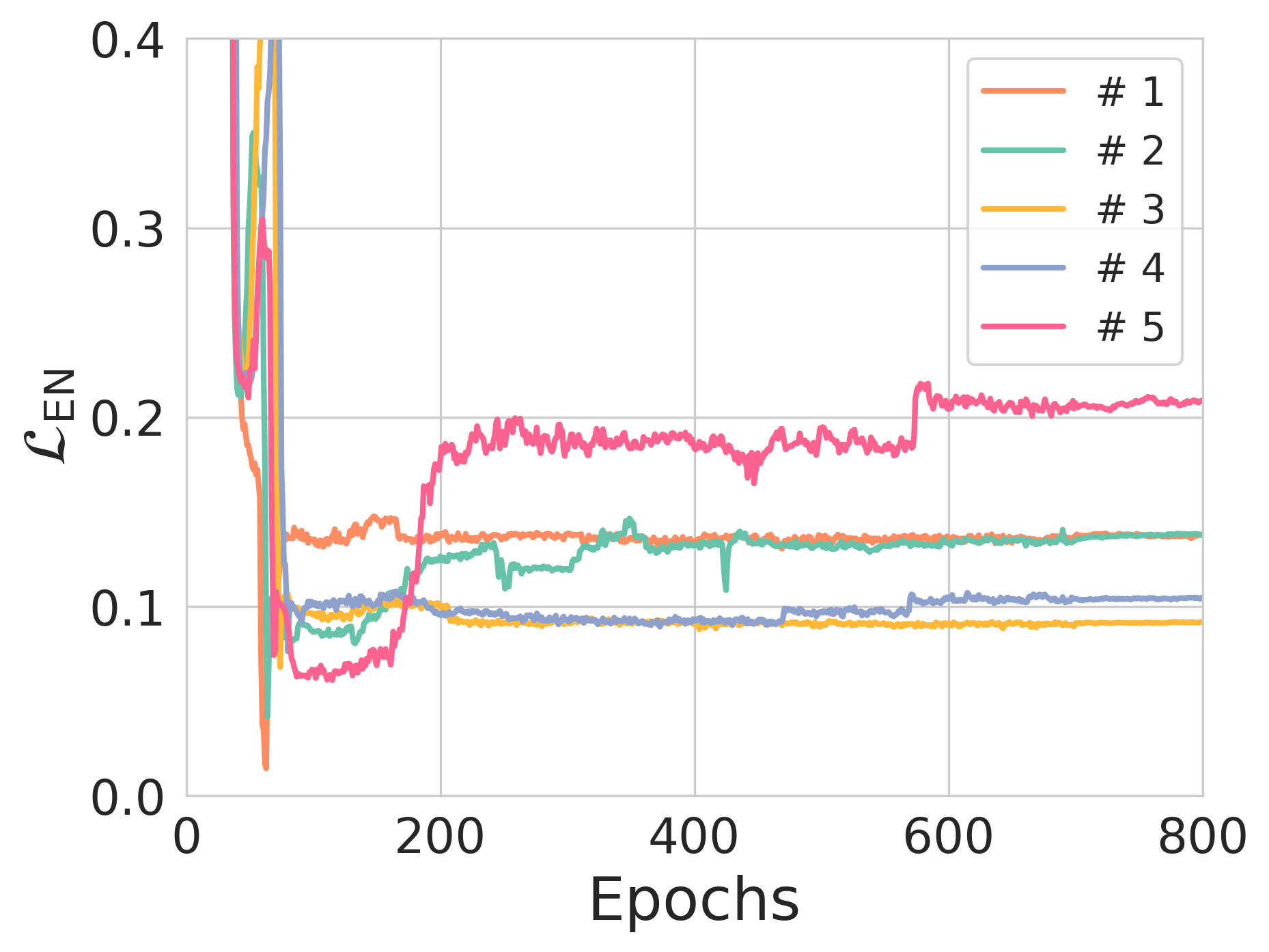}
        \includegraphics[width=1\linewidth]{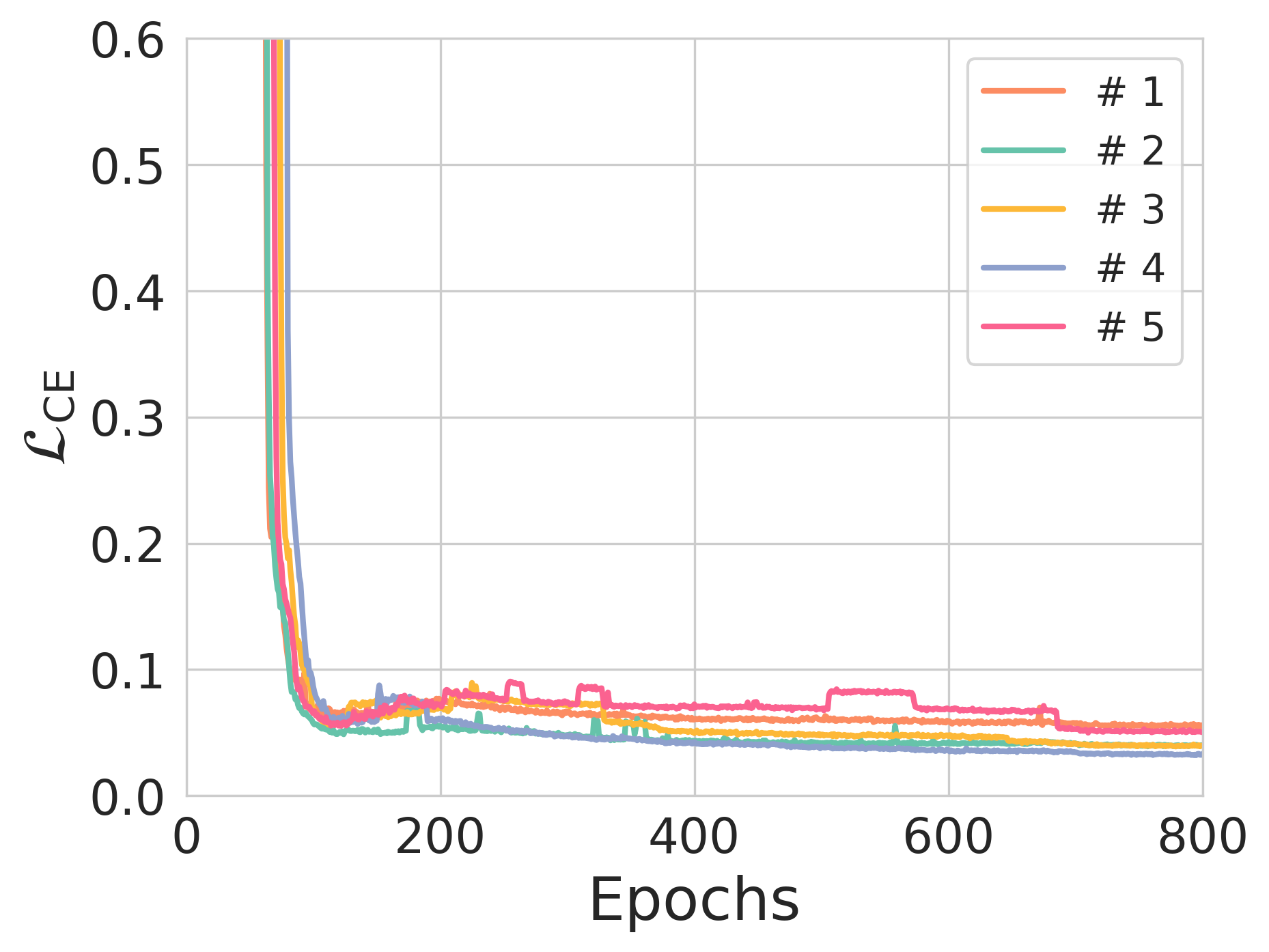}
        \includegraphics[width=1\linewidth]{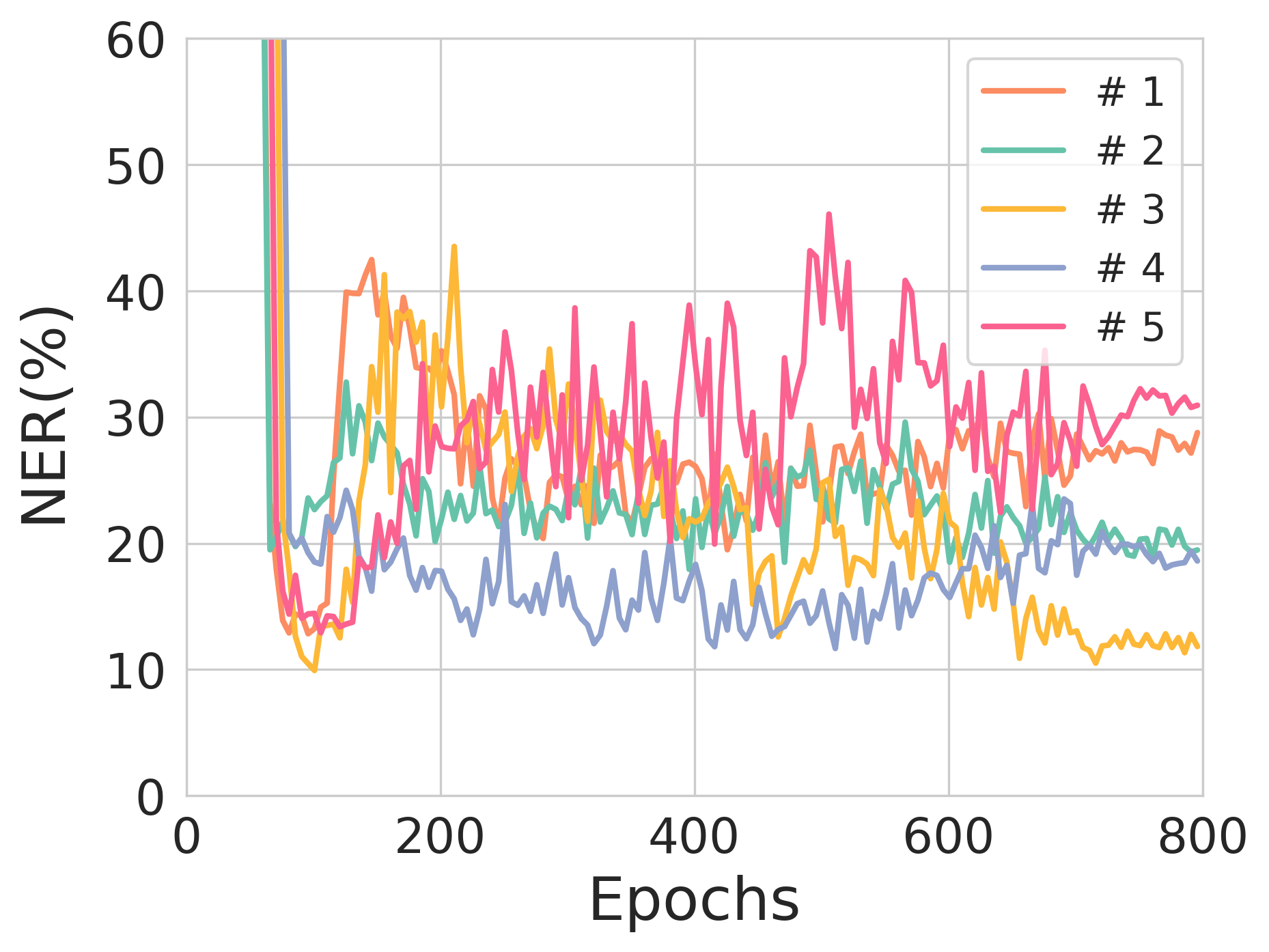}
    }%
    \subcaptionbox{$\lambda=0.001$}[0.2\linewidth]
    {
        \includegraphics[width=1\linewidth]{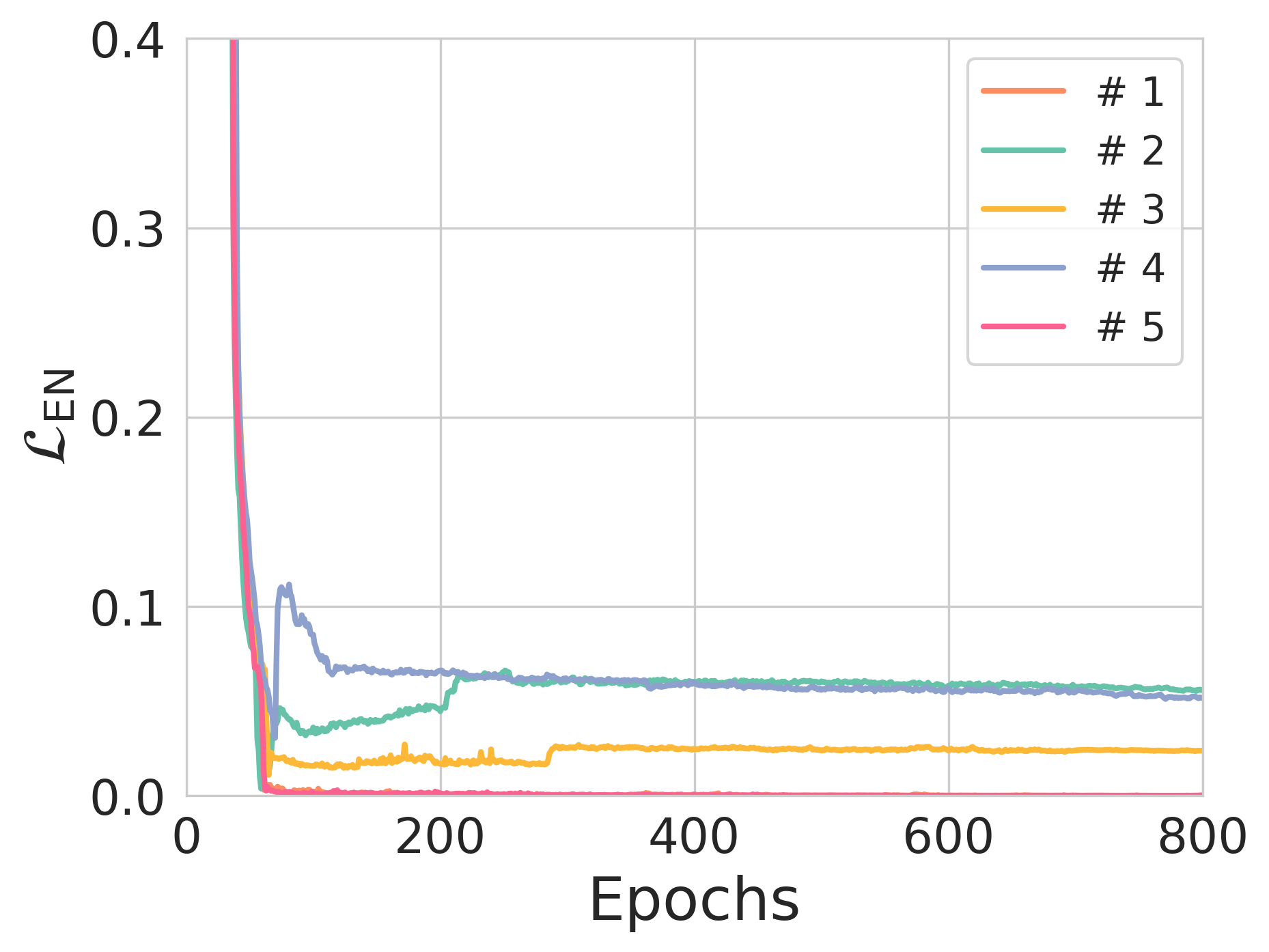}
        \includegraphics[width=1\linewidth]{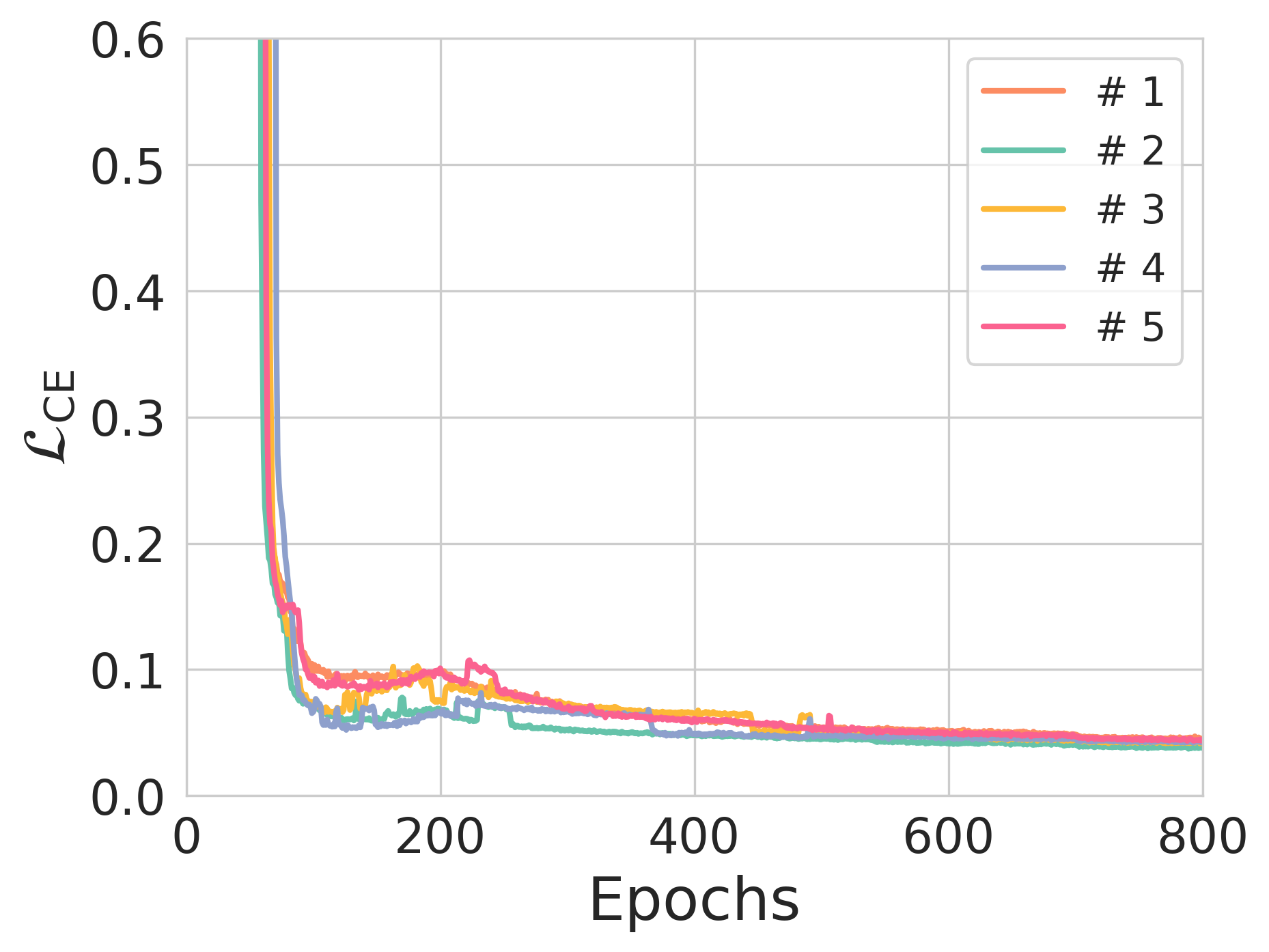}
        \includegraphics[width=1\linewidth]{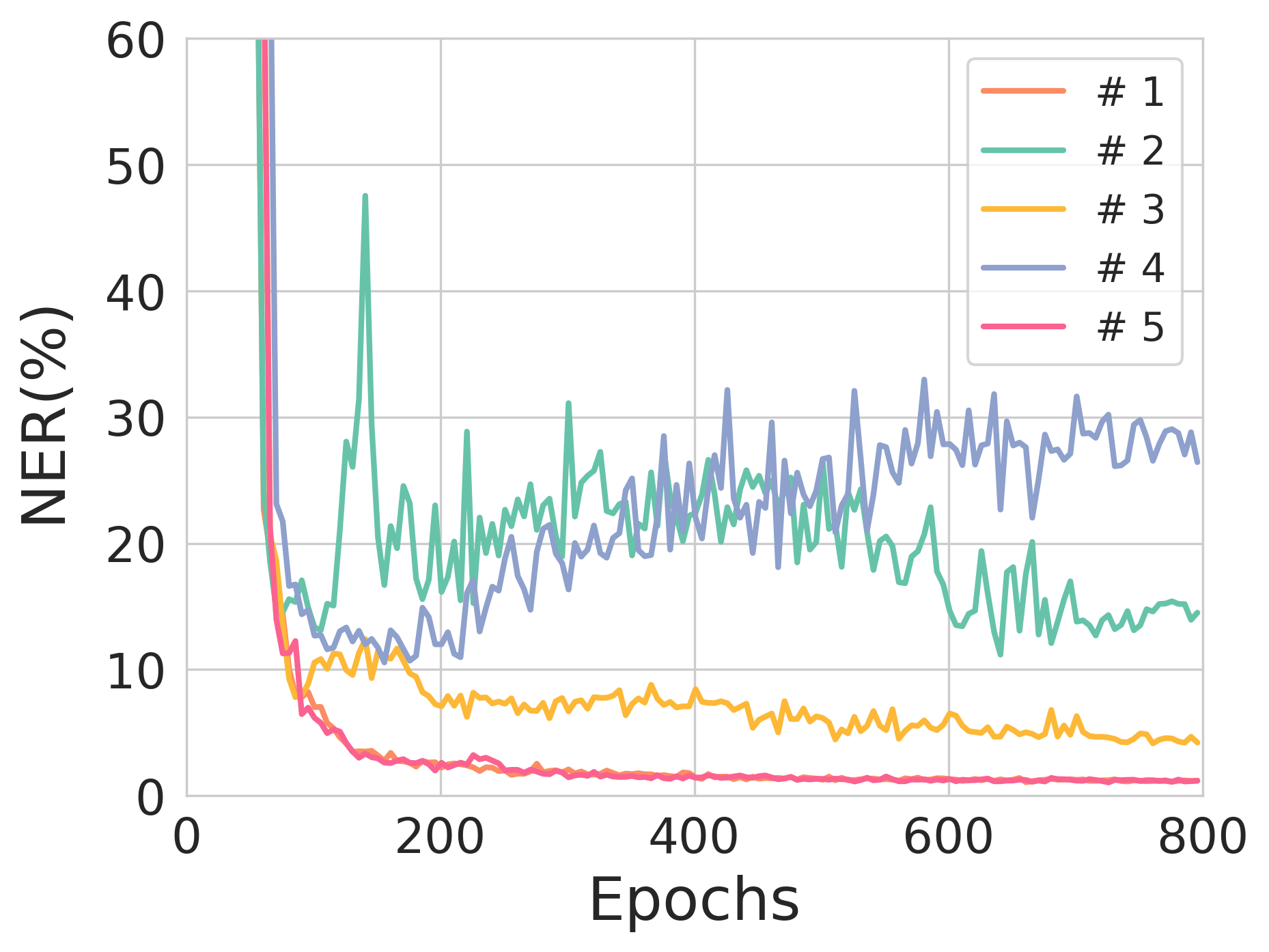}
    }%
    \subcaptionbox{$\lambda=0.01$}[0.2\linewidth]
    {
        \includegraphics[width=1\linewidth]{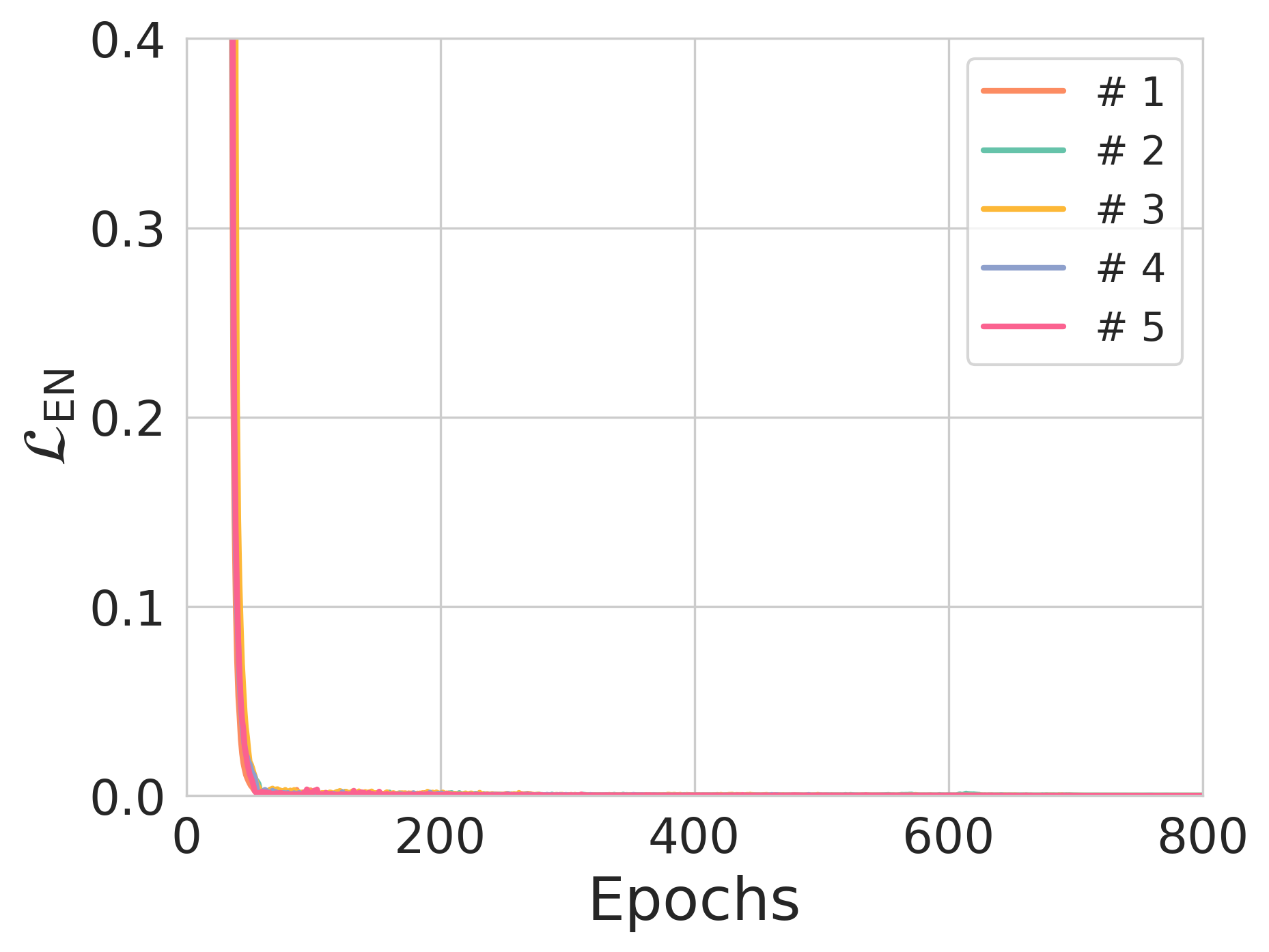}
        \includegraphics[width=1\linewidth]{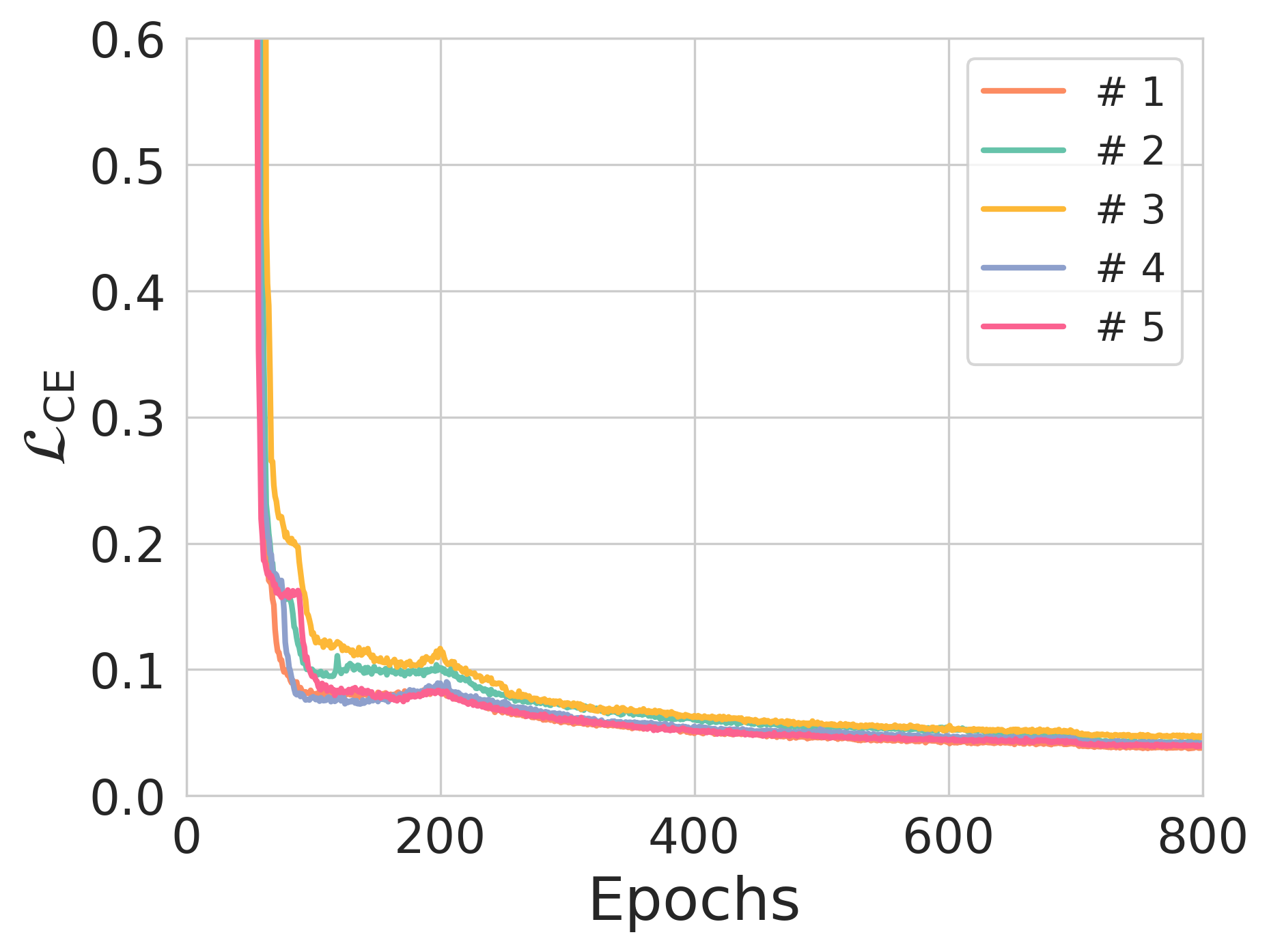}
        \includegraphics[width=1\linewidth]{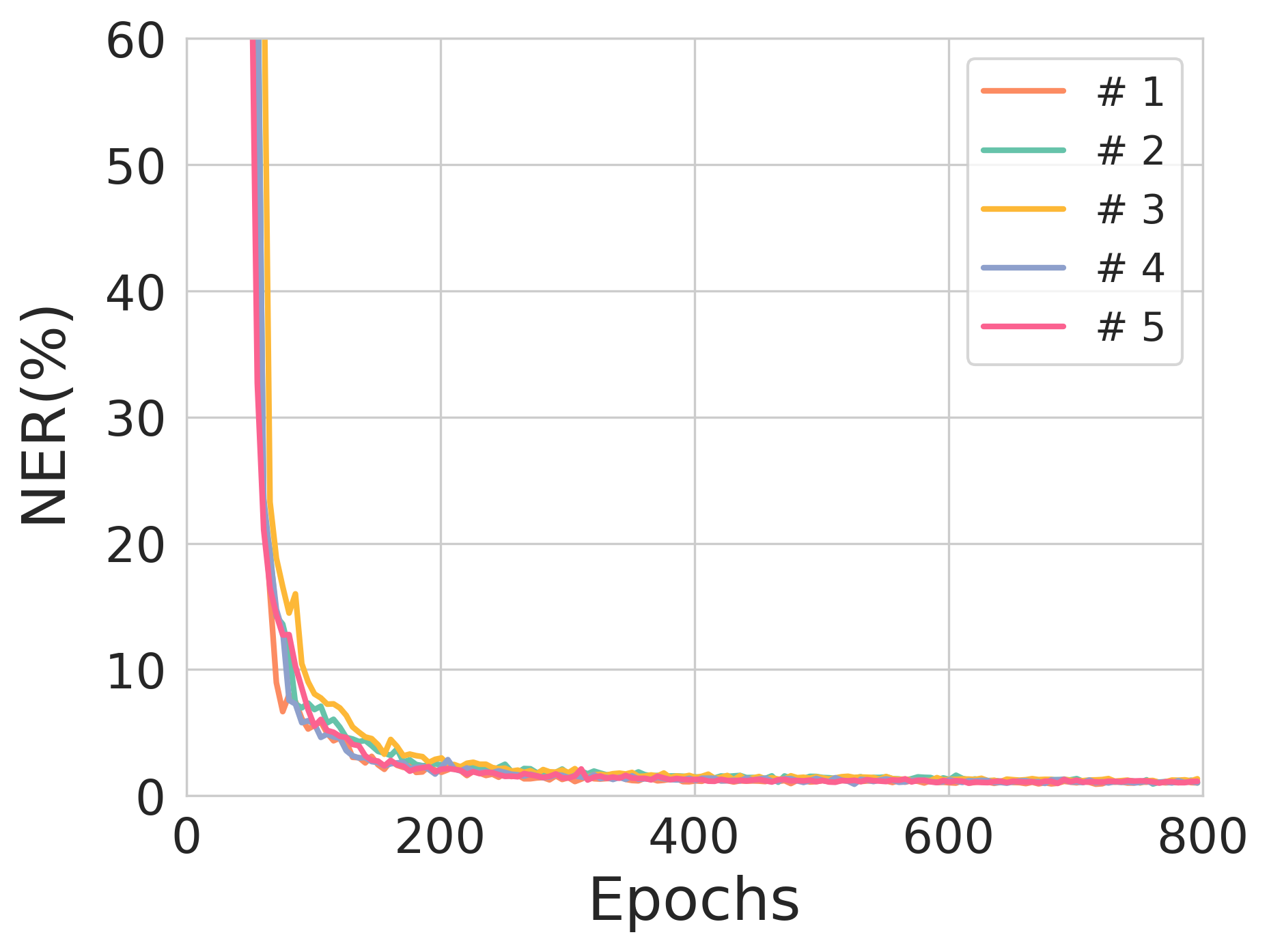}
    }%
    \subcaptionbox{$\lambda=0.1$}[0.2\linewidth]
    {
        \includegraphics[width=1\linewidth]{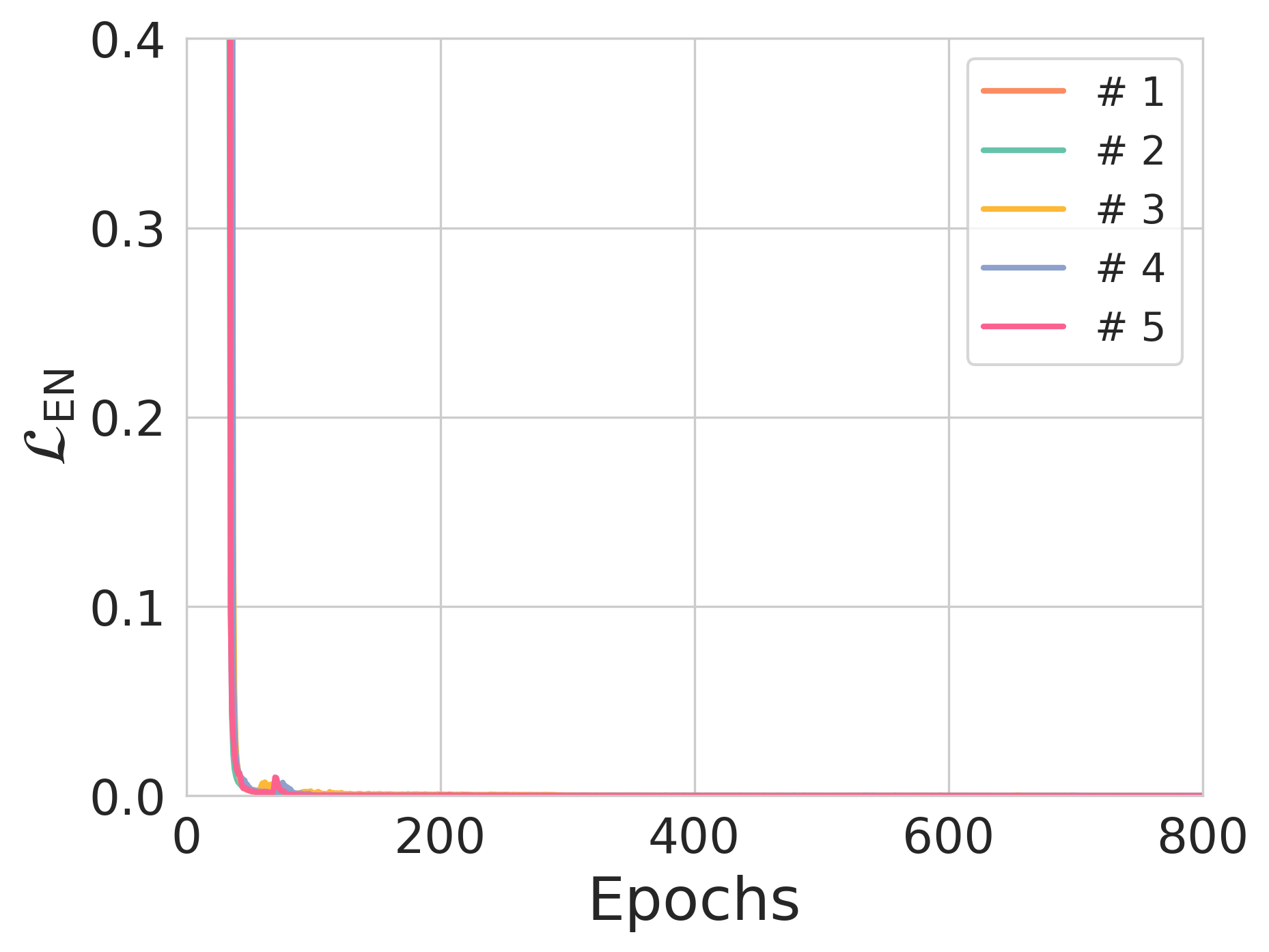}
        \includegraphics[width=1\linewidth]{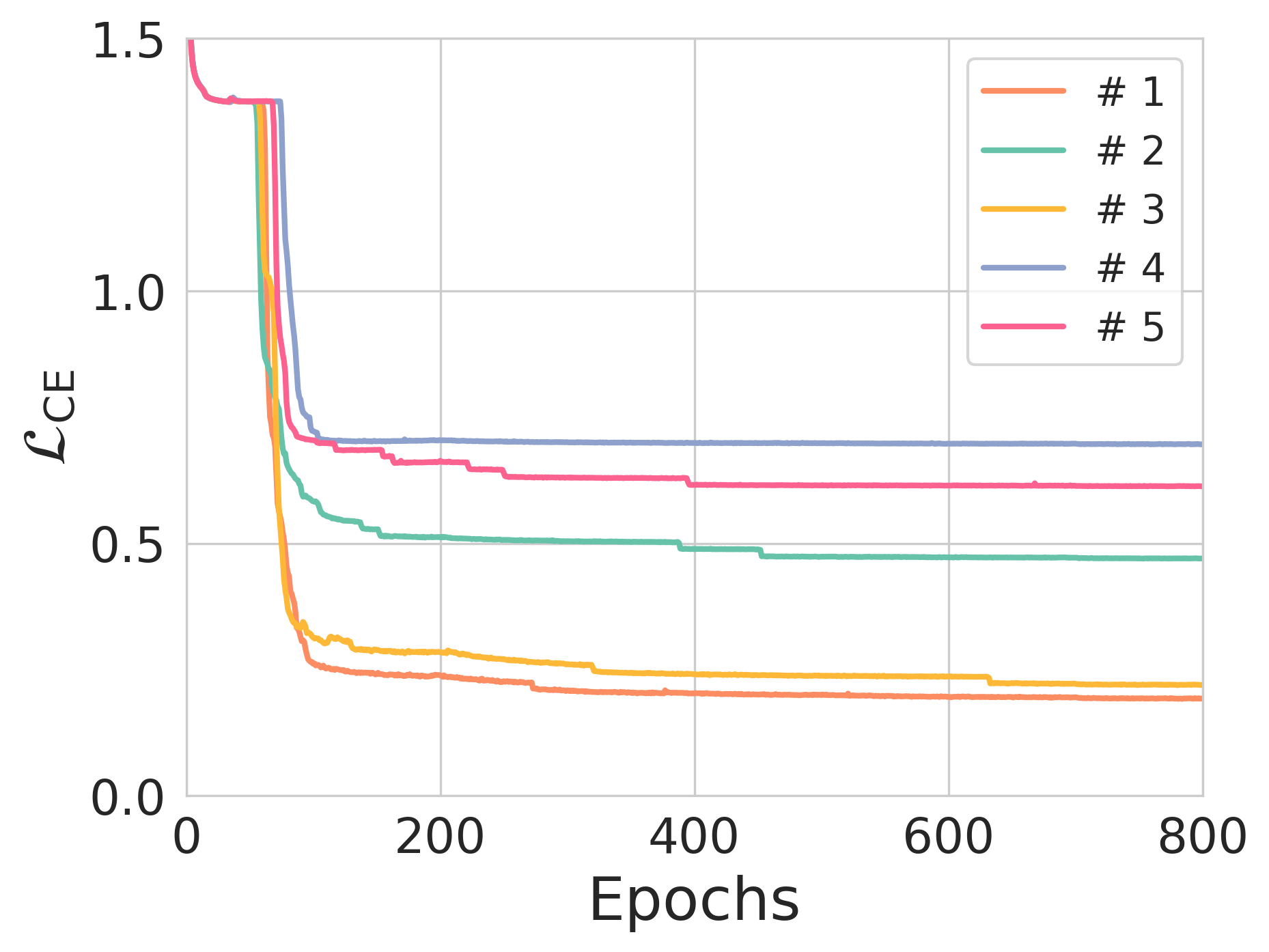}
        \includegraphics[width=1\linewidth]{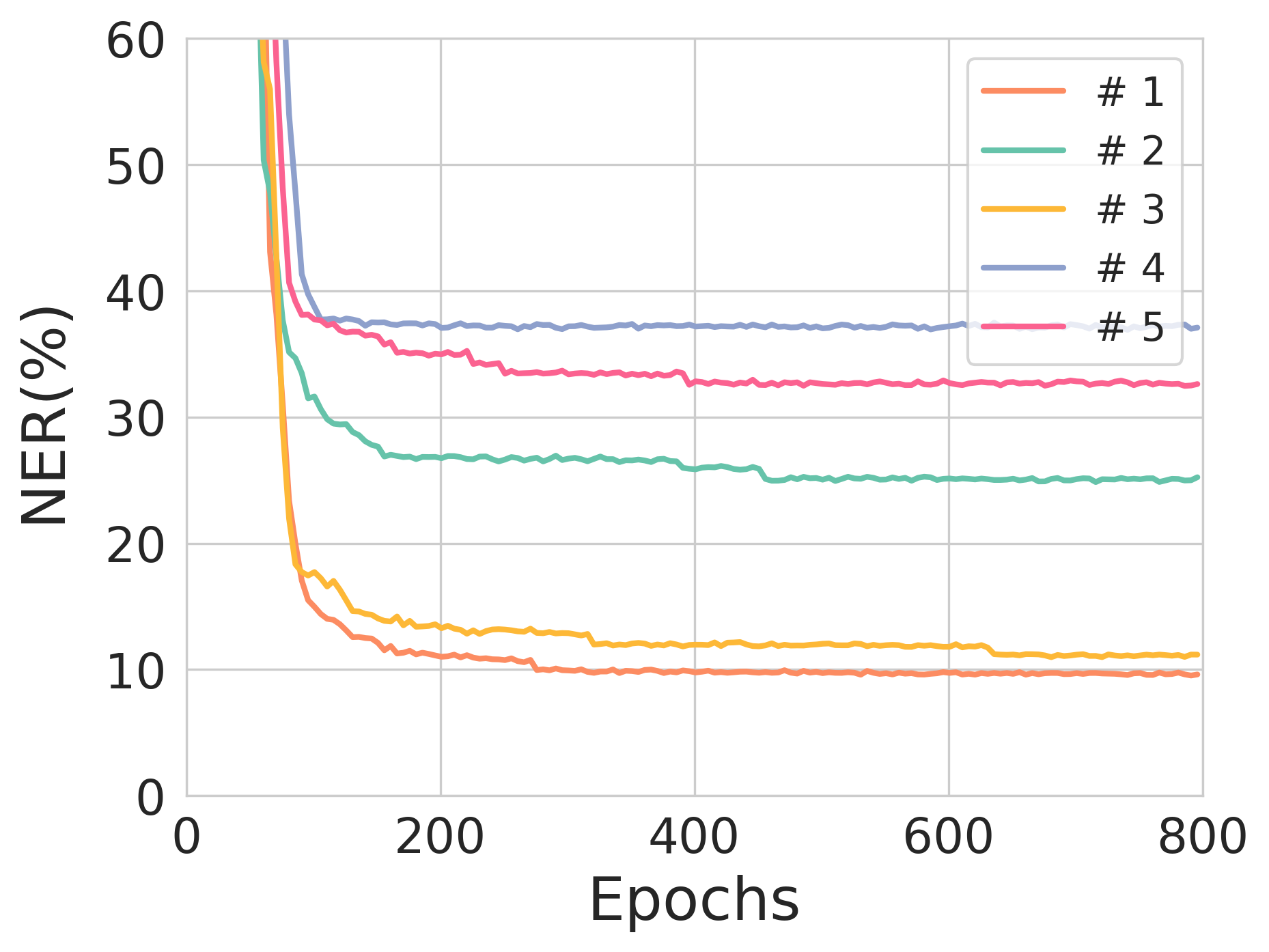}
    }%
    \subcaptionbox{$\lambda=1$}[0.2\linewidth]
    {
        \includegraphics[width=1\linewidth]{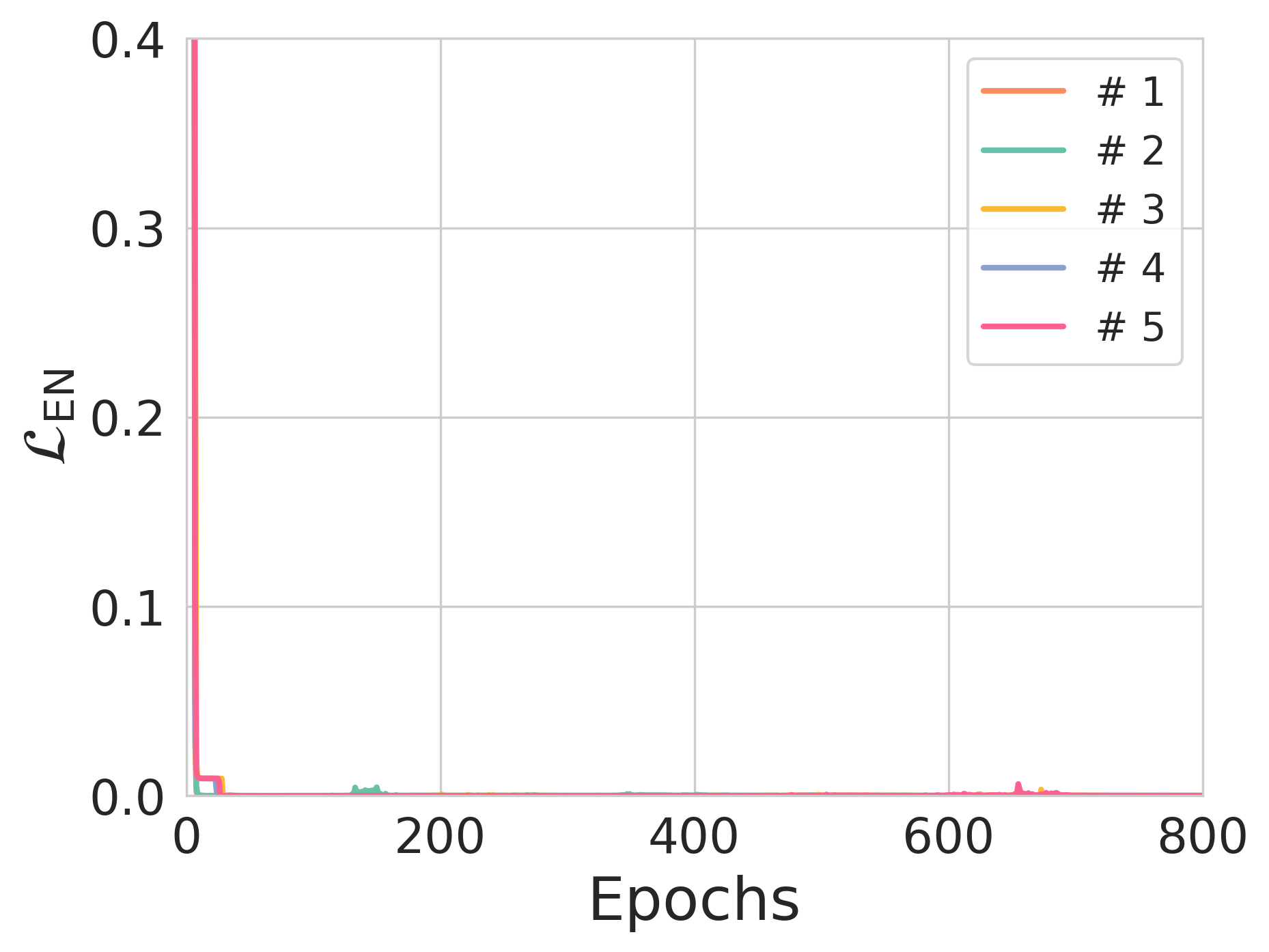}
        \includegraphics[width=1\linewidth]{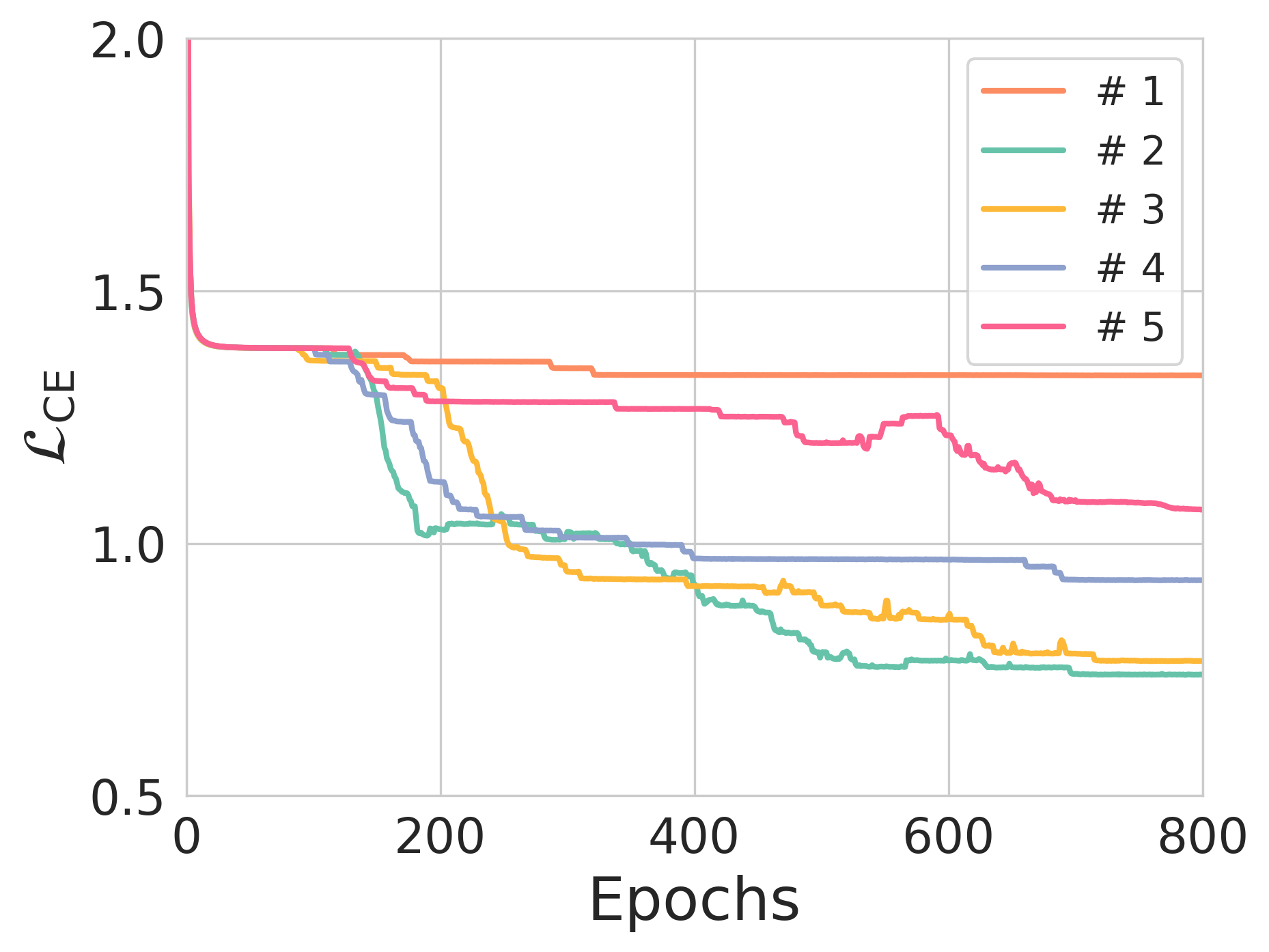}
        \includegraphics[width=1\linewidth]{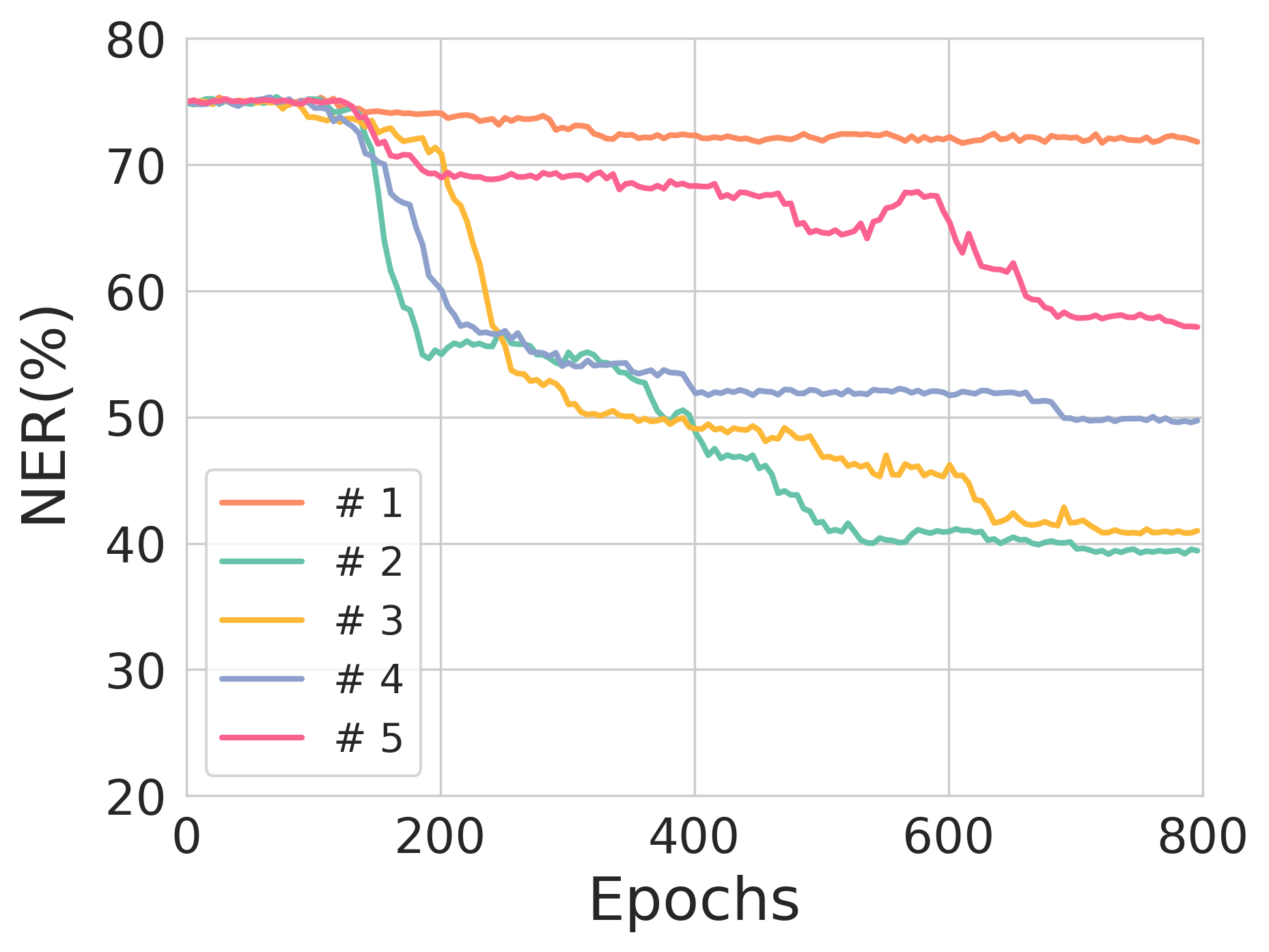}
    }%
    \caption{The entropy constraint $\mathcal{L}_\mathrm{EN}$, 
    the reconstruction loss $\mathcal{L}_\mathrm{CE}$, 
    and the validation NER for various choices of $\lambda\in\{0,0.001,0.01,0.1,1\}$. 
    Each curve in the subfigures represents one of the $5$ runs conducted in the experiment 
    and is plotted against the training epochs.
    }
    \label{fig:lambda}
    \vskip -0.in
\end{figure*}

However, the drawback is also non-negligible. 
If the entropy constraint is overapplied and dominates the model training before the autoencoder 
is effectively tuned to its intended function, 
the entropy constraint may tend to produce a gradient opposite to the loss-propagated gradient, 
leading the model towards a local minimum convergence point. 
As a thought experiment, consider a letter $\bm{c}_i=(0,0,0.1,0.9)$ from a codeword $\bm{c}$ 
where the letter is not aligned with the IDS-correcting aim. 
The optimization on the reconstruction loss $\mathcal{L}_\mathrm{CE}$ of \eqref{eqn:ce-loss} 
may propagate a gradient decreasing the fourth dimension $c_{i4}=0.9$. 
However, the entropy constraint on $\bm{c}_i$ will faithfully produce an opposite gradient, 
attempting to increase $c_{i4}=0.9$ to achieve low entropy, 
potentially hindering the optimization process. 

% Fortunately, we discovered that this paradox can be alleviated by
Therefore careful tuning of the entropy constraint weight is needed. 
Experiments were conducted with different choices of weight $\lambda$ in $\{0,0.001,0.01,0.1,1\}$, 
and the results are presented in \cref{fig:lambda}. 
Each column in this figure corresponds to a specific $\lambda$ value, 
and the rows depict the entropy $\mathcal{L}_{\mathrm{EN}}$, 
the reconstruction $\mathcal{L}_{\mathrm{CE}}$ loss between $\hat{\bm{s}}$ and $\bm{s}$, 
and the validation NER, from top to bottom. 

The first row suggests that the entropy $\mathcal{L}_\mathrm{EN}$ is controlled by enlarging the constraint weight, 
while the second row shows that the reconstruction loss $\mathcal{L}_\mathrm{CE}$ diverges with an overapplied entropy constraint. 
The third row of NER indicates that the performance is improved by introducing the entropy constraint and worsened 
by further enlarging the constraint weight. 

Column-wise, when using $\lambda=0$ and $\lambda=0.001$, 
the reconstruction $\mathcal{L}_\mathrm{CE}$ converges well, 
indicating that the autoencoder is well-trained. 
However, the entropy on the codeword remains relatively high during the training phase, 
suggesting that the codewords are diverse from one-hot style. 
The curves of validation NERs also support this speculation. 
Although the autoencoder is well-trained, the quantization of the codeword with high entropy 
alters the input domain for the decoder, and fails the testing phase. 
Regarding the columns corresponding to $\lambda=0.1$ and $\lambda=1$ in \cref{fig:lambda}, 
we observe that the entropy drops to a low level fast in the first few epochs, 
and the reconstruction loss $\mathcal{L}_\mathrm{CE}$ does not decline to an appropriate interval during the training phase. 
This verifies our conjecture that an overapplied entropy constraint will lead the model to a local minimum convergence point. 
Overall, the $\lambda=0.01$ is a proper choice for the constraint weight in our experiments. 
The model converges well, 
the entropy maintains at a low level, and the decoder keeps its performance with the quantized codewords. 

Based on the above analysis, while applying a plain entropy constraint may serve a similar purpose as disturbance-based discretization, 
it is less robust and requires careful tuning.

\section{Experiments on the Differentiable IDS Channel}\label{app:experimentDIDS}
\subsection{Gradients to the differentiable IDS channel}\label{app:diffids}
\begin{figure*}[htb!]
   \vskip 0.in
   \centering
       \includegraphics[width=0.20\linewidth,trim={0 0 0 10},clip]{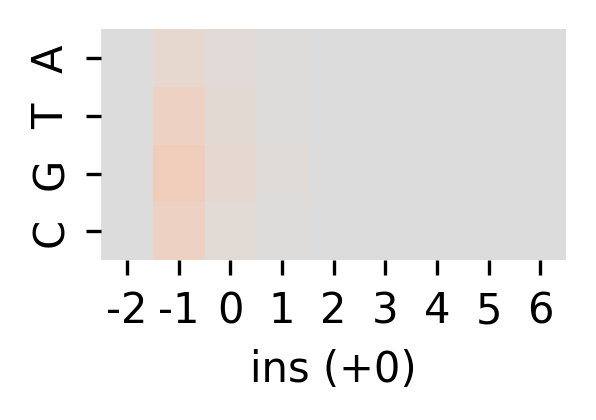}
       \includegraphics[width=0.20\linewidth,trim={0 0 0 10},clip]{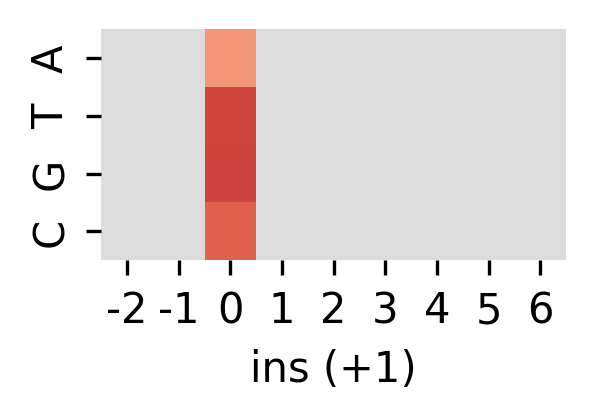}
       \includegraphics[width=0.20\linewidth,trim={0 0 0 10},clip]{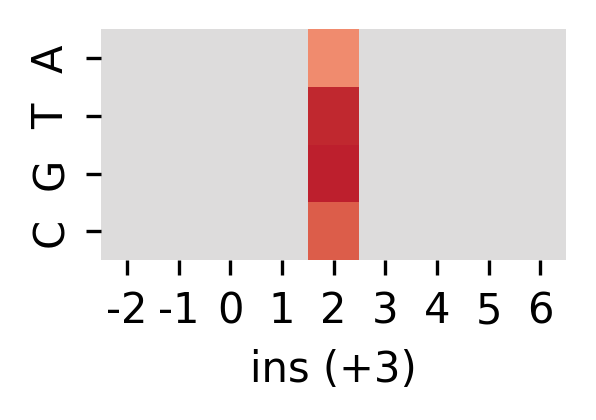}
       \includegraphics[width=0.20\linewidth,trim={0 0 0 10},clip]{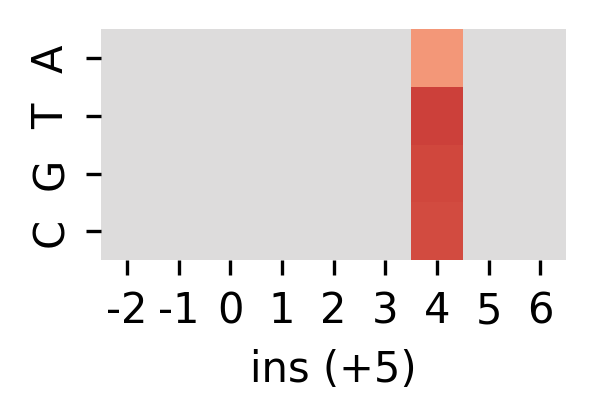}\\
       \includegraphics[width=0.20\linewidth,trim={0 0 0 10},clip]{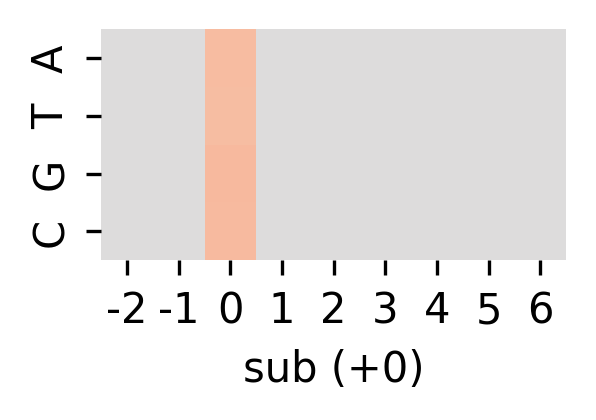}
       \includegraphics[width=0.20\linewidth,trim={0 0 0 10},clip]{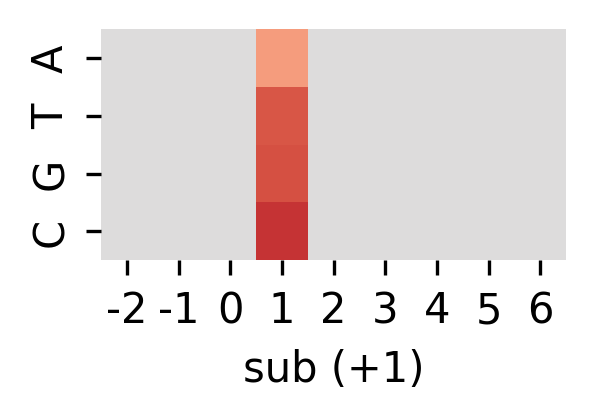}
       \includegraphics[width=0.20\linewidth,trim={0 0 0 10},clip]{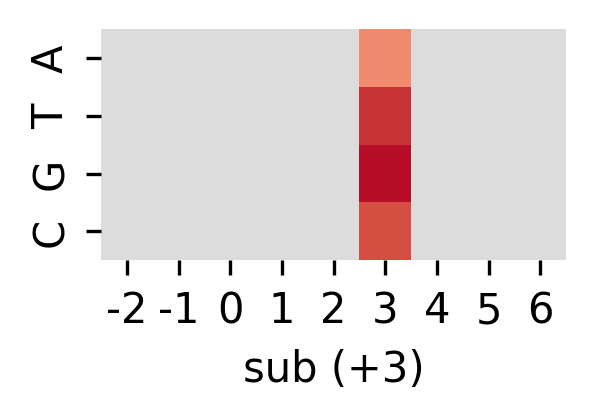}
       \includegraphics[width=0.20\linewidth,trim={0 0 0 10},clip]{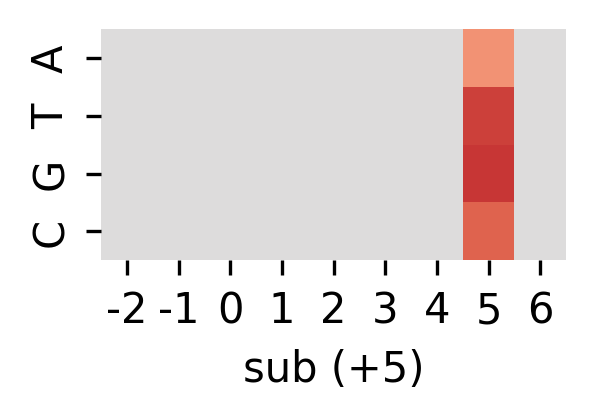}\\
       \includegraphics[width=0.20\linewidth,trim={0 0 0 10},clip]{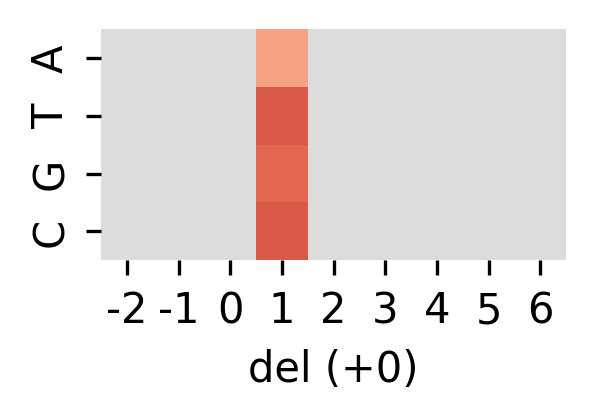}
       \includegraphics[width=0.20\linewidth,trim={0 0 0 10},clip]{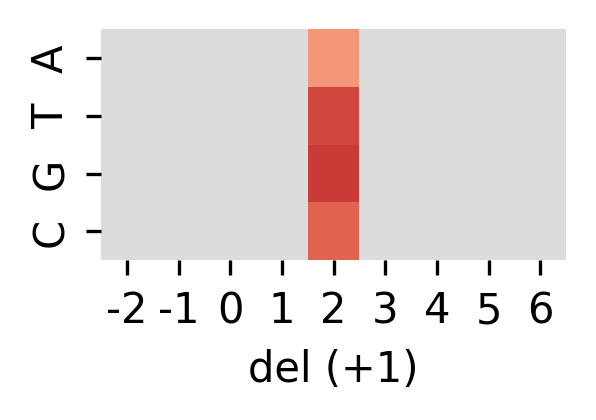}
       \includegraphics[width=0.20\linewidth,trim={0 0 0 10},clip]{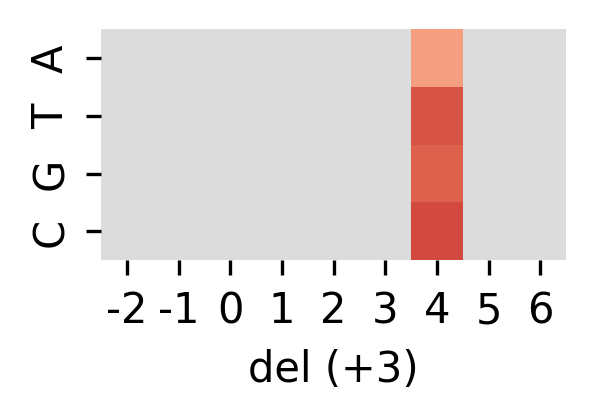}
       \includegraphics[width=0.20\linewidth,trim={0 0 0 10},clip]{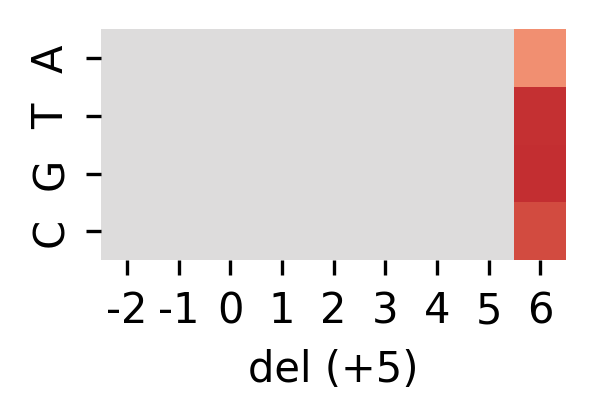}
   \caption{The averaged absolute gradients with respect to the input $\bm{c}$ over 100 runs. 
   The corresponding IDS operations were performed at an aligned $\mathrm{index}=0$ by the simulated differentiable IDS channel, 
   the gradients were back-propagated from position $+k$ of the channel output $\hat{\bm{c}}$. 
   It is suggested that the gradients identify their corresponding position in the input: $+k-1$ for insertion, 
   $+k$ for substitution, and $+k+1$ for deletion. 
   }
   \label{fig:grad}
   \vskip -0.in
\end{figure*}

To investigate whether the simulated IDS channel back-propagates the gradient reasonably, 
the channel output $\hat{\bm{c}} = \mathrm{DIDS}(\bm{c})$ is modified by altering one base to produce $\hat{\bm{c}}'$. 
The absolute values of the gradients 
of $\mathcal{L}(\hat{\bm{c}},\hat{\bm{c}}')$ with respect to the input $\bm{c}$ after back-propagation are presented in 
\cref{fig:grad}. 
For instance, subfigure $\mathrm{del (+3)}$ indicates that the IDS channel modifies $\bm{c}$ to $\hat{\bm{c}}$ 
by performing a deletion at index $0$. 
The output $\hat{\bm{c}}$ is then manually modified by applying a substitution at position 
$+3$. 
The gradients of $\mathcal{L}(\hat{\bm{c}},\hat{\bm{c}}')$ 
with respect to $\bm{c}$ are plotted over the window $[-2,+6]$. 
% The subfigures are aligned to the $x$-ticks $0$. 
% Each subfigure displays the average absolute gradients calculated from 100 runs under the same setting. 

It is suggested in \cref{fig:grad} that the proposed differentiable IDS channel back-propagates gradients reasonably. 
The gradients shift by one base to the left (resp. right) 
when the IDS channel performs an insertion (resp. deletion) on $\bm{c}$. 
When the IDS channel operates $\bm{c}$ with a substitution, the gradients stay at the same index. 
This behavior demonstrates that the channel is able to trace the gradients through the IDS operations. 
Specifically, in the case $\mathrm{ins (+0)}$, the channel-inserted base in $\hat{\bm{c}}$ at $\mathrm{idx}$ 
is manually modified. As a result, no specific 
base in $\bm{c}$ has a connection to the manually modified base, 
leading to a diminished gradient in this scenario. 

% Additionally, we have also conducted experiments on gradients with respect to codewords and empty profiles, 
% presented in \cref{app:idsgrad}. 

\subsection{More on the gradients to differentiable IDS channel}\label{app:idsgrad}
% In \cref{app:diffids}
Above, we illustrated that the differentiable IDS channel can effectively 
trace gradients through the IDS operations.
In this section, we focus on evaluating the channel's capability to recover the error profile through gradient-based optimization.

Given a codeword $\bm{c}$, an empty profile $\bm{p}_{0}$ which defines the identity transformation of the IDS channel such that 
$\hat{\bm{c}} = \bm{c} = \mathrm{DIDS}(\bm{c},\bm{p}_0)$, 
and a modified codeword $\hat{\bm{c}}'$ which is produced by manually modifying $\bm{c}$ through 
an insertion, deletion, or substitution at position $\mathrm{idx}$, 
the gradients of $\mathcal{L}(\hat{\bm{c}}, \hat{\bm{c}}')$ are computed 
with respect to both the input codeword $\bm{c}$ and the empty profile $\bm{p}_{0}$. 
The average gradients, calculated over 100 runs, are plotted in \cref{appfig:idsgrad} 
with position $\mathrm{idx}$ aligned to $0$. 

\begin{figure*}[htb!]
   \vskip 0.in
   \centering
   \subcaptionbox{insertion}[0.3\linewidth]
   {
       \includegraphics[width=1\linewidth]{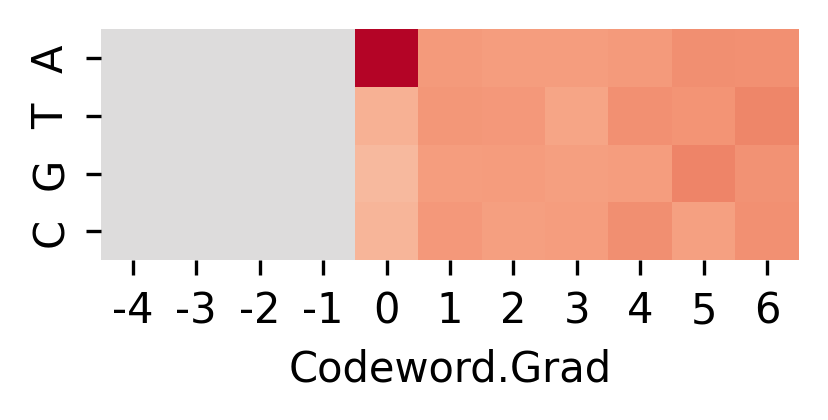}
       \includegraphics[width=1\linewidth]{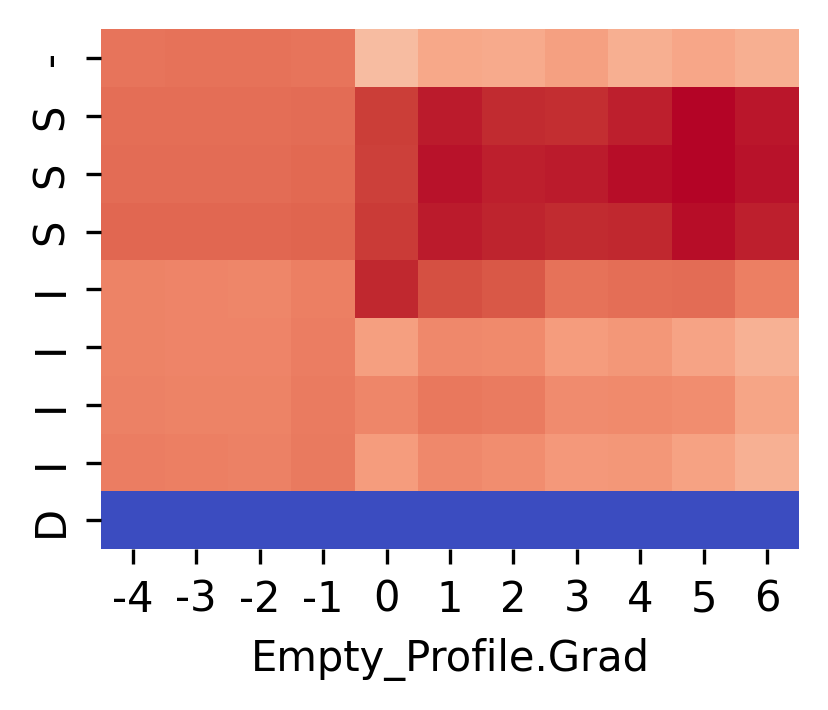}
       \includegraphics[width=1\linewidth]{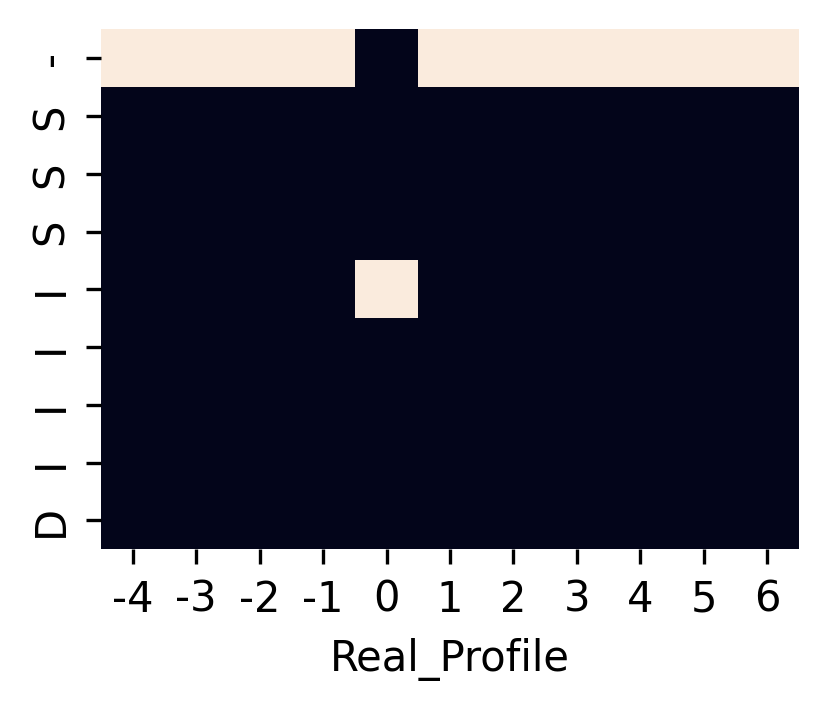}
   }%
   \subcaptionbox{deletion}[0.3\linewidth]
   {
      \includegraphics[width=1\linewidth]{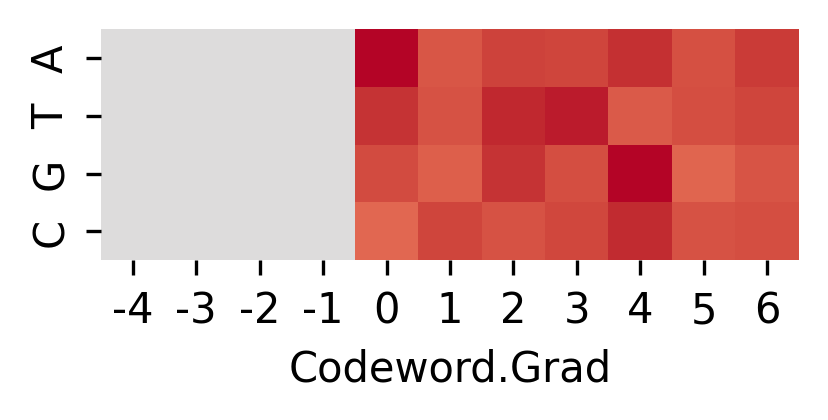}
      \includegraphics[width=1\linewidth]{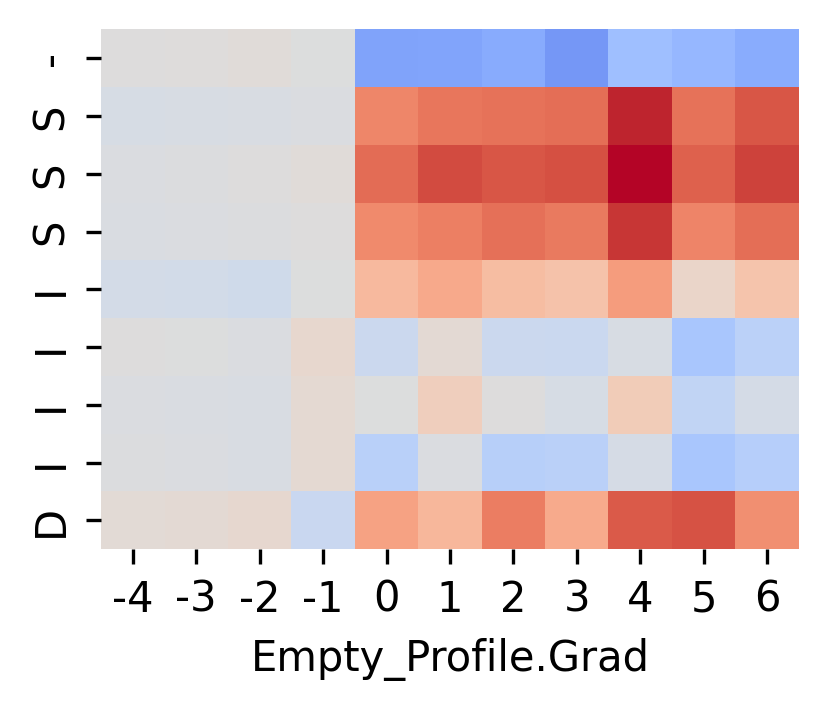}
      \includegraphics[width=1\linewidth]{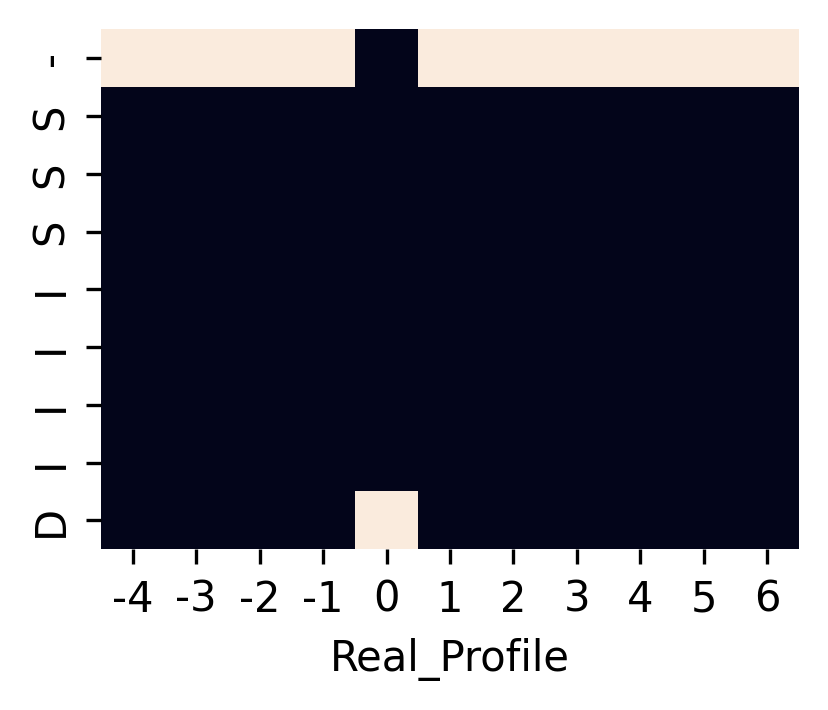}
   }%
   \subcaptionbox{substitution}[0.3\linewidth]
   {
      \includegraphics[width=1\linewidth]{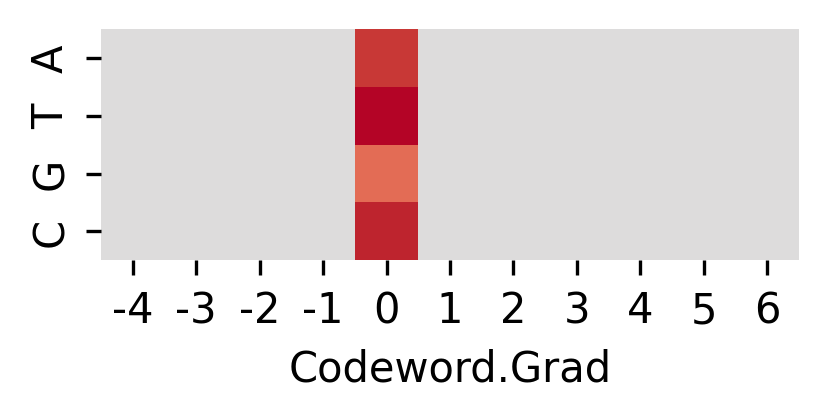}
      \includegraphics[width=1\linewidth]{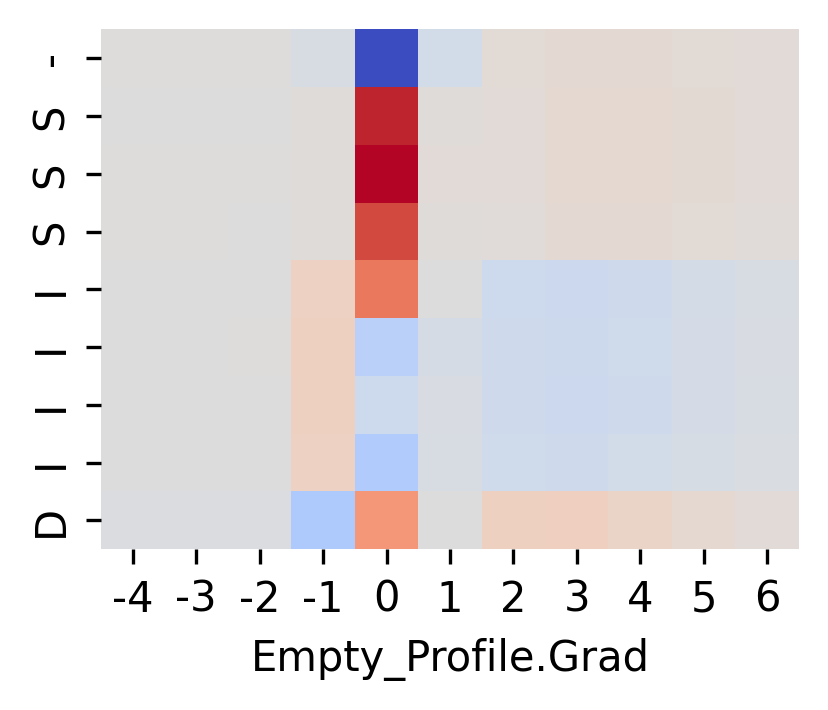}
      \includegraphics[width=1\linewidth]{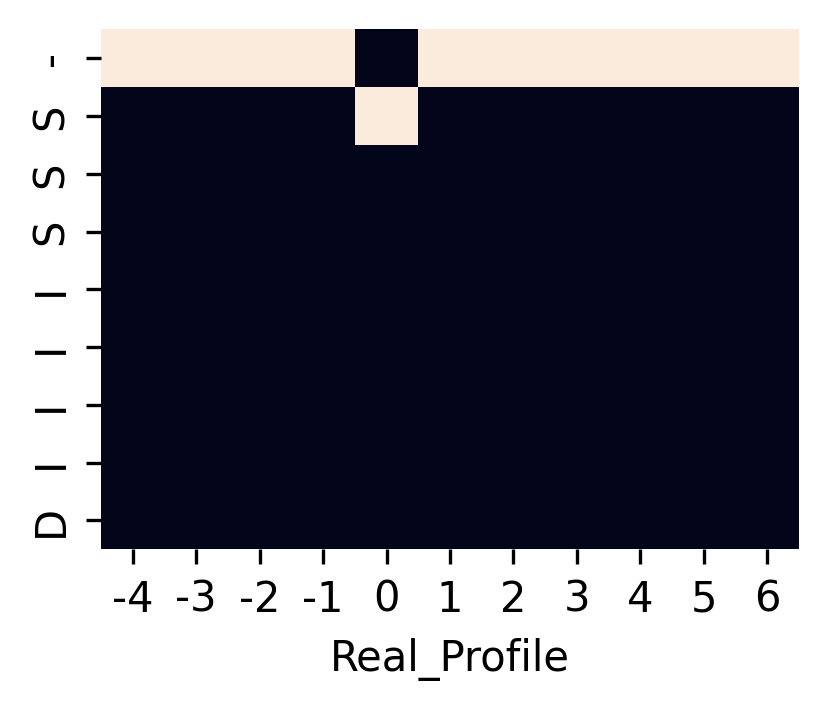}
   }%
   \caption{The gradient distribution with respect to the input codeword and the empty profile, 
   when the output codeword is manually modified. 
   The figures display the averaged gradients over 100 runs, visualizing how the gradients were back-propagated in different cases of insertions, deletions, and substitutions in the output codeword.
   }
   \label{appfig:idsgrad}
   \vskip -0.in
\end{figure*}

In~\cref{appfig:idsgrad}, it is suggested that, 
when performing an insertion or deletion, the gradients with respect to the codeword are distributed after the 
error position $\mathrm{idx}$. 
This aligns with the fact that synchronization errors (insertions or deletions) can be interpreted as 
successive substitutions starting from the error position, 
especially when the actual error profile is unknown. 
When performing a substitution, the gradients naturally concentrate at the error position $\mathrm{idx}$. 

Regarding the empty profile, $\bm{p}_0 = \bm{0}$, 
the gradients also exhibit meaningful patterns. 
For an insertion, the substitution area after $\mathrm{idx}$ is lighted by the gradients, 
supporting the view that an insertion can be seen as a sequence of substitutions if error constraints are absent. 
Additionally, the insertion area of the profile is also lighted, which makes sense since an insertion may also 
be interpreted as a series of substitutions followed by an ending insertion. 
For deletion errors, similar patterns are observed: 
the gradients are distributed in the areas of substitutions and deletions after the error position $\mathrm{idx}$, since 
the deletion can also be viewed as a series of substitutions, or as several substitutions and an ending deletion. 
For substitution errors, the gradients again concentrate at the error position $\mathrm{idx}$, 
as substitutions do not cause sequence mismatches. 

Utilizing energy constraints on the profile may be helpful for specific profile applications. 
In this work, only the gradients with respect to the codeword participate in the training phase, 
the existing version of the simulated differentiable IDS channel is assumed to be adequate.

\section{Ablation Study on the Auxiliary Reconstruction Loss}\label{app:ablationaux}
\subsection{Effects of the auxiliary reconstruction loss}\label{subsec:auxexperiment}
\begin{figure*}[htb!]
    \vskip 0.in
    \centering
    \subcaptionbox{$\mu=0$}[0.24\linewidth]
    {
        \includegraphics[width=1\linewidth]{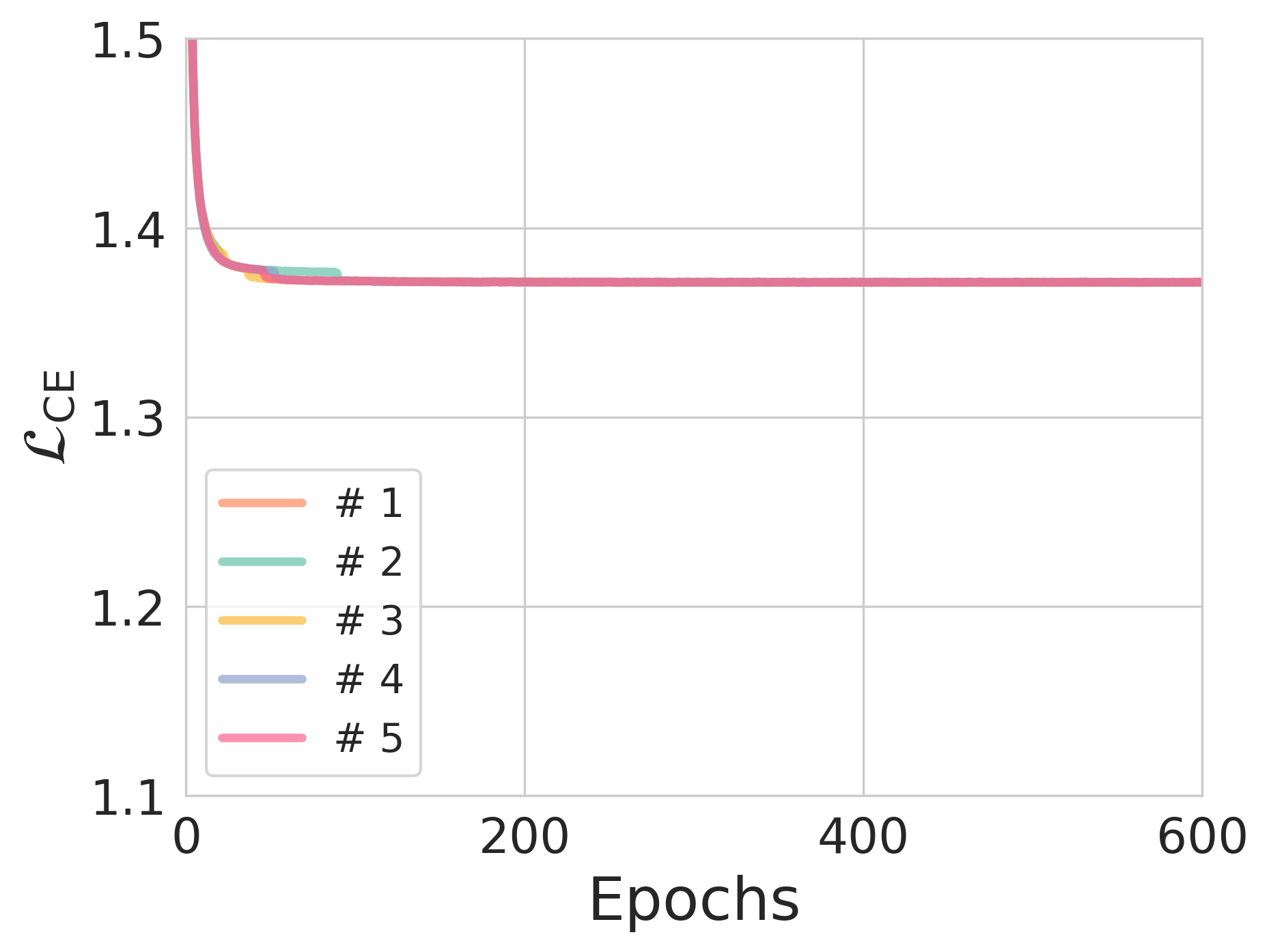}
        \includegraphics[width=1\linewidth]{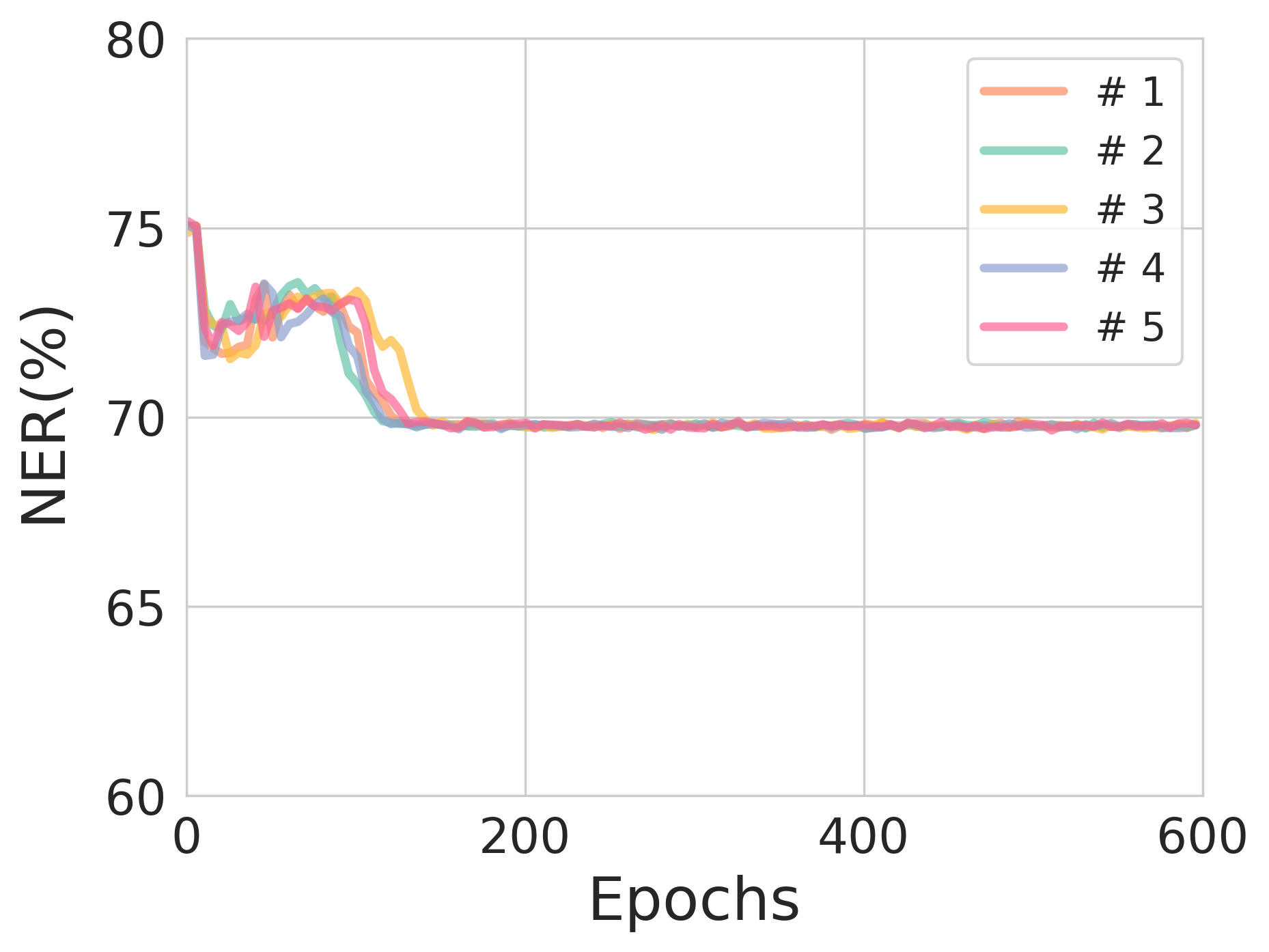}
    }%
    \subcaptionbox{$\mu=0.5$}[0.24\linewidth]
    {
        \includegraphics[width=1\linewidth]{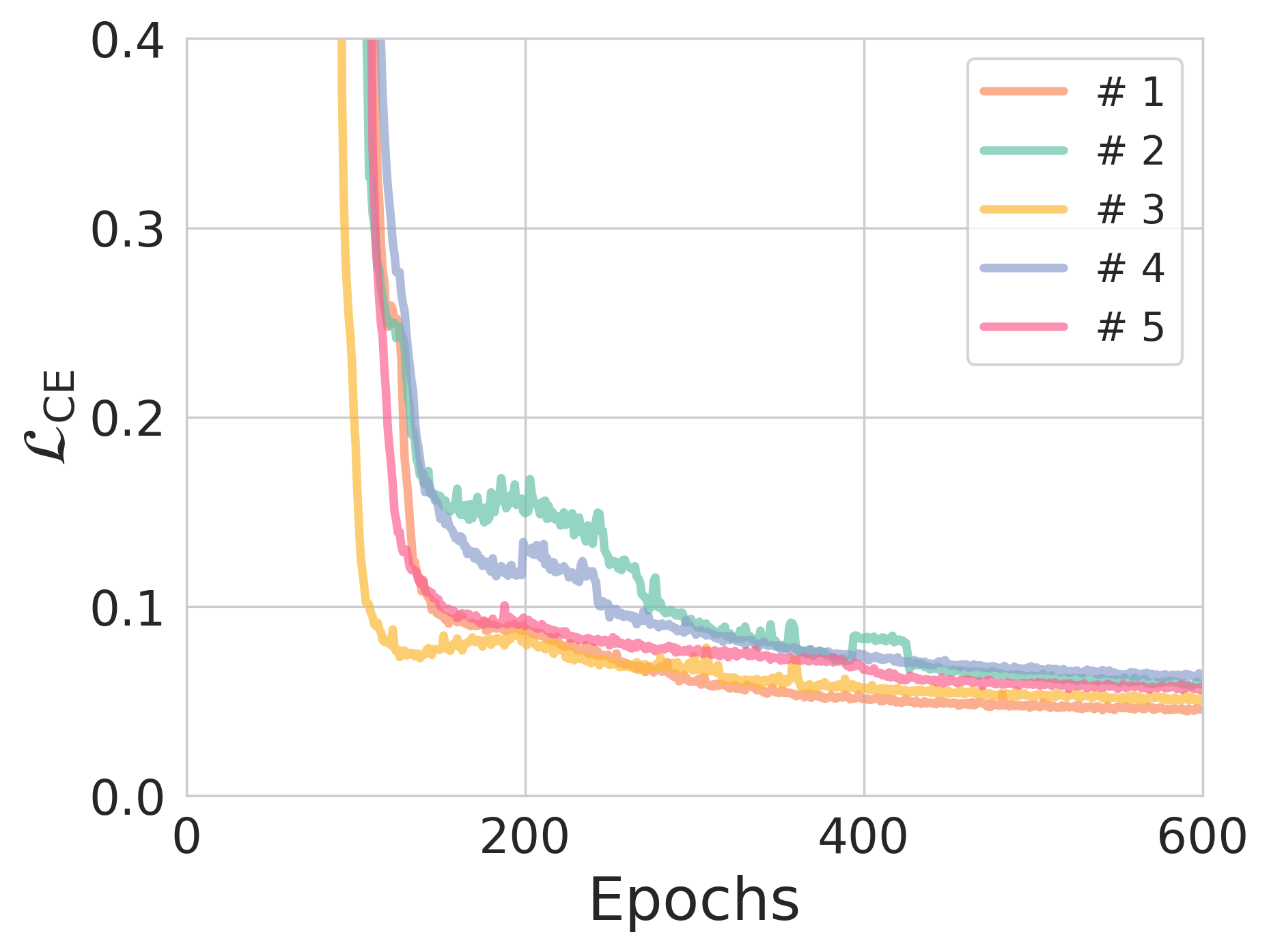}
        \includegraphics[width=1\linewidth]{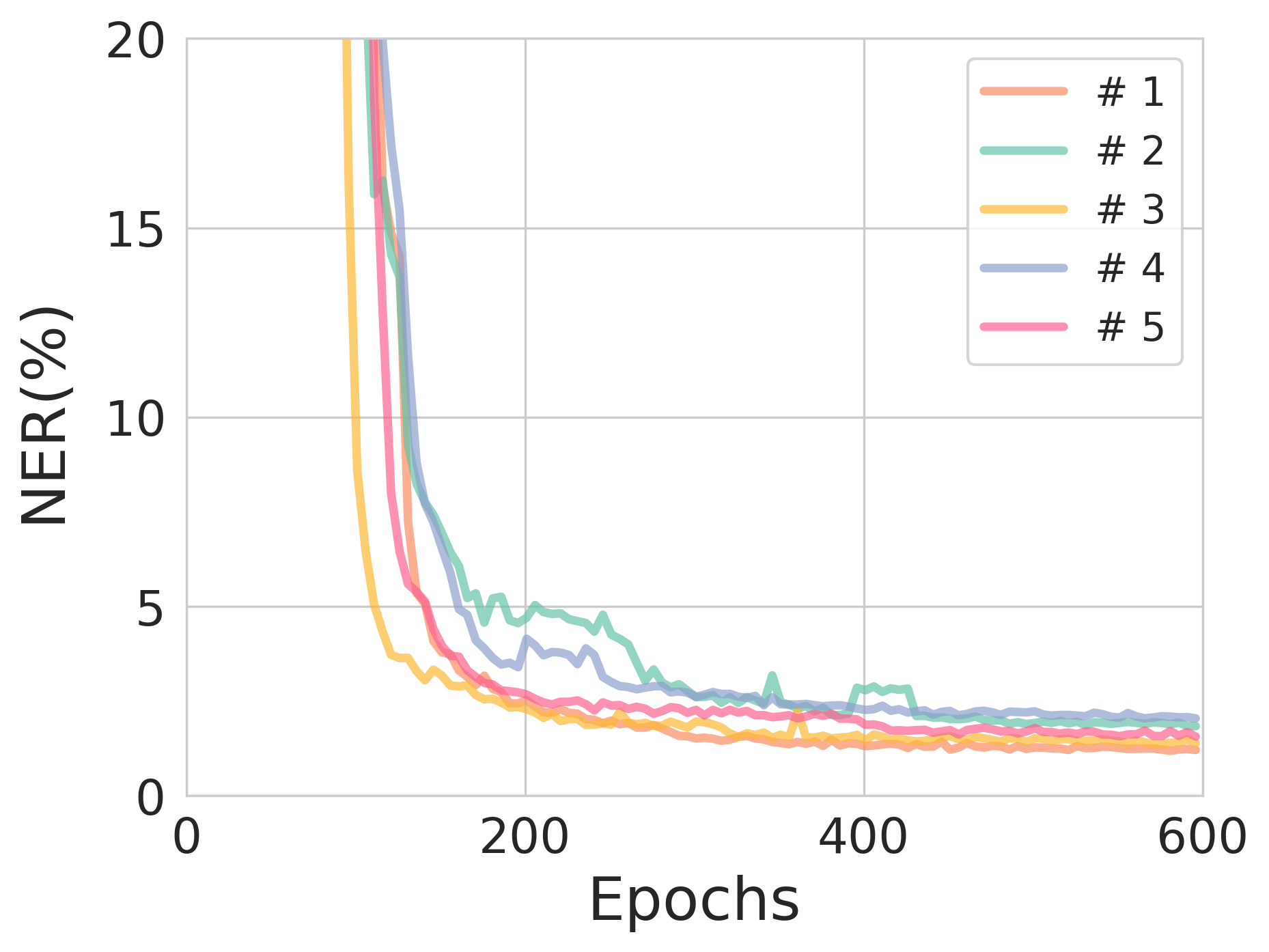}
    }%
    \subcaptionbox{$\mu=1$}[0.24\linewidth]
    {
        \includegraphics[width=1\linewidth]{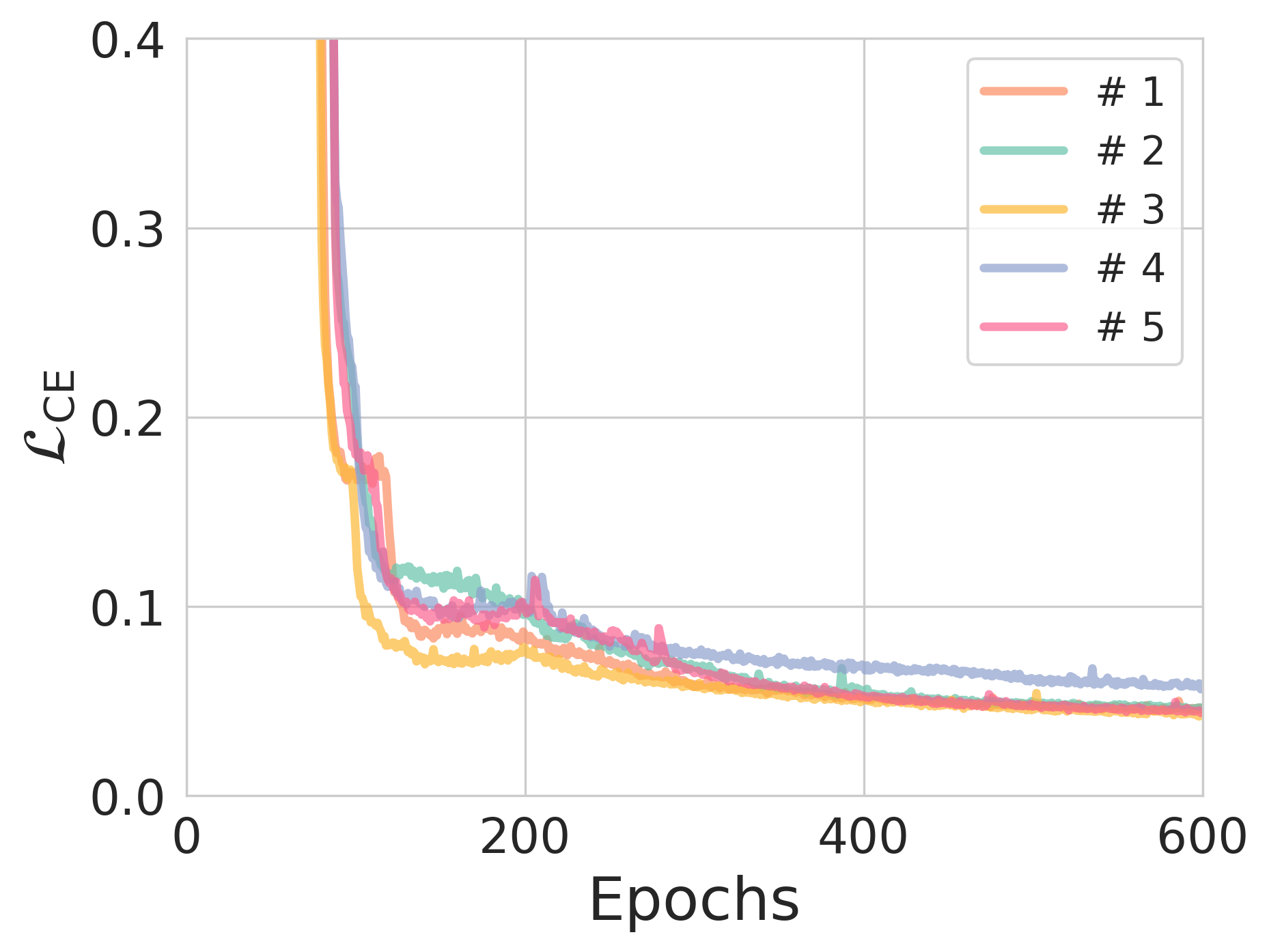}
        \includegraphics[width=1\linewidth]{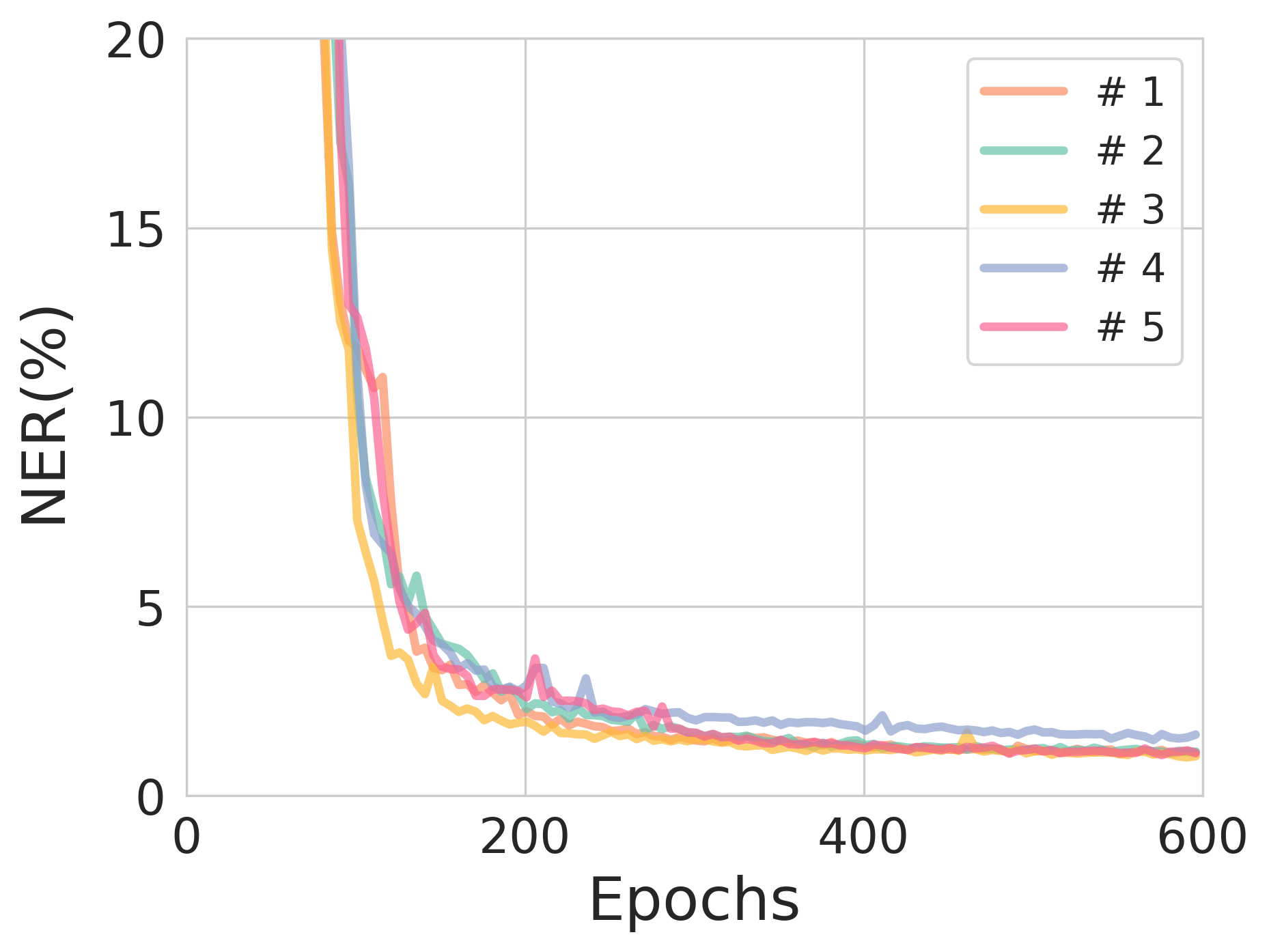}
    }%
    \subcaptionbox{$\mu=1.5$}[0.24\linewidth]
    {
        \includegraphics[width=1\linewidth]{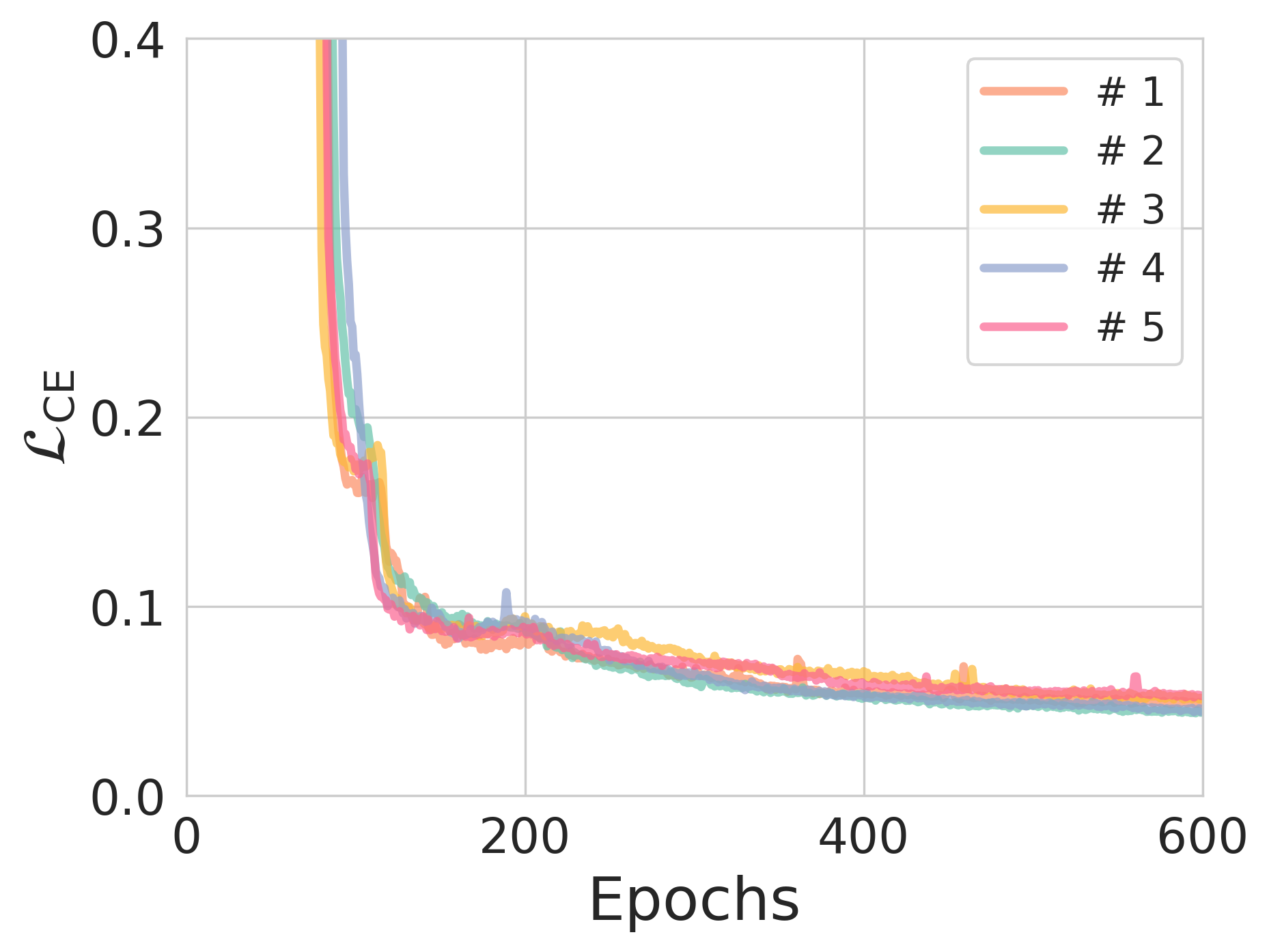}
        \includegraphics[width=1\linewidth]{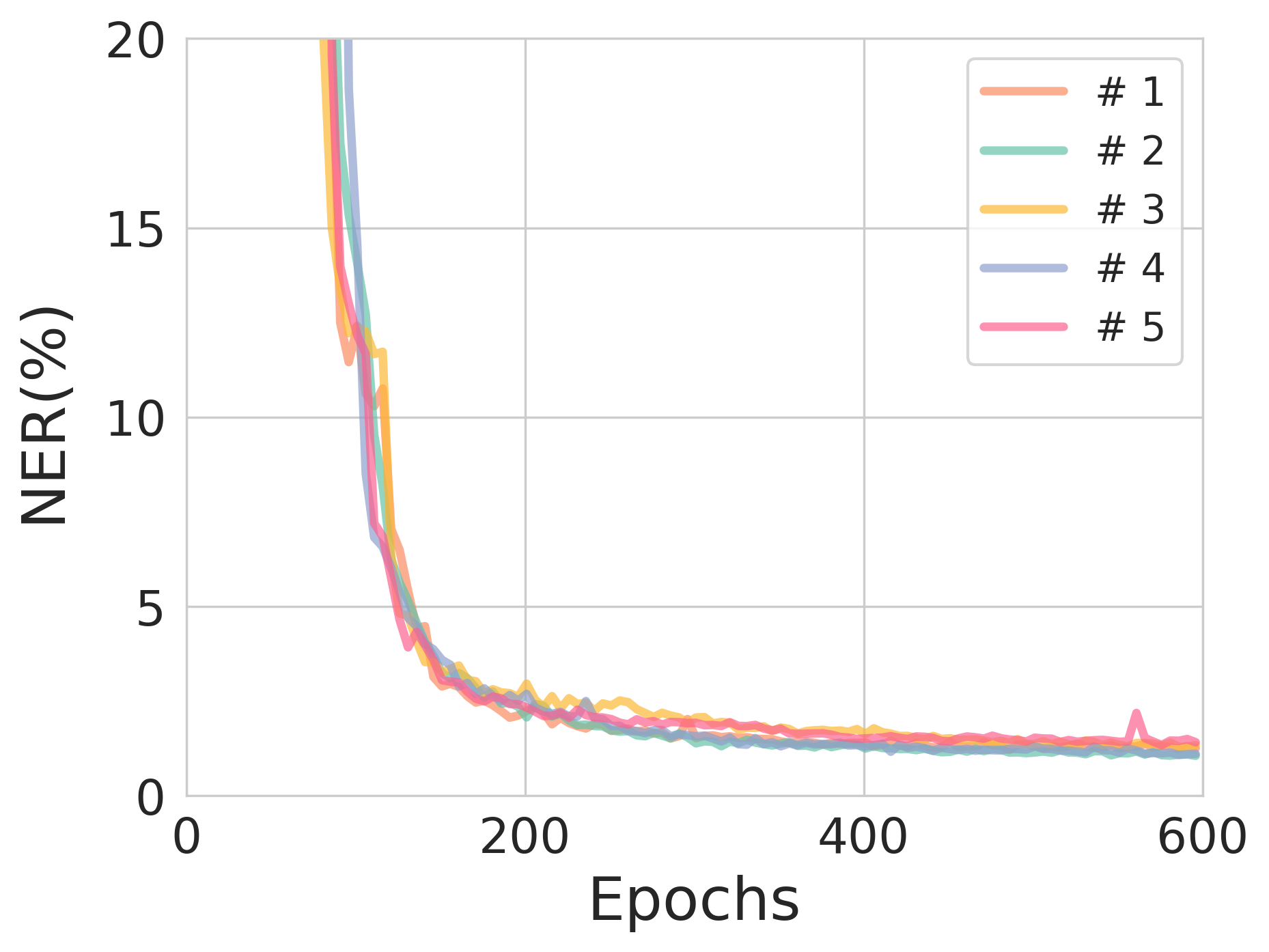}
    }%
    \caption{The reconstruction loss $\mathcal{L}_\mathrm{CE}$ between the source and recovered sequences, 
    and the validation NER for various choices of $\mu\in\{0,0.5,1,1.5\}$. 
    Each curve in the subfigures represents one of the $5$ runs conducted in the experiment 
    and is plotted against the training epochs.
    }
    \label{fig:mu}
    \vskip -0.in
\end{figure*}

Experiments with different choices of the hyperparameter $\mu$ were conducted, which are 
$\mu=0$ indicating the absence of the auxiliary reconstruction loss, 
and $\mu\in\{0.5,1,1.5\}$ for different weights for the auxiliary loss. 
The validation NER and the reconstruction loss between source and recovered sequences are 
plotted against the training epochs. 
% Additionally, we have also explored on different types of auxiliary reconstruction sequences, results presented in \cref{app:aux}. 

The first column of \cref{fig:mu} indicates that without the auxiliary loss, 
all $5$ runs of the training fail, producing random output. 
By comparing the first column with the other three, the effectiveness of introducing the auxiliary loss can be inferred. 
In the subfigures corresponding to $\mu\in\{0.5,1,1.5\}$, all the models converge well, 
and the NERs also exhibit a similar convergence. 
This suggests the application of the auxiliary loss is essential, 
but the weight of this loss has minimum influence on the final performance. 

\subsection{Auxiliary loss on patterns beyond sequence reconstruction}\label{app:aux}
In \cref{subsec:auxexperiment}, the necessity of introducing a auxiliary reconstruction 
task to the encoder is verified. 
After these experiments, a natural question arises: 
How about imparting the encoder with higher initial logical ability 
through a more complicated task rather than replication. 
Motivated by this, we adopted commonly used operations from existing IDS-correcting codes 
and attempted to recover the sequence from these operations using the encoder. 
In practice, we employed the forward difference $\mathrm{Diff}(\bm{s})$, where
\begin{equation}
    \mathrm{Diff}(\bm{s})_i = s_i-s_{i+1} \mod{4}, 
\end{equation}
the position information-encoded sequence $\mathrm{Pos}(\bm{s})$, where 
\begin{equation}
    \mathrm{Pos}(\bm{s})_i = s_i+i \mod{4}, 
\end{equation}
and their combinations as the reconstructed sequences. 

The evaluation NERs against training epochs are plotted in \cref{fig:beyond} under different 
combinations of the identity mapping $\mathrm{I}$, $\mathrm{Diff}$, and $\mathrm{Pos}$. 
It is clear that the reconstruction of the identity mapping $\mathrm{I}$ outperforms 
$\mathrm{Diff}$ and $\mathrm{Pos}$. 
Introducing the identity mapping $\mathrm{I}$ to $\mathrm{Diff}$ and $\mathrm{Pos}$ helps improving the 
convergence of the model, but final results have illustrated that they are still worse than simple applying 
the identify mapping $\mathrm{I}$ as the auxiliary task. 
These variations may be attributed to the capabilities of the Transformers in our setting or 
the disordered implicit timings during training. 

\begin{figure*}[htb!]
   \vskip 0.in
   \centering
   \subcaptionbox{$\mathrm{I}$}[0.30\linewidth]
   {
       \includegraphics[width=1\linewidth]{gumbel/gumbelNER.png}
   }%
   \subcaptionbox{$\mathrm{Diff}$}[0.30\linewidth]
   {
       \includegraphics[width=1\linewidth]{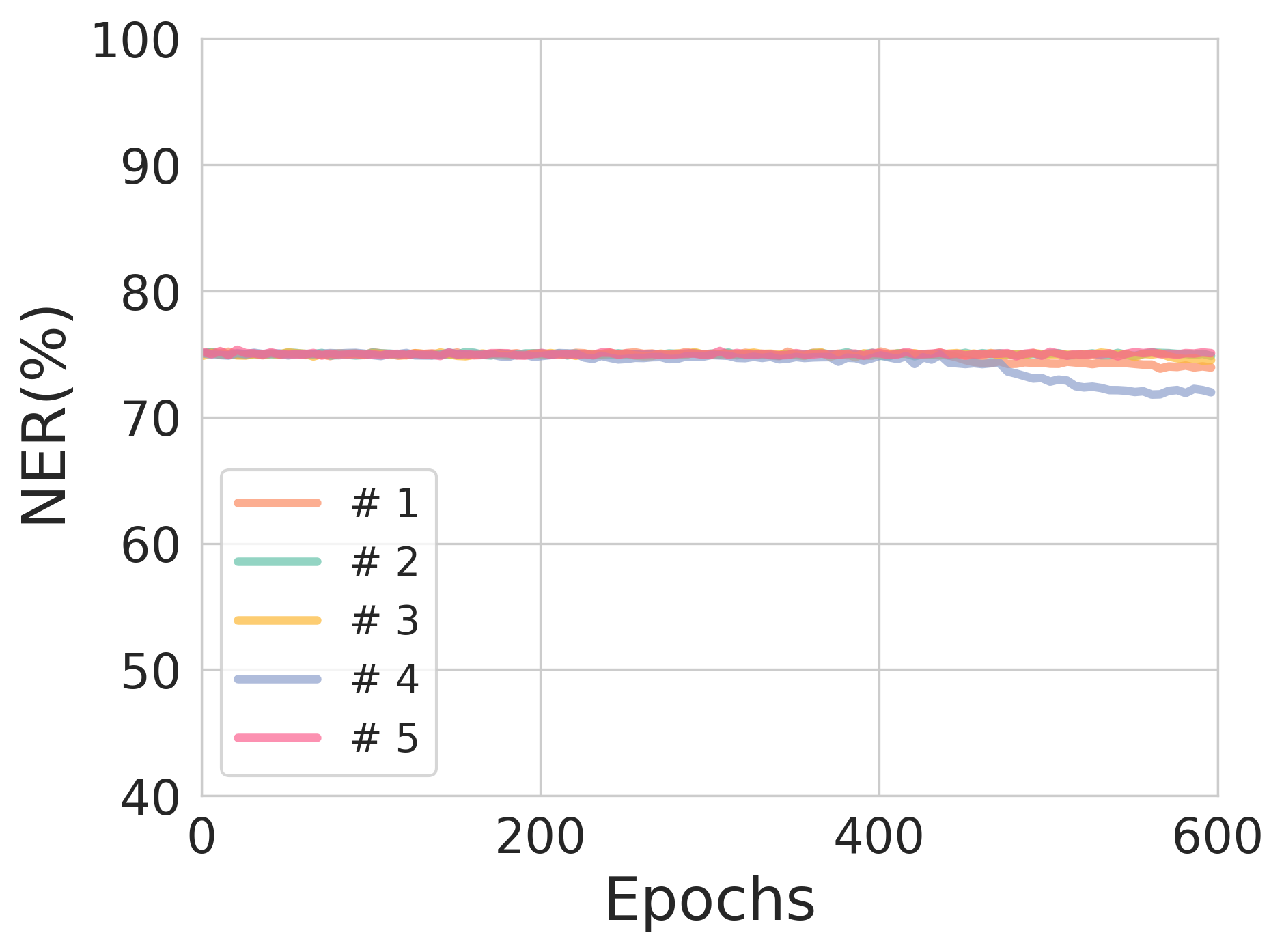}
   }%
   \subcaptionbox{$\mathrm{Pos}$}[0.30\linewidth]
   {
       \includegraphics[width=1\linewidth]{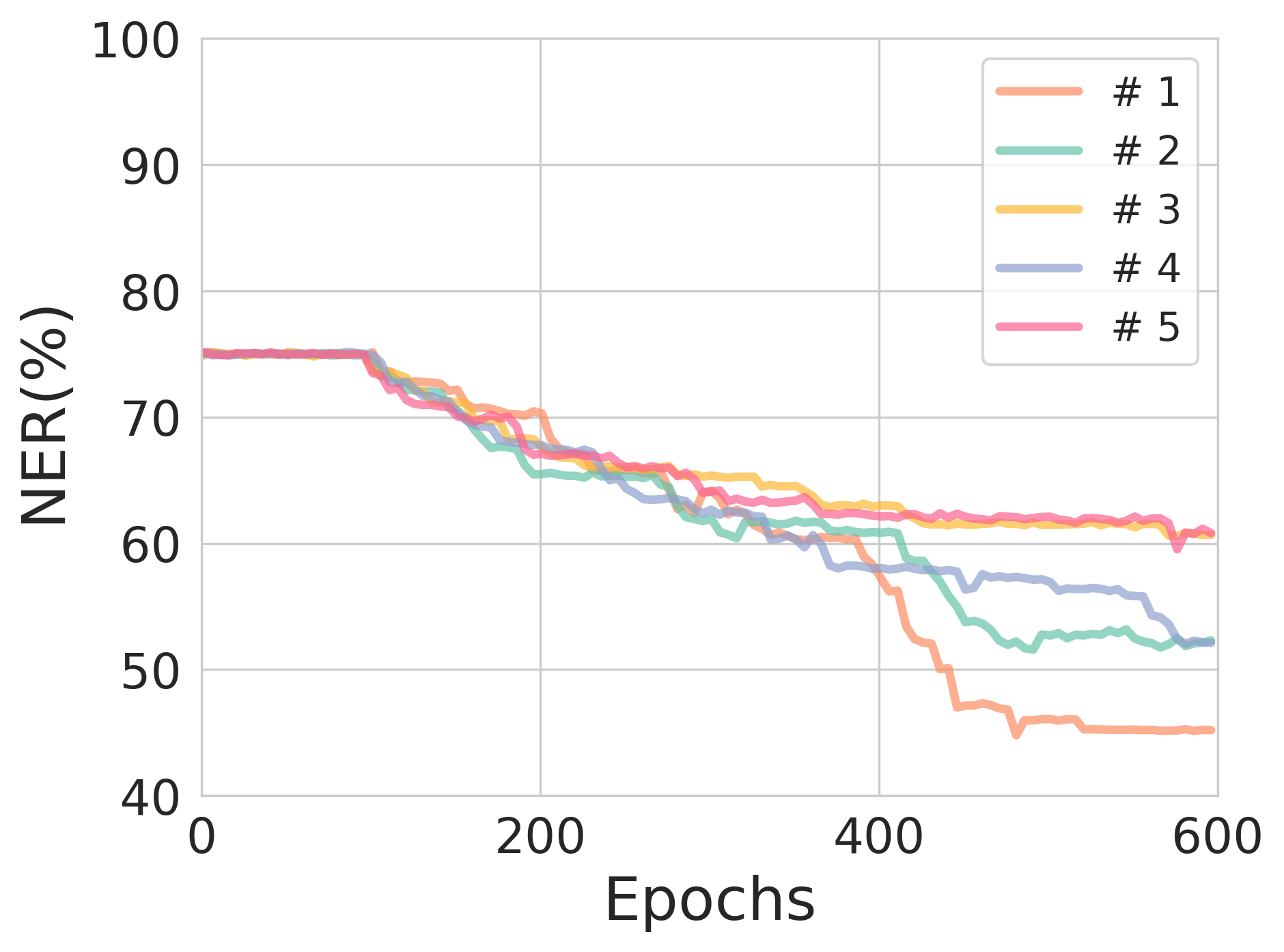}
   }%
   \\
   \subcaptionbox{$\mathrm{I}+\mathrm{Diff}$}[0.30\linewidth]
   {
       \includegraphics[width=1\linewidth]{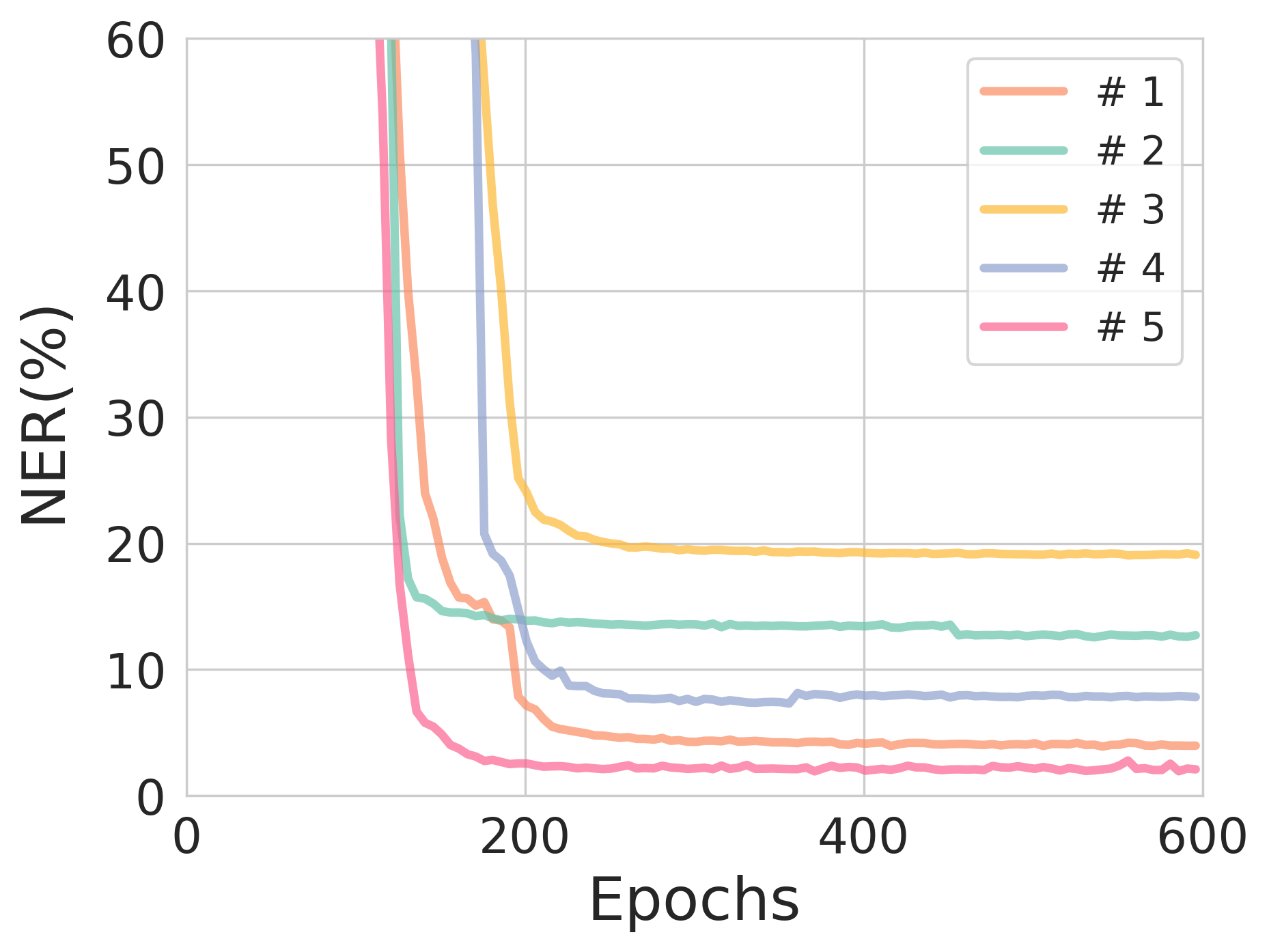}
   }%
   \subcaptionbox{$\mathrm{I}+\mathrm{Pos}$}[0.30\linewidth]
   {
       \includegraphics[width=1\linewidth]{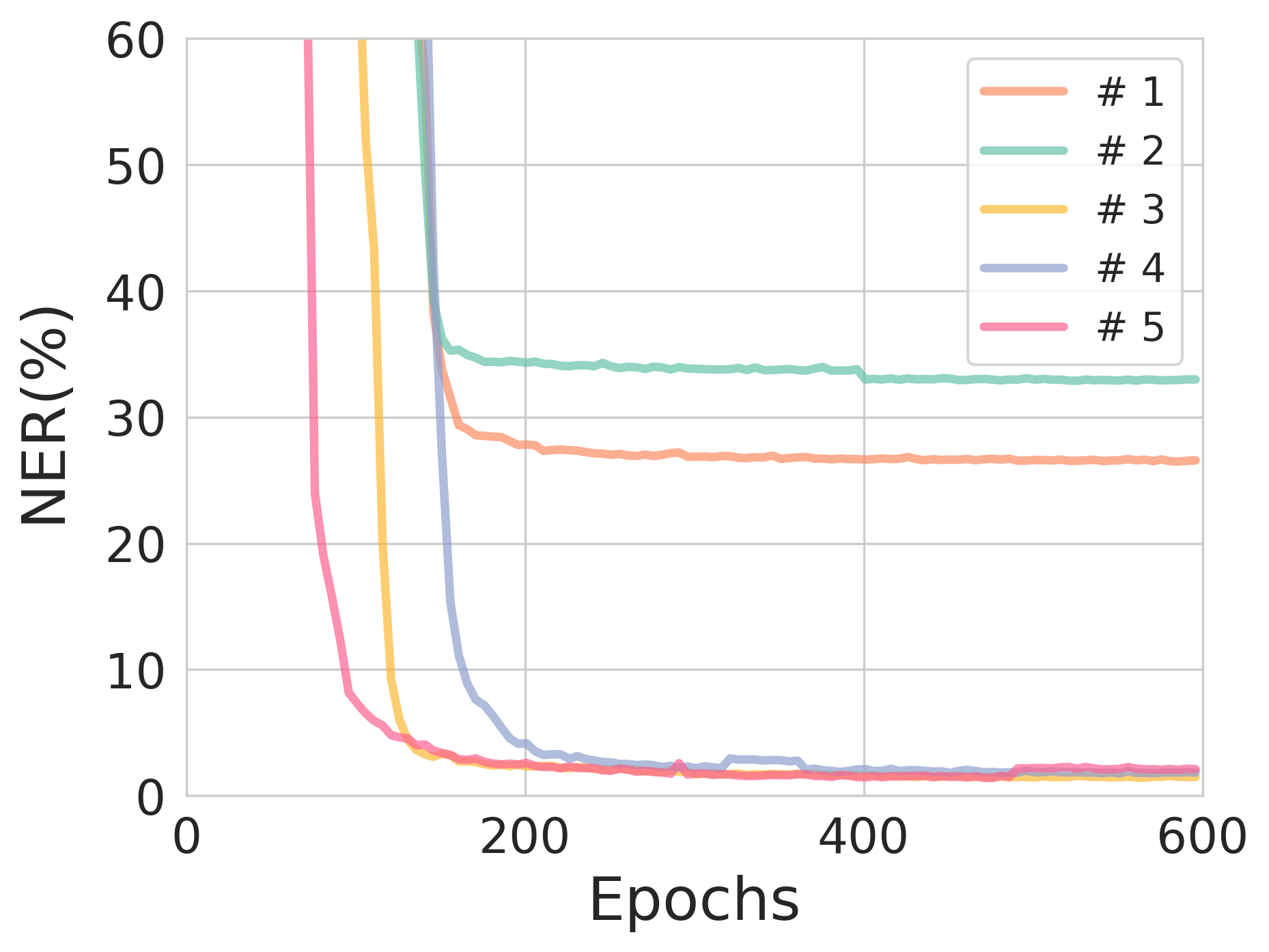}
   }%
   % \subcaptionbox{$\mathrm{I}+\mathrm{Diff}+\mathrm{Pos}$}[0.165\linewidth]
   % {
   %     \includegraphics[width=1\linewidth]{fig1/idt_diff_pos_err.png}
   % }%
   \caption{The validation NER against the training epochs with different choices of auxiliary reconstruction. 
   The reconstructed sequences are produced by combinations of the identity mapping $\mathrm{I}$, $\mathrm{Diff}$, 
   and $\mathrm{Pos}$, where $+$ denotes sequence concatenating. 
   }
   \label{fig:beyond}
   \vskip -0.in
\end{figure*}

\section{Preparation of the Datasets}\label{app:dataset}
This work focuses on DNA-based storage, which involves storing and retrieving information from 
synthesized DNA molecules \emph{in vitro}, typically composed of four bases: 
$\{\mathrm{A},\mathrm{T},\mathrm{G},\mathrm{C}\}$. 
Although genomes also use DNA molecules \emph{in vivo} 
to store the information necessary for an organism, 
DNA-based information storage in this context is largely unrelated to genetic data.

DNA-based information storage is a generic storage method, capable of storing arbitrary information.  
The source information is randomly generated with equal probabilities 
over the set $\{\mathrm{A},\mathrm{T},\mathrm{G},\mathrm{C}\}$, 
which is in a trivial bijection with binary sequences. 

\begin{table*}[htb!]
    \centering
    {\footnotesize
    \begin{tabular}{r|l}\toprule
        Default ($\mathrm{Hom}$) & total error: $1$\%, $\mathrm{Ins}:\mathrm{Del}:\mathrm{Subs} = 1:1:1$, position-insensitive                     \\
        $\mathrm{Asc}$           & total error: $1$\%, $\mathrm{Ins}:\mathrm{Del}:\mathrm{Subs} = 1:1:1$, position-sensitive: start: $0$\%- end: $2$\%  \\
        $\mathrm{Des}$           & total error: $1$\%, $\mathrm{Ins}:\mathrm{Del}:\mathrm{Subs} = 1:1:1$, position-sensitive: start: $2$\%- end: $0$\%                      \\
        $k$\% error channel      & total error: $k$\%, $\mathrm{Ins}:\mathrm{Del}:\mathrm{Subs} = 1:1:1$, position-insensitive                                        \\
        C111                     & total error: $10.36$\%, $\mathrm{Ins}:\mathrm{Del}:\mathrm{Subs} = 1:1:1$, position-insensitive                                       \\
        C253                     & total error: $10.36$\%, $\mathrm{Ins}:\mathrm{Del}:\mathrm{Subs} = 1.66:5.31:3.38$, position-insensitive                                  \\
        MemSim                   & total error: $10.36$\%. \\
        &\makecell[l]{For each codeword $\bm{c}$, it is passed through the released MemSim software \cite{hamoum2021channel} \\
        to generate the output sequence $\hat{\bm{c}}$ from the channel. 
        The error profile $\bm{p}$ is then \\
        inferred by comparing $\bm{c}$ with $\hat{\bm{c}}$.}          \\
        \bottomrule
        \end{tabular}
    }
    \caption{Details of all employed channel configurations, including the specific parameters for generating the error profiles.
    }\label{tab:channelsetting}
\end{table*}

\begin{figure}[htb!]
    \vskip 0.in
    \centering
        \includegraphics[width=0.87\linewidth,trim={0 0 0 0},clip]{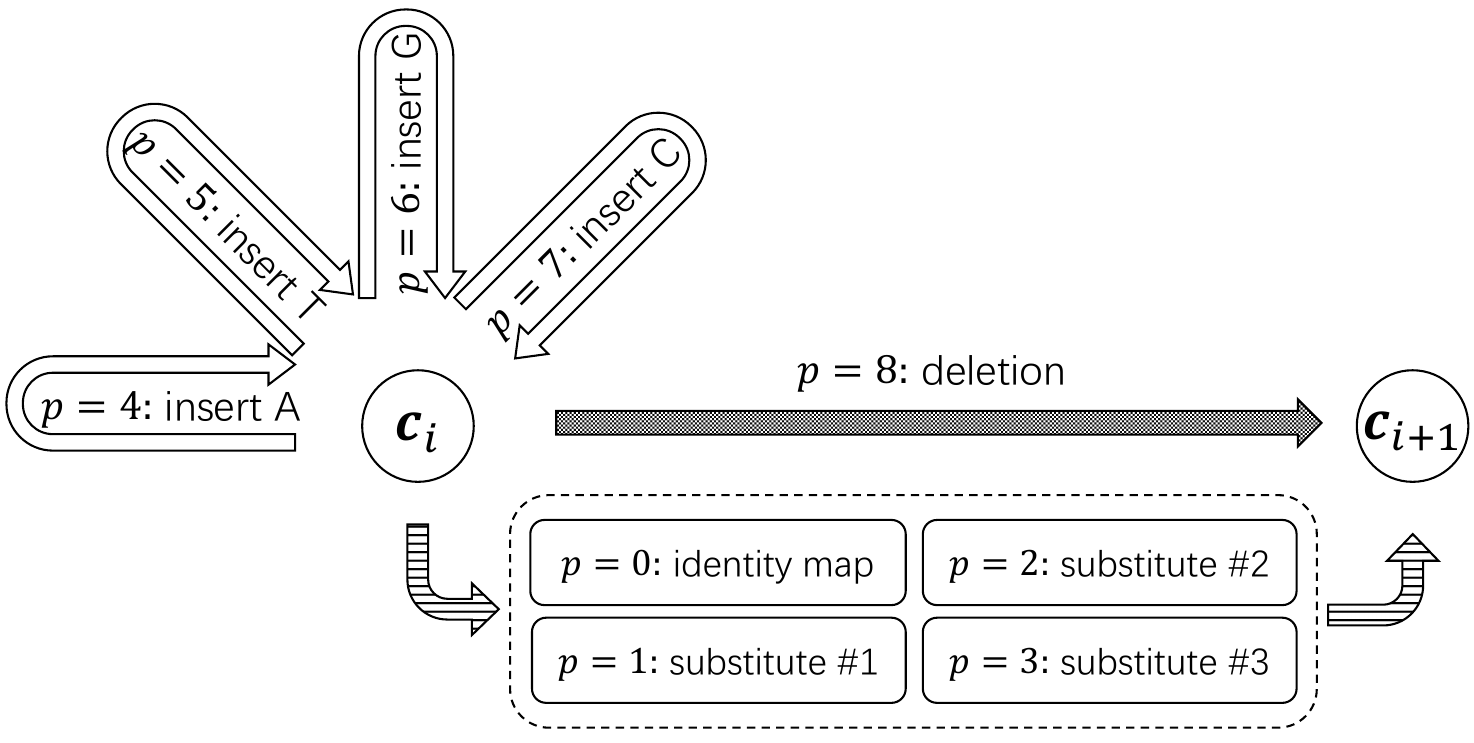}
    \caption{The transmission of a single symbol in the IDS channel involves errors of types \#1-\#8, 
    which correspond to three types of substitutions, four types of insertions, and deletions. 
    }
    \label{fig:profile}
    \vskip -0.in
\end{figure}

The error profile records the difference between the input and output of the IDS channel.  
An error profile is a sequence of symbols in the range $0$ to $8$. 
In practice, two pointers are used for the DNA sequence and the profile sequence, respectively. 
When $p=0$, both pointers advance to the next position, 
appending the DNA letter in the output. 
When $p=1,2,3$, both pointers also advance, 
but the DNA letter is substituted by one of the other three bases in the output, 
corresponding to the rolling operation described in \cref{sec:ids}.
When $p=4,5,6,7$, the profile pointer moves to the next while the DNA pointer keeps stationary, 
inserting one base in the output sequence according to $p$. 
When $p=8$, both pointers advance to the next position, but nothing is appended in the output, 
which is a deletion. 
One step of this operation is illustrated in \cref{fig:profile}. 
This is a common strategy as in \cite{davey2001reliable,yan2022segmented}. 

In the experiments, for each channel, the source sequence is generated with equal probabilities 
over the set $\{\mathrm{A},\mathrm{T},\mathrm{G},\mathrm{C}\}$, 
and the corresponding error profiles are generated as shown in \cref{tab:channelsetting}. 

\section{Transformer, Complexity, and Time Consumption}\label{app:complexity}
Transformers~\cite{vaswani2017attention}, well-known deep learning architectures, rely on the attention mechanism. 
Each head of a Transformer model processes features according to the following formula:
\begin{equation}\label{eqn:attention}
    \mathrm{Attention}(Q,K,V) = \mathrm{softmax}\left(\frac{QK^{T}}{\sqrt{d_k}}\right)V.
\end{equation}
In this work, each layer comprises $16$ attention heads with an embedding dimension $512$, 
and a total of $3+3$ attention layers are used for the sequence-to-sequence model. 
Both the encoder and decoder are implemented as such sequence-to-sequence models. 
For the differentiable IDS channel, a $1+1$ layered sequence-to-sequence model is employed. 

Since attention is calculated globally over the sequence in \cref{eqn:attention}, 
it has a complexity of $O(n^2)$. 
Without delving into the many efficient Transformer architectures, 
the time consumption 
was measured by decoding $1,280,000$ codewords using an RTX3090. 
The encoder, which shares the same structure, exhibits similar performance. 
The results are acceptable 
and are presented in \cref{tab:timecons}. 
\begin{table}[htb!]
    \vskip 0.in
    \centering
    {\small
    \begin{tabular}{rcccc}\toprule
                & $\ell_s=50$ & $\ell_s=75$ & $\ell_s=100$ & $\ell_s=125$ \\\midrule
        % DNA-LM      & 521.94 & 573.87 & 623.92  & 687.76  \\
        % HEDGES      & 521.94 & 573.87 & 623.92  & 687.76  \\
        time (s)      & 521.94 & 573.87 & 623.92  & 687.76  \\\bottomrule
    \end{tabular}
    }
    \caption{Time consumption of decoding $1,280,000$ codewords for each source length $\ell_s$ by an RTX3090. 
    }\label{tab:timecons}
    \vskip -0.in
\end{table}

\end{document}